\title{Dynamic network congestion games}
\author{Nathalie Bertrand}
       {Univ Rennes, Inria, CNRS, IRISA, France}{nathalie.bertrand@inria.fr}
       {https://orcid.org/0000-0002-9957-5394}{}
\author{Nicolas Markey}
       {Univ Rennes, Inria, CNRS, IRISA, France}{nicolas.markey@irisa.fr}
       {https://orcid.org/0000-0003-1977-7525}{}
\author{Suman Sadhukhan}
       {Univ Rennes, Inria, CNRS, IRISA, France}{suman.sadhukhan@inria.fr}
       {}{}
\author{Ocan Sankur}
       {Univ Rennes, Inria, CNRS, IRISA, France}{ocan.sankur@irisa.fr}
       {https://orcid.org/ 0000-0001-8146-4429}{}
\authorrunning{N.~Bertrand, N.~Markey, S.~Sadhukhan, O.~Sankur}
\keywords{Congestion games, Nash equilibria, Subgame perfect equilibria, Complexity}
\begin{document}
\maketitle

\begin{abstract}
Congestion games are a classical type of games studied in game theory,
in which $n$ players choose a resource, and their individual cost
increases with the number of other players choosing the same
resource. In~network congestion games (NCGs), the resources correspond
to simple paths in a graph, \emph{e.g.} representing routing options
from a source to a target. In~this paper, we~introduce a variant of
NCGs, referred to as \emph{dynamic NCGs}: in~this setting,
players take transitions synchronously,
they select their next transitions dynamically,
and 
they are charged a cost that depends on the number of players simultaneously using the same transition.

\looseness=-1
We study, from a complexity perspective, standard concepts of game
theory in dynamic~NCGs: social optima, Nash equilibria, 
and subgame perfect equilibria.  Our~contributions are the following:
the~existence of a strategy profile with social cost bounded by a
constant is in \PSPACE and \NP-hard.
(Pure) Nash equilibria always exist in
dynamic NCGs; the~existence of a Nash equilibrium with bounded cost
can be decided in \EXPSPACE, and computing a witnessing strategy
profile can be done in doubly-exponential time.
%
The~existence of a subgame perfect equilibrium with bounded cost can
be decided in 2\EXPSPACE, and a witnessing strategy profile can be
computed in triply-exponential time.


\end{abstract}

\section{Introduction}
\label{sec:intro}
Congestion games model selfish resource sharing among several
players~\cite{R-IJGT73}. A special case is the one of network
congestion games,
in which players aim at
routing traffic through a congested network. Their popularity is
certainly due to the fact that they have important practical
applications, whether in transportation networks, or in large
communication networks
~\cite{QYZS-ieeeTN06}. In
network congestion games, each player chooses a set of transitions,
forming a simple path from a source state to a target state, and the
cost of a transition increases with its load, that is, with the number
of players using it.

Network congestion games can be classified into atomic and non-atomic
variants. Non-atomic semantics is appropriate for large populations of
players, thus seen as a continuum. One then considers portions of the
population that apply predefined strategies, and there is no such
thing as the effect of an individual player on the cost of others.  In
contrast, in atomic games, the number of players is fixed, and each
player may significantly impact the cost other players incur.
We only focus on atomic games in this paper.

\subparagraph*{Network congestion games.}
Network
congestion games, also called atomic selfish routing games in the
literature, were first considered by Rosenthal~\cite{R-IJGT73}.
These games are defined by a directed graph, a~number of pairs of
source-target vertices, and non-decreasing cost functions for each
edge in the graph. For each source-target pair, a player must choose a
route from the source to the target vertex. Given their choice of
simple paths, the cost incurred by a player depends on the number of
other players that choose paths sharing edges with their path,
and on the cost functions of these edges. In this setting, a Nash
equilibrium maps each player to a path in such a way that no player has an
incentive to deviate: they cannot decrease their cost by choosing a
different path.

Rosenthal
proved that they are potential games, so that Nash equilibria always
exist. Monderer and Shapley~\cite{MS-geb96} studied in a more general way
potential games, and explained how to iteratively use best-response
strategies to converge to an equilibrium.
Interestingly, under reasonable assumptions on the cost functions,
Bertsekas and Tsitsiklis established that there is a direct
correspondence between equilibria in selfish routing and distributed
shortest-path routing, which is used in practice for packet routing in
computer networks~\cite{BT-book89}.
We refer the interested reader to~\cite{roughgarden-chap2007}
for an introduction and many basic results on general routing games.

A natural question is whether selfish routing is very different from a
routing strategy decided by a centralized authority. In~other words,
how far can a selfish optimum be from the social optimum, in which
players would cooperate. The notion of price of anarchy, first
proposed by Koutsoupias and Papadimitriou~\cite{KP-csr09},
is the ratio of the worst cost of a Nash equilibrium
and the cost of the social optimum.
This measures how bad Nash equilibria can~be.
In the context of network congestion games, the price of anarchy was
first studied by Suri~\textit{et~al.}~\cite{STZ-Algo07}, establishing
an upper bound of $\frac 5 2$ when all cost functions are affine.
A~refined upper bound was provided by Awerbuch
\textit{et~al.}~\cite{AAE-stoc05}. Bounds on the dual notion of price
of stability, which is the ratio of the cost of a best Nash
equilibrium and the cost of the social optimum was also studied for
routing games~\cite{ADKTWR-jc08}.

\subparagraph*{Timing aspects.}
Several works investigated refinements of this setting.
In \cite{hmrt-tcs11}, the authors study network congestion games
in which each edge is traversed with a fixed duration independent of its load,
while the cost of each edge depends on the load.
The model is thus said to have \emph{time-dependent} costs since the load depends on the times
at which players traverse a given edge. The authors prove the existence of
Nash equilibria by reduction to the setting of \cite{R-IJGT73}.
An extension of this setting with
timed constraints
was studied in \cite{AGK-mfcs17,AGK-mfcs18}.

The setting of fixed durations with time-dependent costs is
interesting in applications where the players sharing a resource (an
edge) see their quality of services decrease, while the time to use
the resource is unaffected~\cite{AGK-mfcs18}.
Timing also appears, for instance, in \cite{ppt-mor2009,kp-icalp12}
where the load affects travel times and players' objective is to
minimize the total travel time.  Other works focus on flow models with
a timing aspect~\cite{koch2011nash,bfa-geb2015}.


\subparagraph*{Dynamic network congestion games.} 
In classical network congestion games, including those mentioned above, 
players choose their strategies (\textit{i.e.,} their \emph{simple} paths) in one shot.
However, it may be interesting to let agents choose their paths \emph{dynamically},
that is, step by step, by observing
other players' previous choices. In this paper, 
we study network congestion games with time-dependent costs
as in \cite{hmrt-tcs11}, but with unit delays, and in a dynamic setting.
More~precisely,
at~each step, each of the players simultaneously selects the edge
they want to take;
each player is then charged a cost that depends on the load of the
edge they selected, and traverses that edge in one step. We~name
these games \emph{dynamic network congestion games} (dynamic
NCGs~in~short); the~behaviour of the players in such games is formalized
by means of \emph{strategies}, telling the players what to play
depending of the current configuration.
Notice that, because congestion effect applies to edges used
\emph{simultaneously} by several players, taking cycles can be interesting
in dynamic NCGs, which makes our setting more complex than most NCG
models~\cite{AHK-tcs20, hmrt-tcs11, R-IJGT73, roughgarden-chap2007}.

Such a dynamic setting was studied in~\cite{AHK-tcs20} for resource
allocation games, which extends~\cite{R-IJGT73} with dynamic choices.
A more detailed related work appears 
at the end of this section.

\subparagraph*{Standard solution concepts.}
We study classical solution concepts on dynamic network congestion
games.  A~strategy profile (\emph{i.e.},~a~function assigning a
strategy to each player) is a \emph{Nash Equilibrium}~(NE) when each
single strategy is an optimal response to the strategies of the other
players; in~other terms, under such a strategy profile, no~player may
lower their costs by unilaterally changing their strategies.  Notice
that NEs need not exist in general, and when they exist, they may not
be unique.  In~the setting of dynamic games, Nash~Equilibria are
usually enforced using \emph{punishing strategies}, by~which any
deviating player will be punished by all other players once the
deviation has been detected. However, such punishing strategies may also
increase the cost incurred to the punishing players, and hence do not
form a credible threat; \emph{Subgame-Perfect Equilibria}~(SPEs)
refine~NEs and
address this issue by requiring that the strategy profile is an NE 
along any~play.

NEs and SPEs
aim at minimizing the individual cost of each player (without caring
of the others' costs); in~a collaborative setting, the~players may
instead try to lower the social cost, i.e.,~the~sum of the costs
incurred to all the players. Strategy profiles achieving this are
called \emph{social optima}~(SO). Obviously, the social cost of NEs
and SPEs cannot be less than that of the social optimum;
the~\emph{price of anarchy} measures how bad selfish
behaviours may be compared to collaborative ones.


\subparagraph*{Our contributions.} 
We take a computational-complexity viewpoint to study dynamic network
congestion games. We first establish the complexity of computing the
social optimum, which we show is in~\PSPACE and \NP-hard.
We~then prove that best-response strategies can be computed in
polynomial time, and that dynamic NCGs are potential games, thereby
showing the existence of Nash equilibria in any dynamic NCG;
this~also shows that some Nash equilibrium can be computed in
pseudo-polynomial time.  We~then give an \EXPSPACE (resp. 2\EXPSPACE)
algorithm to decide the existence of Nash Equilibria
(resp. Subgame-Perfect Equilibria) whose costs satisfy given bounds.
This
allows us to compute best and worst such equilibria,
and then the price of anarchy and the price of stability.

Note that some of the high complexities follow from the binary
encoding of the number of players, which is the main input parameter.
For~instance, the exponential-space complexity drops to
pseudo-polynomial time for a fixed number of players.  This~parameter
becomes important since we advocate the study of computational
problems, such as computing Nash equilibria with a given cost
bound. We also believe that computing precise values for price of
anarchy and the price of stability is interesting, rather than
providing bounds on the set of all instances as in
\textit{e.g.}~\cite{STZ-Algo07}.
         

\subparagraph*{Comparison with related work.}
The works closest to our setting
are~\cite{hmrt-tcs11,AHK-tcs20,AGK-mfcs17,AGK-mfcs18}.
As~in~\cite{hmrt-tcs11,AGK-mfcs18}, we~establish the existence of Nash
equilibria using potential games. Unlike~\cite{hmrt-tcs11}, we~cannot
obtain this result immediately by reducing our games to congestion
games~\cite{R-IJGT73} since the lengths of the strategies cannot be
bounded \emph{a~priori}. Moreover, the best-response problem has a
polynomial-time solution in our setting
while the problem is \NP-hard both
in~\cite{hmrt-tcs11,AGK-mfcs18}. In~\cite{hmrt-tcs11}, this is due to
the possibility of having arbitrary durations, while the source of
complexity in \cite{AGK-mfcs17,AGK-mfcs18} is due to the use of timed
automata.  Thus, our setting offers a simpler way of expressing
timings, and avoids their high complexity for this problem.
        
        Dynamic choices were studied in~\cite{AHK-tcs20} but with a different cost model. Moreover, network congestion games can only be reduced to such a setting given an a priori  bound on the length of the paths.
        So we cannot directly transfer any of their results to our setting.
        Dynamic choices were also studied in \cite{hmrt-tcs11}
        in the setting of coordination mechanisms which are local policies that allow one to sequentialize traffic on the edges.

\section{Preliminaries}
\label{section:prelim}
\subsection{Dynamic network congestion games}

\begin{wrapfigure}[10]{r}{.44\textwidth}
  \centering
  \begin{tikzpicture}[xscale=1]
    \path[use as bounding box] (0,.8) -- (5,-1.3);
    \draw (0,0) node[rond5] (s) {} node {$\vphantom{tg}\src$};
    \draw (1.5,1.2) node[rond5] (1) {$\vphantom{tg}v_1$};
    \draw (1.5,-1.2) node[rond5] (2) {$\vphantom{tg}v_2$};
    \draw (3,0) node[rond5] (3) {} node {$v_3$};
    \draw (4.7,0) node[rond5] (t) {} node {$\tgt$};
    \draw[-latex'] (s) -- (1) node[sloped,above,midway] {$x\mapsto x$};
    \draw[-latex'] (s) -- (2) node[sloped,below,midway] {$x\mapsto 5$};
    \draw[-latex'] (1) -- (2) node[above,midway,fill=white] {$x\mapsto 6$};
    \draw[-latex'] (1) -- (3) node[sloped,above,midway] {$x\mapsto 3x$};
    \draw[-latex'] (2) -- (3) node[sloped,below,midway] {$x\mapsto x$};
    \draw[-latex'] (3) -- (t) node[sloped,below,midway] {$x\mapsto 4x$};
  \end{tikzpicture}
  \caption{Representation of an arena for a dynamic NCG (loop omitted on~\tgt)}\label{fig-ex}
\end{wrapfigure}
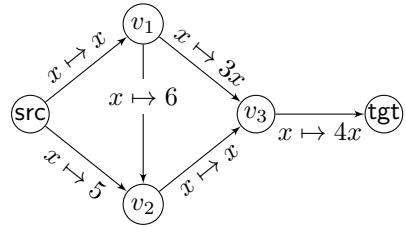
Let~$\calF$ be the family of non-decreasing functions from~$\bbN$
to~$\bbN$
that are piecewise-affine, with finitely many
pieces. We~assume that each $\edgecost \in \calF$ is represented by the endpoints of intervals, and the coefficients, all encoded in
binary.
An~arena for dynamic network congestion games is a weighted
graph $\calA=\tuple{V,E,\src,\tgt}$, where $V$ is a finite set of
states,
${E\colon V\times V \to \calF}$ is a partial function defining the
cost of edges, and \src and \tgt are a
source- and a target state in~$V$. It~is assumed throughout this paper
that~$\tgt$ has only a single outgoing transition, which is a
self-loop with constant cost function~$x\mapsto 0$. We~also assume
that~$\tgt$ is reachable from all other states.

A dynamic network congestion game (dynamic~NCG for~short) is a pair
$\calG=\tuple{\calA,n}$ where $\calA$ is an arena as above and
$n\in\bbN$ is the (binary-encoded) number of players.  In~a dynamic
network congestion game, all~players start from~$\src$ and
simultaneously select the edges they want to traverse, with the aim of
reaching the target state with minimal individual accumulated cost.
Taking an edge~$e=(v,\edgecost,v')$ has a cost $\edgecost(l)$, where
$l$ is the number of players simultaneously using edge~$e$.
%
The~cost function of edge~$e$ is
denoted by~$\edgecost_e$. We~let $\kappa=\max_{e\in E} \edgecost_e(n)$,
which is the maximal cost that a player may endure along one edge.
%

Our setting differs from classical network congestion
games~\cite{roughgarden-chap2007} mainly in two respects:
\begin{itemize}
\item first, the game is played in rounds, during which all players
  take exactly one transition;   the~number of players using an edge~$e$
  is measured \emph{dynamically}, at each round;

\item second, during the play, players may adapt
  their choices to what the other players have been doing in the previous
  rounds. 
\end{itemize}


\begin{remark}
  In this work, we mainly focus on the \emph{symmetric} case, where
  all players have the same source and target. This is because we take
  a parametric-verification point of view, with the (long-term) aim of checking
  properties of dynamic~NCGs for arbitrarily many players.
  An~important consequence of this choice is that the number of players
  now is encoded in binary, which results in an exponential blow-up
  in the number of configurations of the game
  (compared to the asymmetric setting).
\end{remark}

\subparagraph{Semantics as a concurrent game.}  For any~$k\in\bbN$, we
write $\set k=\{i\in\bbN \mid 1\leq i\leq k\}$. A~\emph{configuration}
of a dynamic network congestion game $\tuple{\calA,n}$ is a mapping
${c\colon \set n \to V}$, indicating the position of each player in
the arena.  We~define $c_\src\colon i\in\set n\mapsto \src$ and
$c_\tgt\colon i\in\set n \mapsto \tgt$ as the initial and target
configurations, respectively.

With $\tuple{\calA,n}$, we~first associate a multi-weighted
graph~$\calM=\tuple{C,T}$, where $C=V^{\set n}$
is the set of all
configurations and $T\subseteq C\times \bbN^{\set n} \times C$ is a
set of edges, defined as follows: there is an edge $(c,w,c')$ in~$T$
if, and only~if, there exists a collection
$\mathbf e = (e_i)_{i\in\set n}$ of edges of~$E$ such that for
all~$i\in\set n$, writing $e_i=(v_i,\edgecost_i,v'_i)$ and
$u_i=\#\{j\in\set n\mid e_j = e_i\}$, we have $c(i)=v_i$,
$c'(i)=v'_i$, and $w(i)=\edgecost_i(u_i)$. We~denote this edge with
$c\xRightarrow{\mathbf e} c'$. We~may omit to mention~$\mathbf e$ since
it can be obtained from~$c$ and~$c'$;
similarly, we~write $\cost_i(c,c')$ for~$w(i)$.

Two~edges~$(c,w,c')$ and $(d,x,d')$, in that order, are said to be
\emph{consecutive} whenever $d=c'$.  Given a configuration~$c$,
a~\emph{path} from~$c$ in a dynamic network congestion game is either
the single configuration~$c$ (we~call this a trivial path) or a
non-empty, finite or infinite sequence of consecutive edges
$\rho=(t_j)_{1\leq j < |\rho|}$ in~$\calM$, where~$t_1$ is a
transition from~$c$; the size of a path~$\rho$ is one for trivial
paths, and $|\rho|\in\bbN\cup\{+\infty\}$ otherwise.
We~write $\Paths(\tuple{\calA,n},c)$ and $\Paths^\omega(\tuple{\calA,n},c)$ for
the set of finite and infinite paths from~$c$ in~$\tuple{\calA,n}$,
respectively.

With~each path~$\rho=(c_j,w_j,c'_j)_j$, and each player~$i\in\set n$,
we~associate a \emph{cost}, written $\cost_i(\rho)$, which is~zero for
trivial paths, $+\infty$ for infinite paths
along which $c_j(i)\not=\tgt$ for all~$j$,
and $\sum_{j=1}^{|\rho|-1} w_j(i)$ otherwise.
We~define the~\emph{social cost} of~$\rho$, denoted
by~$\soccost(\rho)$, as $\sum_{i\in\set n} \cost_i(\rho)$.

Given a path~$\rho$, an index~$1 \leq j < |\rho|+1$ and a
player~$i\in\set n$, we~write~$\rho(j)$ for the $j$-th configuration
of~$\rho$, and $\rho(j)(i)$ for the state of Player~$i$ in that
configuration.
For~$j\geq 2$, we define~$\rho_{\leq j}$ as the prefix of~$\rho$ that
ends in the $j$-th configuration;
we~let $\rho_{\leq 1}=\rho(1)$.
Similarly, for~$1\leq j\leq |\rho|-1$, we~let~$\rho_{\geq j}$ denote
the suffix that starts at the $j$-th configuration.
Finally, if~$|\rho|$ is finite, we~let $\rho_{\geq|\rho|}=\rho(|\rho|)$.

\begin{example}\label{ex-ex1}
  Consider the arena~$\calA$ displayed at Fig.~\ref{fig-ex} and the
  dynamic NCG~$\tuple{\calA,2}$ with two players.  Assume that
  Player~$1$ follows the
  path~$\pi_1\colon \src \to v_1 \to v_3 \to \tgt$ and Player~$2$ goes
  via $\pi_2\colon \src \to v_1 \to v_2 \to v_3\to \tgt$. This gives
  rise to the following path:
\[
\left(\begin{array}{c} 1\mapsto\src\\2\mapsto\src
\end{array}\right)
\xrightarrow{\substack{1\mapsto 2\\2\mapsto 2}}
\left(\begin{array}{c} 1\mapsto v_1\\2\mapsto v_1
\end{array}\right)
\xrightarrow{\substack{1\mapsto 3\\2\mapsto 6}}
\left(\begin{array}{c} 1\mapsto v_3\\2\mapsto v_2
\end{array}\right)
\xrightarrow{\substack{1\mapsto 4\\2\mapsto 1}}
\left(\begin{array}{c} 1\mapsto\tgt\\2\mapsto v_3
\end{array}\right)
\xrightarrow{\substack{1\mapsto 0\\2\mapsto 4}}
\left(\begin{array}{c} 1\mapsto\tgt\\2\mapsto \tgt
\end{array}\right)
\]
Notice how edge~$v_3\to\tgt$ of~$\calA$ is used by both players,
but not simultaneously, so~that the cost of using that edge
is~$4$ for each of them, while it would be 8 in classical NCGs.
\end{example}

We~now extend this graph to a concurrent game structure.
A~\emph{move} for Player~$i\in \set n$ from configuration~$c$ is an
edge~$e=(v,\edgecost,v')\in E$ such that $v=c(i)$. A~\emph{move
  vector} from~$c$ is a sequence ${\mathbf e=(e_i)_{i\in\set n}}$ such
that for all~$i\in\set n$, $e_i$~is a move for Player~$i$ from~$c$.

A~network congestion game~$\tuple{\calA,n}$ then gives rise to a
concurrent game structure $\calS=\tuple{C,T,M,U}$ where $\tuple{C,T}$
is the graph defined above, $M\colon C\times \set n \to 2^E$ lists the
set of possible moves for each player in each configuration, and
$U\colon C\times E^{\set n} \to T$ is the transition function, such
that for every configuration~$c$ and every move vector~$\mathbf
e=(e_i)_{i\in\set n}$ with $e_i\in M(c,i)$ for all~$i\in\set n$,
$U(c,e)= ( c\xRightarrow{\mathbf e} c' )$.

A~\emph{strategy} for Player~$i$ in~$\calS$ from configuration~$c$ is
a function~$\sigma_i\colon \Paths(\tuple{\calA,n},c) \rightarrow E$
that associates, with any finite path~$\rho$ from~$c$ in~$\calS$,
a~move for this player from the last configuration of~$\rho$.
A~\emph{strategy profile} is a family~$\sigma=(\sigma_i)_{i\in\set n}$
of strategies, one for each player. We~write $\frakS$ for the set of
strategies, and $\frakS^n$ for the set of strategy profiles.

Let~$c$ be a configuration, $h$~be a finite path from~$c$ and a
strategy profile~$\sigma=(\sigma_i)_{i\in\set n}$
from~$c$. The~\emph{residual strategy profile} of~$\sigma$ after~$h$
is the strategy profile~$\sigma^h=(\sigma^h_i)_{i\in\set n}$ from the
last configuration of~$h$ defined by
$\sigma^h_i(h')=\sigma_i(h\cdot h')$, where $h\cdot h'$ is the
concatenation of paths~$h$ and~$h'$.

The~\emph{outcome} of a strategy profile~$\sigma$ from~$c$ is the
infinite path~$\rho=(c_i,w_i,c_{i+1})_{i\geq 1}$, hereafter
denoted with~$\outcome(\sigma)$, obtained by running the strategy
profile; it~is formally defined as the only infinite path such that
$(c_1,w_1,c_2) = U(c,\sigma(c))$, and such that for any~${j\geq 2}$,
$(c_j,w_j,c_{j+1}) = U(c_j,\sigma(h'))$, where
$h'= (c_1,w_1,c_2) \cdots (c_{j-1},w_{j-1},c_j)$.

Pick a strategy profile~$\sigma=(\sigma_i)_{i\in\set n}$, and let
$\rho=(t_j)_{j\geq 1}$ be its outcome, writing
$t_j=(c_j,(w_j^i)_{i\in\set n}, c'_j)$ for all~$j\geq 1$.
Let~$k\in\set n$.  If~$c'_l(k)=\tgt$ for some~$l\in\bbN$, then
$\sigma_k$ is said to be winning for Player~$k$.  In~that case,
we~define \( \cost_k(\sigma) \) as $\cost_k(\outcome(\sigma))$.
If~$c'_l(i)=\tgt$ for all~$i\in\set n$, we~define the~\emph{social
  cost} of~$\sigma$ as
$\soccost(\sigma)=\soccost(\rho)$.

A strategy~$\sigma_i$ for Player~$i$ is said \emph{blind} whenever
for any two finite paths~$\rho$ and~$\rho'$ having same length~$k$, if
for any position~$0\leq j<k$ we have $\rho(j)(i)=\rho'(j)(i)$, then
$\sigma_i(\rho)=\sigma_i(\rho')$. Intuitively, this means that
strategy~$\sigma_i$ follows a path in~$\calA$, independently of what
the other players~do. A blind strategy can thus be represented as
a path and we~write $|\sigma_i|$ for the length of that path (until its
first visit to~$\tgt$, if~any).  We~write $\frakB$ for the set of
blind strategies.

\begin{example}\label{ex-ex2}
  Consider again the arena~$\calA$ of
  Fig.~\ref{fig-ex}. The~paths~$\pi_1$ and $\pi_2$ from
  Example~\ref{ex-ex1} are two blind strategies in that
  dynamic~NCG. In~a 2-player setting, an~example of a non-blind
  strategy~$\sigma$ consists in first taking the transition
  $\src\to v_1$, and then either taking $v_1\to v_3$ if the other
  player took the same initial transition, or taking $v_1 \to v_2$
  otherwise.
\end{example}

\subparagraph{Representation as a weighted graph.}
Another way of representing configurations is to consider their Parikh
images.  With a configuration~$c\in V^{\set n}$,
we~associate an abstract configuration $\bar c\in \set n^V$ defined as
$\bar c(v)=\#\{i\in\set n\mid c(i)=v\}$.

The~\emph{abstract weighted graph} associated with a dynamic network
congestion game~$\tuple{\calA,n}$ is the weighted graph
$\calP=\tuple{A,B}$, where $A$ contains all abstract configurations,
and there is an edge $(a,w,a')$ in~$B\subseteq A\times\bbN\times A$
if, and only~if, there is a mapping $b\colon E\to\set n$ such that
$\sum_{e\in E} b(e)=n$ and for all~$v\in V$, 
\begin{xalignat*}3
  a(v) &= \sum_{e=(v,\edgecost,v')} b(e) &
  w &= \sum_{e=(v,\edgecost,v')} b(e)\times \edgecost(b(e)) &
  a'(v) &=  \sum_{e=(v',\edgecost,v)} b(e).
\end{xalignat*}
\looseness=-1 Similarly to the representation as multi-weighted
graphs, an~\emph{abstract path} of a network congestion game is either
a single configuration or a non-empty, finite or infinite sequence of
consecutive edges in the abstract weighted graph.
The~\emph{cost} of an abstract path is the sum of the weights of its
edges (if~any).
Then:
\begin{lemma}\label{lemma-linkMP}
  For any~$w\in\bbN\cup\{+\infty\}$, there is an abstract path in~$\calM$ with social cost~$w$ if, and only~if, there is an abstract path in~$\calP$ with cost~$w$.
\end{lemma}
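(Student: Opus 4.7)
The plan is to establish the lemma by a Parikh-image correspondence between edges of $\calM$ and edges of $\calP$, and then lift this to paths by induction on the length, handling finite and infinite paths separately.

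First I would take care of the single-edge correspondence. Given an edge $c\xRightarrow{\mathbf e} c'$ in $\calM$ with move vector $\mathbf e=(e_i)_{i\in\set n}$, I set $b\colon E\to\set n$ by $b(e)=\#\{i\in\set n\mid e_i=e\}$. Then $\sum_e b(e)=n$ since every player chooses exactly one move; for each $v\in V$, $\bar c(v)=\#\{i\mid c(i)=v\}=\sum_{e=(v,\edgecost,v')} b(e)$ because each player at $v$ chooses exactly one outgoing edge from $v$; and symmetrically $\bar{c'}(v)=\sum_{e=(v',\edgecost,v)} b(e)$. Regrouping the social cost by edge yields
\[
\soccost(c,c')=\sum_{i\in\set n}\edgecost_i(u_i)=\sum_{e\in E} b(e)\cdot\edgecost_e(b(e)),
\]
which is precisely the weight of the corresponding edge $(\bar c,w,\bar{c'})$ in $\calP$. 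Conversely, given an edge $(a,w,a')$ in $\calP$ witnessed by some $b$, and any configuration $c$ with $\bar c=a$, I can lift $b$ to a move vector as follows: for each state $v$, the $a(v)$ players located at $v$ are distributed among the outgoing edges of $v$ so that exactly $b(e)$ of them use edge $e=(v,\edgecost,v')$; this is feasible because $a(v)=\sum_{e=(v,\edgecost,v')} b(e)$. The resulting concrete edge $c\xRightarrow{\mathbf e} c'$ satisfies $\bar{c'}=a'$ by the third equation, and has social cost exactly $w$ by the same regrouping as above.

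I~would then extend this to paths by induction on length. A finite concrete path $\rho=(c_j,w_j,c_{j+1})_{j=1}^{k}$ maps to the abstract path $(\bar c_j,w_j,\bar c_{j+1})_j$, and the social cost $\sum_j w_j$ is preserved verbatim. In the other direction, an abstract path $(a_j,w_j,a_{j+1})_j$ is lifted iteratively: pick any concrete $c_1$ with $\bar c_1=a_1$, then at step $j$ use the single-edge lifting to produce $c_{j+1}$ with $\bar c_{j+1}=a_{j+1}$ from $c_j$; the endpoints automatically match the starts of the next edge because the lifting depends only on $\bar c_j=a_j$. This gives a concrete path with the same sum of weights.

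Finally I would handle the $w=+\infty$ case. Since $\tgt$ has only a zero-cost self-loop as its single outgoing edge, both the concrete and abstract paths have finite total (social) cost as soon as all players are in $\tgt$ from some index on. Hence social cost $+\infty$ for a concrete path $\rho$ is equivalent to the existence, for each $N\in\bbN$, of a prefix of social cost at least $N$, and similarly on the abstract side. Combining this with the finite case already proved, an infinite concrete path with social cost $+\infty$ projects to an infinite abstract path of cost $+\infty$, and any infinite abstract path of cost $+\infty$ can be lifted (iteratively as above) to an infinite concrete path of social cost $+\infty$.

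The routine obstacle is really just the bookkeeping in the lifting step for the reverse direction: one must check that the arbitrary choices made at step $j$ (which specific player takes which edge, among those available at a given state) never block the next step; but this is immediate because the next abstract edge only constrains the new Parikh image $\bar c_{j+1}=a_{j+1}$, which is already guaranteed by the construction. There is no genuinely hard step, only careful accounting of weights and of the equality $\sum_{e=(v,\cdot,\cdot)} b(e)=a(v)$, which is what makes the $\calP$-to-$\calM$ lifting well defined.
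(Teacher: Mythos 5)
The paper states this lemma without proof, so there is nothing to compare line-by-line; on its own merits, your argument for the finite case is exactly the intended one. The single-edge correspondence via $b(e)=\#\{i\in\set n\mid e_i=e\}$, the regrouping of $\sum_i \edgecost_i(u_i)$ into $\sum_e b(e)\cdot\edgecost_e(b(e))$, and the step-by-step lifting (which works because the lifting of an abstract edge only depends on the Parikh image of the source configuration, which your induction maintains) are all correct, and this is all that is actually used later: the lemma is only invoked to reduce the social-optimum computation to a shortest-path search from $\bar c_\src$ to $\bar c_\tgt$ in $\calP$, where all relevant costs are finite.

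Your treatment of $w=+\infty$, however, rests on a false equivalence. You claim that $\soccost(\rho)=+\infty$ for an infinite concrete path is equivalent to having prefixes of unbounded social cost. With the paper's definitions this fails: $\cost_i(\rho)$ is \emph{defined} to be $+\infty$ whenever Player~$i$ never reaches $\tgt$ along an infinite path, regardless of the accumulated weights. Cost functions take values in $\bbN$ and may be zero, so a player can loop forever in a zero-cost cycle outside $\tgt$; the concrete path then has social cost $+\infty$ while every prefix has bounded cost, and its projection to $\calP$ has finite cost (the infinite sum of edge weights converges). In an arena where every reachable cycle other than the $\tgt$ self-loop has zero weight, there is then a concrete path of social cost $+\infty$ but no abstract path of cost $+\infty$, so the left-to-right direction for $w=+\infty$ cannot be proved the way you propose — indeed, as stated, this corner of the lemma appears to require either strictly positive edge costs or a restriction to paths reaching $c_\tgt$. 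You should either add such a hypothesis explicitly or restrict your claim for the infinite case to the observation that the equivalence holds whenever divergence of the accumulated weights is the reason for the infinite cost; the finite case, which is the substance of the lemma, stands as you wrote it.
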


\subsection{Social optima and equilibria}

Consider a dynamic network congestion
game~$\calG=\tuple{\calA,n}$. We~recall standard ways of optimizing
the strategies of the players, depending on the situation.

In a collaborative situation, all players want to collectively
minimize the total cost for having all of them reach the target state
of the arena. Formally, a strategy profile~$\sigma=(\sigma_i)_{i\in\set n}$
realizes the \emph{social optimum} if
$\soccost(\sigma) = \inf_{\tau\in\frakS^n} \soccost(\tau)$.

In a selfish situation, each player aims at optimizing their response to
the others' strategies.
%
Given a strategy profile~$\sigma=(\sigma_i)_{i \in \set{n}}$, 
a player~$k \in \set{n}$,  and a~strategy~$\sigma_k' \in \frakS$,
we~denote by
$\tuple{\sigma_{-k},\sigma_k'}$ the strategy profile
$(\tau_i)_{i\in\set n}$ such that ${\tau_k=\sigma'_k}$ and
${\tau_i=\sigma_i}$ for all~${i\in \set n\setminus\{k\}}$.
The~strategy $\sigma_k$ is a \emph{best response} to
$(\sigma_i)_{i\in\set n\setminus\{k\}}$ if
$\cost_k(\sigma) = \inf_{\sigma'_k\in\frakS}
\cost_k(\tuple{\sigma_{-k},\sigma'_k})$.  A~strategy
profile~$\sigma=(\sigma_i)_{i\in\set n}$ is a \emph{Nash equilibrium}
if for each $k\in\set n$, the~strategy~$\sigma_k$ is a best response
to $(\sigma_i)_{i\in\set n\setminus\{k\}}$.  In~such a case, no~player
has profitable unilateral deviations, \emph{i.e.}, no player alone
can decrease their cost by switching to a different strategy.

Nash equilibria can be defined for subclasses of strategy
profiles. In~particular, a~\emph{blind Nash equilibrium} is a blind
strategy profile~$\sigma$ that is a Nash equilibrium \emph{for
  blind-strategy deviations}: for all $k\in\set n$,
\( \cost_k(\sigma) = \inf_{\sigma'_k\in\frakB}
\cost_k(\tuple{\sigma_{-k},\sigma'_k}) \).  \emph{A~priori}, a~blind
Nash equilibrium need not be a Nash equilibrium for general
strategies.

In~an~NE, once a player deviated from their original strategy in the
strategy profile, the~other players can punish the deviating player,
even if this results in increasing their own costs. Indeed,
the~condition for being an NE only requires that the deviation should
not be profitable to the deviating player. Subgame-Perfect
Equilibria~(SPE) refine~NEs and rule out such non-credible threats by
requiring that, for any path~$h$ in the configuration graph, the
residual strategy profile after~$h$ is an NE.

\begin{example}\label{ex-ex4}
  Consider again the dynamic NCG~$\tuple{\calA,2}$, with the
  arena~$\calA$ of Fig.~\ref{fig-ex}.
  Assume that Player~$1$ plays the blind strategy corresponding
  to~$\pi_3\colon \src \to v_2 \to v_3 \to \tgt$, while Player~$2$
  plays the non-blind strategy~$\sigma$ of Example~\ref{ex-ex2}.
  The~cost for Player~$1$ then is~$10$, while that of Player~$2$
  is~$12$.

  This strategy profile is an NE: Player~$2$ could be tempted to
  play~$\pi_1$, but they would then synchronize with Player~$1$ on edge
  $v_3\to\tgt$, and get cost~$12$ again.
  Similarly, Player~$1$ could be tempted to play~$\pi_1$ instead
  of~$\pi_3$, but in that case strategy~$\sigma$ would tell Player~$2$
  to follow the same path, and the cost for Player~$1$ (and~$2$) would
  be~$16$. Notice in particular that this is not an~SPE, but that the
  blind strategy profile $\tuple{\pi_1, \pi_2}$ (extended to the whole
  configuration tree in the only possible~way) is~an~SPE
  in~$\tuple{\calA,2}$.
\end{example}


\medskip


In~Sections~\ref{section:nash} and~\ref{section:spe}, we~focus on NEs
and SPEs, developing \EXPSPACE and 2\EXPSPACE-algorithms for deciding the existence
of NEs and SPEs respectively of social cost less than or equal to a given
bound. Actually, our approach extends to the $\vec\gamma$-weighted
social cost, where $\vec\gamma\in\bbZ^{\set n}$ are coefficients
applied to the costs of the respective players when computing the
social~cost. As~a consequence, we~can compute best and worst
NEs and SPEs, hence also the price of anarchy and price of
stability~\cite{KP-csr09}.
Before~that, in~Section~\ref{section:socopt},
we~extend classical techniques using blind strategies to compute
the social optimum and prove that NEs always exist.

\section{Socially-optimal strategy profiles}
\label{section:socopt}
To compute a socially-optimal strategy profile, it suffices to find a
path in the concurrent game structure of the given network congestion
game with minimal total cost since one can define a strategy profile
that induces any given path.  Rather than finding such a path in the
concurrent game structure, and in view of Lemma~\ref{lemma-linkMP},
one~can look for one in the abstract weighted graph, thereby reducing
in complexity.  The socially-optimal cost in a dynamic
NCG~$\tuple{\calA,n}$ is thus the cost of a shortest path in the
associated weighted abstract graph~$\calP$ from~$\bar c_\src$
to~$\bar c_\tgt$.
Since $\calP$ has exponential size, we derive complexity upper bounds
for computing a socially-optimal strategy and deciding the associated
decision problem. Moreover, adapting~\cite[Theorem 4.1]{MS-ntwrks12}
which proves \NP-hardness in classical~NCGs, we provide a reduction
from the Partition problem to establish an \NP lower-bound.
\begin{restatable}{theorem}{thmsocopt}
  \label{thm:socopt}
    A socially-optimal strategy profile can be computed in exponential time.
    The~constrained social-optimum problem is in \PSPACE and \NP-hard.
\end{restatable}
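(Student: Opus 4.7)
By Lemma~\ref{lemma-linkMP}, the~social-optimum cost in $\tuple{\calA,n}$ coincides with the cost of a shortest path from $\bar c_\src$ to $\bar c_\tgt$ in the abstract weighted graph~$\calP$. Our~plan is to base both upper bounds on shortest-path computations in~$\calP$, and to derive the lower bound from Partition. The~abstract graph~$\calP$ has $\binom{n+|V|-1}{|V|-1}$ abstract configurations, exponential in the (binary) encoding of~$n$; however, each individual abstract configuration fits in $O(|V|\log n)$ bits, and given two such configurations~$a,a'$, checking whether an edge between them exists in~$\calP$ (and computing its weight~$w$) reduces to a small flow-feasibility problem on the multiplicities~$b(e)$, solvable in polynomial time.

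For the exponential-time bound, we~would build $\calP$ explicitly in exponential time, run Dijkstra to extract a shortest path from $\bar c_\src$ to $\bar c_\tgt$, and then convert it into a blind strategy profile by fixing, along this path, an assignment of players to edges consistent with the chosen flow multiplicities~$b$ at each step. By Lemma~\ref{lemma-linkMP}, the resulting strategy profile achieves the social optimum.

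For the \PSPACE upper bound, we~would use a nondeterministic procedure that guesses the abstract path transition by transition, keeping only the current abstract configuration, the accumulated cost, and a step counter. The~key point is that, since all weights are non-negative, some shortest path has length at most~$|A|$; hence the step counter and the cost (bounded by $|A|\cdot n\cdot \kappa$) both fit in polynomially many bits. We~accept when $\bar c_\tgt$ is reached with accumulated cost strictly less than~$b$. Savitch's theorem then yields the \PSPACE upper bound. The~main subtlety is the \emph{a~priori} length bound, without which the counters could overflow polynomial space.

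For \NP-hardness, we~would adapt the Partition reduction of~\cite[Theorem~4.1]{MS-ntwrks12} to the dynamic setting. Given an instance $(a_1,\dots,a_m)$ with total sum~$2S$, the plan is to build an arena in which the players traverse $m$ successive gadgets, one per number~$a_i$, each consisting of two parallel edges with cost functions designed so that the contribution of gadget~$i$ to the social cost is minimized exactly when the induced load-split encodes a subset summing to~$S$. The~main obstacle compared with the static setting is that the dynamic model gives players an additional time dimension and allows cycles, which could in principle break the intended correspondence; we~expect to circumvent this by orienting the gadgets strictly forward, so that no gadget can be revisited before~$\tgt$, restoring the classical Partition-style combinatorial argument.
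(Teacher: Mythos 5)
Your two upper bounds are essentially the paper's: shortest path in the abstract graph~$\calP$ for the exponential-time computation, and a nondeterministic step-by-step guess with polynomially-sized counters plus Savitch for \PSPACE. Your length bound (a simple path of length at most~$|A|$, justified by non-negativity of the weights) differs from the paper's sharper $n\cdot|V|$ bound of Lemma~\ref{lemma-short-shortest}, but either suffices since only $O(|V|\log n)$ bits are needed for the counter. One imprecision: between two abstract configurations there is not a single edge with ``its weight'' --- the weight depends on the chosen multiplicity map $b\colon E\to\set n$, and several maps (hence several weights) may be compatible with the same pair $(a,a')$; for the \PSPACE procedure you should guess $b$ itself (which fits in polynomial space) rather than rely on a polynomial-time ``edge-existence plus weight'' oracle. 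Minimizing the weight over all compatible $b$ is an optimization problem you have not addressed.

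The genuine gap is in the \NP-hardness part. The structure you propose --- $m$ \emph{successive} gadgets, each made of two parallel edges, with ``the contribution of gadget~$i$ minimized exactly when the load-split encodes a subset summing to~$S$'' --- cannot work as stated. In a serial arrangement all $n$ players reach gadget~$i$ simultaneously and split as $(k_i,\,n-k_i)$ over its two edges; the social cost then decomposes as $\sum_i \bigl(k_i f_i(k_i) + (n-k_i) g_i(n-k_i)\bigr)$, a sum of terms each depending only on its own~$k_i$. The minimum is obtained by optimizing each gadget independently, so no choice of cost functions can make the optimum depend on a \emph{global} constraint such as ``the selected $r_i$'s sum to~$S$''. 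What is missing is a coupling mechanism across gadgets. The paper's reduction supplies exactly this: the gadgets are placed in \emph{parallel} (with $r_i+2$ players forced into gadget~$i$ by a threshold function $\sfT_{r_i+2}$ on the edge $\src\to s_i$), each gadget routes either $r_i$ or $0$ surplus players towards a collector state $d_1$ or~$d_2$, and the outgoing edges of $d_1,d_2$ carry threshold cost functions $\sfT_S$ that blow up the cost as soon as more than $S$ players traverse them. Since the $2S$ surplus players must be split between $d_1$ and $d_2$ with at most $S$ on each side, a cheap profile exists iff the $r_i$'s can be partitioned into two halves of sum~$S$. Without such an aggregation point whose cost function reads off the \emph{total} load contributed by all gadgets, your reduction collapses to independent local optimizations and does not encode Partition. (Your remark about orienting gadgets forward to avoid cycles is fine, but it addresses a secondary issue, not this one.)
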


Note, that while~$\calP$ has size $(n+1)^{|V|}$, it is sufficient to
consider paths with a smaller number of transitions when looking for a
shortest path:
\begin{restatable}{lemma}{lemmashortshortest}
  \label{lemma-short-shortest}
  There is a  shortest path (w.r.t.~cost) in~$\calP$
  with size (in~terms of its number of transitions) at most $n\cdot |V|$.
\end{restatable}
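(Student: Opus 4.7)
I would take a shortest-cost path $\pi = a_0, a_1, \ldots, a_m$ in $\calP$ of minimum length among all shortest-cost paths from $\bar c_\src$ to $\bar c_\tgt$, and show that this $m$ is at most $n \cdot |V|$.

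First, I would observe that no abstract configuration can be repeated along $\pi$: if $a_t = a_{t'}$ with $t < t'$, excising the segment $a_{t+1}, \ldots, a_{t'}$ yields a valid path in $\calP$ of strictly smaller length and no greater cost (since edge weights in $\calP$ are non-negative), contradicting the minimality of~$|\pi|$. Next, I would decompose $\pi$ according to the number of players already at the target. Writing $\tau(a) = a(\tgt)$, the quantity $\tau(a_t)$ may be taken to be non-decreasing along $\pi$, since the unique outgoing transition from $\tgt$ is the zero-cost self-loop and so any player having reached $\tgt$ can remain there at no extra cost. Thus $\tau(a_t)$ ranges over the integers from $0$ to $n$, and the transitions of $\pi$ split into \emph{exit} transitions, where $\tau$ strictly increases---of which there are at most~$n$---and \emph{plateau} transitions, where $\tau$ stays constant.

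The technical core is then to bound the number of plateau transitions within each visited level of $\tau$ by $|V|-1$. Combined with at most $n$ exit transitions (and hence at most $n$ non-empty plateaus), this gives
\[
m \;\le\; n + n\cdot(|V|-1) \;=\; n\cdot|V|,
\]
as desired. To establish the plateau bound, suppose some plateau at level~$k$ contains $L \ge |V|$ transitions. Then any active player (one not yet at~$\tgt$) follows a walk of length~$L$ inside $V \setminus \{\tgt\}$, which has $|V|-1$ elements, so by pigeonhole that walk revisits some state. The plan is to exploit this repetition to construct a strictly shorter path of no greater cost, contradicting minimality.

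The main obstacle is that local modifications of a single player's trajectory, because of the synchronous semantics of dynamic NCGs, ripple through all subsequent abstract configurations. The plateau structure is however favourable: the $k$ players already at~$\tgt$ just self-loop at zero cost and are unaffected by whatever the active players do, so one can freely reorganise the moves of the $n-k$ active players inside the plateau as long as its start and end configurations are preserved. I would thus produce, from the cycling player and the trajectories of the other active players, a joint re-routing that reaches the end-of-plateau configuration in strictly fewer rounds and with cost no greater than the original plateau segment---essentially a flow-style surgery that shifts the cycling player past their repeated state while re-threading the edges used by the others. Iterating this surgery over all plateaus produces the claimed witness path of length at most~$n\cdot|V|$.
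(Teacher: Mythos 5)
The skeleton of your argument (levels of $\tau$, at most $n$ exit transitions, plateaus of length at most $|V|-1$) would indeed give the bound, but the one step that carries all the difficulty --- shortening a plateau of length at least $|V|$ at no extra cost --- is exactly the step you leave as ``the plan is to\ldots''. The pigeonhole observation only gives you, for each active player separately, a cycle in that player's own trajectory, and these cycles have different lengths and occur at different rounds. Because the semantics is synchronous, shortening the plateau by $r$ rounds requires deleting exactly $r$ rounds from \emph{every} active player's trajectory simultaneously; nothing guarantees that the available per-player cycles can be combined to achieve a common~$r$. Worse, even when they can, excising a cycle retimes all of that player's later edge traversals, so two players who used the same edge at different rounds in the original path may now use it in the same round, and the cost can strictly increase. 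So the ``flow-style surgery'' is not a routine patch: as written it restates the goal rather than achieving it, and it is unclear that it can be carried out (recall that taking cycles can be genuinely profitable in dynamic NCGs, so you cannot hope to argue that optimal plateaus are cycle-free player by player).

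The paper's proof avoids both problems by never trying to compress the given path inside a plateau. It labels each state $s$ with the single-player distance $\opt(s)\le |V|-1$ to $\tgt$, sets $H(i,j)=j+\opt(\rho(j)(i))$, and chooses the largest round $j_0\le|V|$ at which some player $i_0$ has $H(i_0,j_0)\le|V|$; that player alone is rerouted onto a single-player shortest path. The maximality of $j_0$ is what your surgery lacks: it guarantees that no other player can share an edge with the rerouted player afterwards, so the rerouted player's cost does not increase while everyone else's loads only decrease, and one more player reaches $\tgt$ within each window of $|V|$ rounds. Iterating yields a (possibly different) optimal path of length at most $n\cdot|V|$. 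To salvage your plateau decomposition you would need to replace cycle excision by a rerouting argument of this kind.
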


\begin{remark}
A~consequence of Lemma~\ref{lemma-short-shortest} is that deciding the
constrained social-optimum problem is in \NP for asymmetric games,
since in that setting the lists of sources and targets of each player
is part of the input, so~that $n$ is polynomial in the size of the input.
However, our~\NP-hardness proof only works in the symmetric case.
\end{remark}

\section{Nash equilibria}
\label{section:nash}
In this section, we study the existence of Nash equilibria and give algorithms to compute
them under given constraints.

\subsection{Existence and computation of (blind) Nash equilibria}
\label{section:blindNE}
To prove that blind Nash equilibria always exist, we establish that
dynamic NCGs with blind strategies are potential
games~\cite{R-IJGT73,MS-geb96} which are known to have Nash
equilibria.

Consider a dynamic NCG $\tuple{\calA,n}$, a blind strategy
profile~$\pi$, and let~$N_\pi$ denote the maximum length of the paths
prescribed by~$\pi$. We define the following potential function, which
is an adaptation of that used in \cite{R-IJGT73}:
\[
    \psi(\pi) = \sum_{j=1}^{N_\pi}\sum_{e\in E}\sum_{i=1}^{\load_e(\pi,j)} \edgecost_e(i),
\]
where $\load_e(\pi,j)$ denotes the number of players that take 
edge~$e$ in the $j$-step under~$\pi$, and $\edgecost_e$ is the cost
function on edge~$e$.

Using the above-defined potential function, one can derive an
algorithm to find a Nash equilibrium, by a classical \emph{best-response}
iteration. Starting with an arbitrary blind strategy profile, at~each
step we replace some player's strategy with their best-response,
and we continue as long as some player's cost can be
decreased. When this procedure terminates, the profile at hand is a
blind Nash equilibrium. In~dynamic~NCGs, best~responses exist and can
be computed in polynomial time. Indeed, one can construct a game in
which all players but Player~$i$ follow their fixed strategies given by
profile~$\pi$, using $N_\pi$ copies of the game in order to
distinguish the steps. After the~$N_\pi$-step, all players
in~$\set{n}\setminus\{i\}$ have reached their targets.  Since it is the
only remaining player, the remaining path for Player~$i$ should not be
longer than~$|V|$. Altogether, we obtain the following complexity
upper-bound:
\begin{restatable}{theorem}{thmbralg}
    \label{thm:br-algorithm}
    In dynamic NCGs, blind Nash equilibria always exist, and 
    we can compute one in pseudo-polynomial time.
\end{restatable}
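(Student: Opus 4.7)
The plan is to treat blind-strategy play in a dynamic NCG as a finite exact-potential game and then invoke best-response dynamics. The central step is to verify that the function $\psi$ defined just before the statement is an \emph{exact potential} over blind strategy profiles: for every blind profile $\pi=(\pi_i)_{i\in\set n}$, every player~$k$, and every blind deviation $\pi'_k$,
\[
\psi(\tuple{\pi_{-k},\pi'_k}) - \psi(\pi) \;=\; \cost_k(\tuple{\pi_{-k},\pi'_k}) - \cost_k(\pi).
\]
I~would establish this by Rosenthal's local argument, step by step. At~a~given time step~$j$, Player~$k$ swaps the edge~$e$ they used under~$\pi$ for an edge~$e'$. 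The contribution of~$e$ to~$\psi$ at step~$j$ drops by $\edgecost_e(\load_e(\pi,j))$, while the contribution of~$e'$ grows by $\edgecost_{e'}(\load_{e'}(\pi,j)+1)$, which is exactly the change in Player~$k$'s cost on that step. Because the self-loop on~$\tgt$ has cost $x\mapsto 0$, extending either sum past the smaller of $N_\pi$ and $N_{\tuple{\pi_{-k},\pi'_k}}$ adds nothing, neutralizing any length mismatch between $\pi$ and the deviation.

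With $\psi$ confirmed as an exact potential, starting from an arbitrary blind profile and iteratively replacing some player's strategy with a strictly cost-improving blind response produces a strictly decreasing sequence of nonnegative integer values of~$\psi$. The iteration therefore terminates, and by construction no~player can further improve their cost by a blind unilateral deviation at the terminal profile, which is thus a blind Nash equilibrium. This gives existence.

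For the pseudo-polynomial running time, I~would initialize every player's strategy to a simple path from $\src$ to $\tgt$, which exists by the reachability assumption. Then $N_\pi\le|V|$ and $\psi(\pi)\le|V|\cdot|E|\cdot n\cdot\kappa$, which is pseudo-polynomial in the input. Since $\psi$ strictly decreases at each iteration, the number of iterations is bounded by this initial value. Within one iteration, a best response for Player~$k$ against the fixed blind strategies of the other players is found by a shortest-path computation on the product of the arena with a time counter up to the current~$N_\pi$: each edge's weight there is fixed by the load that the other players' strategies induce at the corresponding step. Once the other players have all reached~$\tgt$, Player~$k$ can finish along a simple path of length at~most~$|V|$, so $N_\pi$ grows by at most~$|V|$ per iteration and remains pseudo-polynomial throughout. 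Each shortest-path computation is polynomial in~$N_\pi$ and~$|V|$, so the whole procedure runs in pseudo-polynomial time.

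The main obstacle is the potential identity itself, because in the dynamic setting loads are counted per step rather than per edge, path lengths may differ across profiles and change across iterations, and best responses may legitimately involve cycles — features absent in classical static NCGs. Handling these cleanly relies on the step-wise Rosenthal calculation combined with the zero-cost $\tgt$ self-loop, which together make the length mismatch innocuous and let the classical potential argument go through verbatim.
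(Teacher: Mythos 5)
Your proposal is correct and follows essentially the same route as the paper: establish that $\psi$ is an exact potential over blind profiles via the Rosenthal-style step-wise load calculation, run best-response dynamics to get existence and a pseudo-polynomial iteration bound from the initial potential value, and compute each best response by a shortest-path search in the time-unfolded product of the arena with the other players' moves hard-coded, noting that $N_\pi$ grows by at most $|V|$ per iteration. The only cosmetic difference is that the paper computes the potential difference by telescoping the inner sums over the changed loads rather than by your per-step swap accounting, but these are the same calculation.
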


\begin{remark}
  As an alternative proof to existence of blind NEs, we could have
  bounded the length of outcomes of blind NEs as follows: 
  all players have a strategy realizing cost at most~$|V|\cdot \kappa$, 
  where $\kappa=\max_{e\in E} \edgecost_e(n)$, since the~shortest path from~$\src$
  to~$\tgt$ has length at most~$|V|$, and the cost for a player at
  each step along edge~$e$ is at most~$\kappa$.
  It~follows that   no~path
  along which the cost for some player is larger than~$|V|\cdot
  \kappa$ can be the outcome of a
  blind~NE. As~a consequence, if a dynamic NCG has a blind NE, then it~has 
  one of length at most $|V|\cdot \kappa\cdot
  |V|^n$ (by~removing zero-cycles). Using this bound, we~can transform
  dynamic NCGs into classical congestion games, in
  which blind NEs always exist~\cite{hmrt-tcs11,R-IJGT73}.
\end{remark}
  

We~now show that blind Nash equilibria are
in fact Nash equilibria.  This~is proved using the observation that
given a blind strategy profile, the~most profitable deviation for any
player can be assumed to be a blind strategy.
\begin{restatable}{lemma}{lemmablindnene}
    \label{lemma:blind-ne-are-ne}
    In dynamic NCGs, blind Nash equilibria are Nash equilibria. 
\end{restatable}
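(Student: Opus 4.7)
The plan is to prove the contrapositive: given a blind strategy profile $\pi = (\pi_i)_{i \in \set n}$ and a player $k \in \set n$, every strategy $\sigma_k \in \frakS$ admits a blind counterpart $\pi'_k \in \frakB$ achieving $\cost_k(\tuple{\pi_{-k},\pi'_k}) = \cost_k(\tuple{\pi_{-k},\sigma_k})$. Applied to a blind NE~$\pi$, this means no non-blind unilateral deviation can be strictly better than any blind one, so the blind-NE condition already bounds the infimum over all of~$\frakS$.

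Fix $k$ and $\sigma_k$, and let $\rho = \outcome(\tuple{\pi_{-k},\sigma_k})$. The crucial observation is that, once the players $j \neq k$ commit to blind strategies, their trajectories are entirely determined and do not depend on what Player~$k$ plays. Indeed, by definition of a blind strategy, $\pi_j$'s prescription at step $t$ depends only on the sequence of Player~$j$'s own positions up to step $t$; an induction on~$t$ then shows that Player~$j$'s state along $\rho$ at step $t$ coincides with the $t$-th state of the unique path $p_j$ in~$\calA$ that $\pi_j$ prescribes, regardless of the other players' choices.

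Let $p_k$ be Player~$k$'s trajectory along $\rho$, and define the blind strategy $\pi'_k$ to follow $p_k$. Since the players $j \neq k$ still walk along their fixed paths $p_j$, and Player~$k$ now walks along $p_k$ by construction, the outcome of $\tuple{\pi_{-k},\pi'_k}$ coincides configuration-by-configuration with~$\rho$. In particular, at each step the set of edges chosen and the corresponding loads are identical, yielding $\cost_k(\tuple{\pi_{-k},\pi'_k}) = \cost_k(\tuple{\pi_{-k},\sigma_k})$ (both being $+\infty$ if Player~$k$ never reaches $\tgt$ along~$\rho$, so that one can harmlessly restrict attention to deviations that reach the target).

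The main obstacle is the formalization of the crucial observation above, i.e.\ justifying rigorously that blind strategies of the other players produce trajectories oblivious to Player~$k$'s behavior. The argument is a direct induction using the defining property of blind strategies, but it requires being careful that Player~$j$'s position at step $t+1$ is determined solely by Player~$j$'s position at step $t$ and by $\pi_j$ applied to Player~$j$'s own history, both of which are inductively independent of Player~$k$'s choices. Once this is established, the conclusion follows immediately from the blind-NE property of~$\pi$.
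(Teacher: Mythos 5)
Your proposal is correct and follows essentially the same route as the paper: replace the (possibly non-blind) deviation of Player~$k$ by the blind strategy that follows the path Player~$k$ traces in the resulting outcome, and observe that since all other players are blind their trajectories are unaffected, so the outcome --- and hence the cost --- is unchanged. The paper states this more tersely (without spelling out the induction showing the other players' trajectories are oblivious to Player~$k$), but the key idea is identical.
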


Computing some (blind) Nash equilibrium may not be satisfactory for
two reasons: one might want to compute the best (or the worst) Nash
equilibrium in terms of the social cost; and as
Lemma~\ref{lemma:blind-suboptimal} claims,
blind Nash equilibria are suboptimal, \textit{i.e.},
a lower social cost can be achieved by Nash
equilibria with general strategies.  This justifies the study of more
complex strategy profiles in the next subsection.

\begin{restatable}{lemma}{lemmablindsuboptimal}
    \label{lemma:blind-suboptimal}
    There exists a dynamic NCG with a Nash equilibrium~$\pi$ such that 
    for all blind Nash equilibria~$\pi'$, we~have
    $\cost(\pi) < \cost(\pi')$.
\end{restatable}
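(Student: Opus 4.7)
The plan is to exhibit a concrete small dynamic NCG $\tuple{\calA,n}$, a non-blind Nash equilibrium~$\sigma$ for it, and show that every blind Nash equilibrium has strictly larger social cost than~$\sigma$. The conceptual reason this should be possible is that non-blind strategies can implement punishment threats unavailable to blind ones: after observing that another player has deviated from the prescribed play, a non-blind player can switch to a path that overlaps with — and therefore congests — the deviator's new route, raising the deviator's cost above what it would have been under the prescribed path. In the blind setting this retaliation is impossible since each player is irrevocably committed to a fixed path at the outset, and hence fewer outcomes can be sustained as NE.

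Concretely, I would design a two-player arena in which two initial branches from~$\src$ re-merge at a common state, followed by a shared congestion-sensitive segment leading to~$\tgt$. The edge costs would be tuned so that there is a ``good'' outcome~$\rho^*$ in which the two players travel along distinct branches and arrive at the common state at different rounds, never sharing a load-$2$ edge. This outcome has low social cost, but the paths $(\pi^*_1,\pi^*_2)$ realizing it do \emph{not} form a blind NE: the player following the longer detour could unilaterally shorten their path by imitating the other, reducing their individual cost against a fixed opponent. The non-blind profile~$\sigma$ then consists of each player playing~$\pi^*_i$ on the equilibrium path, but reacting to any observed deviation by switching to an alternative path that synchronizes the two players on the shared congestion-sensitive segment. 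The parameters are chosen so that the extra cost incurred by the deviator on the now load-$2$ shared segment strictly exceeds the savings from imitation, making~$\sigma$ an NE with social cost $\soccost(\rho^*)$. Correctness is checked by case analysis on each player's deviations.

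Finally, I would verify that no blind NE has social cost at most $\soccost(\rho^*)$. The remark following Theorem~\ref{thm:br-algorithm} bounds the length of blind-NE outcomes by $|V|\cdot\kappa\cdot|V|^n$ after removing zero-cycles, so the set of candidate blind profiles is finite and can be enumerated; for each one, checking whether it is a blind NE reduces to computing best responses (polynomial time by Theorem~\ref{thm:br-algorithm}), and its social cost is then compared to $\soccost(\rho^*)$. The main obstacle is the joint tuning of the edge cost functions so that three conditions hold simultaneously: (i)~the paths realizing~$\rho^*$ fail to form a blind NE, (ii)~the punishment built into~$\sigma$ is strong enough to deter any deviation by either player, and (iii)~no alternative blind profile — including symmetric ``wait-and-stagger'' profiles exploiting self-loops or zero-cost detours to avoid congestion — sneaks in as a blind NE with social cost equal to or below~$\soccost(\rho^*)$. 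A single explicit arena with carefully chosen piecewise-affine cost functions then witnesses the lemma.
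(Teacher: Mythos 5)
Your plan is the same as the paper's: exhibit a concrete arena, a non-blind NE~$\sigma$ that sustains a cheap outcome via punishment threats, and then rule out every blind NE of equal or lower cost by exhaustive enumeration. (The paper does this with a three-player arena in which two players follow a top path $e_1e_2e_3e_4$ and one a bottom path $e_5e_6e_7e_8$, with two extra ``punishment'' edges $\pun_1,\pun_2$ crossing from the top to the bottom branch; the resulting NE has social cost~$36$, and a case analysis over the $20$ blind profiles up to permutation shows each is either not an NE or costs more than~$36$ --- the unique cost-$36$ blind profile admits a profitable blind deviation.)

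The gap is that you never produce the witness. The lemma is an existence statement, and its entire proof content is the explicit arena, the explicit cost functions, the verification that $\sigma$ is an NE (a case analysis over each player's deviation points), and the verification that no blind NE matches its cost. You defer all of this to ``parameters chosen so that'' three conditions hold simultaneously, and you yourself identify condition~(iii) --- excluding every alternative blind NE --- as ``the main obstacle'' without resolving it. This is precisely where such constructions typically fail: it is easy to make the intended profile a non-blind NE, and easy to make the intended paths fail to be a blind NE, but hard to guarantee that no \emph{other} blind profile achieves the same social cost while being stable; the paper's example needs the full $20$-case enumeration to certify this, and one of those cases lands exactly on cost~$36$ and must be separately shown not to be an NE. A secondary concern: you commit to a two-player arena, whereas the paper uses three players; it is not evident that two players suffice, and you give no argument or instance showing they do. Until a concrete arena with concrete cost functions is written down and these checks are carried out, the proposal is a proof sketch of the right shape rather than a proof.
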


The proof (detailed in
Appendix~\ref{app-nash}) is based on the dynamic NCG depicted
on Fig.~\ref{fig:blind-suboptimal}, for which we prove there is a Nash equilibrium
with total cost~$36$, while any \emph{blind} Nash equilibrium has higher social cost.

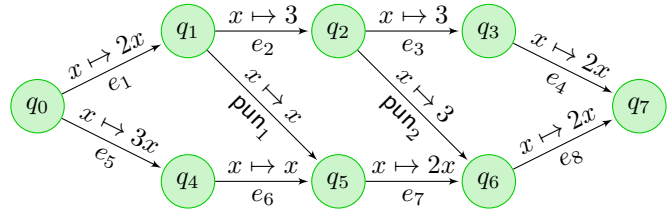
\begin{figure}[htbp]
		\centering
		\begin{tikzpicture}[shorten >=1pt,node distance=1cm,on grid,auto, el/.style = {inner sep=3pt, align=left, sloped}]
		\node[rond,vert] (q0) at (0,0) {$q_0$};
		\node[rond,vert] (q1) at (2,1) {$q_1$};
		\node[rond,vert] (q2) at (4,1) {$q_2$};
		\node[rond,vert] (q3) at (6,1) {$q_3$};
		\node[rond,vert] (q4) at (2,-1) {$q_4$};
		\node[rond,vert] (q5) at (4,-1) {$q_5$};
		\node[rond,vert] (q6) at (6,-1) {$q_6$};
		\node[rond,vert] (q7) at (8,0) {$q_7$};
		\begin{scope}[-latex']
		\draw (q0) edge node[el, above] {$x\mapsto 2x$} node [el, below]{$e_1$} (q1);
		\draw (q0) edge node[el, above] {$x\mapsto 3x$} node [el, below]{$e_5$} (q4);
		\draw (q1) edge node[el, above] {$x\mapsto 3$} node [el, below]{$e_2$} (q2);
		\draw (q2) edge node[el, above] {$x\mapsto 3$} node [el, below]{$e_3$} (q3);
		\draw (q3) edge node[el, above] {$x\mapsto 2x$} node [el, below]{$e_4$} (q7);
		\draw (q4) edge node[el, above] {$x\mapsto x$} node [el, below]{$e_6$} (q5);
		\draw (q5) edge node[el, above] {$x\mapsto 2x$} node [el, below]{$e_7$} (q6);
		\draw (q6) edge node[el, above] {$x\mapsto 2x$} node [el, below]{$e_8$} (q7);
		\draw (q1) edge node[el, above] {$x\mapsto x$} node [el, below]{$\pun_1$} (q5);
		\draw (q2) edge node[el, above] {$x\mapsto 3$} node [el, below]{$\pun_2$} (q6);
		\end{scope}
                \end{tikzpicture}
		\caption{An arena on which blind Nash equilibria are sub-optimal.}\label{fig:blind-suboptimal}
	\end{figure}

\subsection{Computation of general Nash equilibria}
\label{section:nash-optimal}
\subparagraph*{Characterization of outcomes of Nash Equilibria.}
Let us consider a dynamic NCG~$\tuple{\calA,n}$, and the corresponding
game structure~$\calS = \tuple{C, T, M, U}$.
Given two configurations~$c,c'$ with~$c\Rightarrow c'$, we let
$\cost_i(c,c')$ denote the cost of Player~$i$ on this transition from~$c(i)$ to~$c'(i)$.
We define $\dev_i(c,c')$ as the set of all configurations
reachable when all players but Player~$i$ choose moves prescribed by the
given transition~$c\Rightarrow c'$:
\[
\dev_i(c,c') = \{ c'' \in C \mid c \Rightarrow c''\text{ and } \forall j \in \set{n} \setminus\{i\}.\ c''(j) =c'(j)\}.
\]

The \emph{value} of configuration~$c$ for Player~$i$
is
\(
\val_{i,c}=\sup_{\sigma_{-i} \in \frakS^{n-1}} \inf_{\sigma_i\in
  \frakS} \cost_i((\sigma_{-i}, \sigma_i), c).
\)
Note that the value corresponds to the value of the zero-sum game
where Player~$i$ plays against the opposing coalition, starting at~$c$.
By~\cite{KBBEGRZ-tcs08}, those values can be computed in polynomial
time in the size of the game. Here the game is a 2-player game with
state space $|V|\times \set{n-1}^{|V|}$, keeping track of the position
of Player~$i$ and the abstract position of the coalition. It~follows
that each $\val_{i,c}$ can be computed in exponential time in the size of
the input~$\tuple{\calA,n}$.
%
%
Moreover, memoryless optimal strategies exist (in~$\calS$), that~is,
the~opposing coalition has a memoryless strategy~$\sigma_{-i}$ to ensure a
cost of at least~$\val_{i,c}$ from~$c$.

The characterization of Nash equilibria outcomes is given in the following lemma.
\begin{restatable}{lemma}{lemmanecharact}
    \label{lemma:ne-charact}
    A path~$\rho$ in $\tuple{\calA,n}$ is the outcome of a Nash equilibrium if, and only~if, 
    \[
    \forall i \in \set{n}.\
    \forall 1\leq l < |\rho|.\
    \forall c \in \dev_i(\rho(l),\rho(l+1)).\quad
        \cost_i(\rho_{\geq l}) \leq \val_{i,c} + \cost_i(\rho(l), c).
    \]
\end{restatable}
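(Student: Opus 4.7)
The plan is to prove both implications of the equivalence separately, exploiting the standard sup-inf meaning of the values $\val_{i,c}$. In~one direction, the~stated condition~$(\star)$ is forced because a violation would allow Player~$i$ to construct a profitable one-step deviation followed by a near-optimal response. In~the other, I~build a Nash equilibrium via the classical punishment paradigm: the~outcome is $\rho$, and any attempt by a single player to leave~$\rho$ is punished by the coalition's memoryless optimal strategy in the associated two-player zero-sum game, whose existence is guaranteed by~\cite{KBBEGRZ-tcs08}.

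For necessity, I~would argue by contraposition. Assume~$\sigma$ is an NE with outcome~$\rho$ and that $(\star)$ fails for some~$i$, $l$, and $c\in\dev_i(\rho(l),\rho(l+1))$, so~that $\cost_i(\rho_{\geq l})>\val_{i,c}+\cost_i(\rho(l),c)$. I~then define a deviating strategy~$\sigma'_i$ that agrees with~$\sigma_i$ along~$\rho_{\leq l}$; at~step~$l$ it chooses the move leading to state~$c(i)$. Since the other players, still facing the prefix~$\rho_{\leq l}$, follow~$\sigma_{-i}$ and reach $c(j)=\rho(l+1)(j)$ for $j\neq i$, the~resulting configuration is exactly~$c$. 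From~$c$ onward, $\sigma'_i$ plays an $\epsilon$-best response to the residual opponents' strategies~$(\sigma_j)_{j\neq i}$; by definition of the value, this response incurs cost at most $\val_{i,c}+\epsilon$. Thus the total cost for Player~$i$ under~$(\sigma_{-i},\sigma'_i)$ is at most $\cost_i(\rho_{\leq l})+\cost_i(\rho(l),c)+\val_{i,c}+\epsilon$, which for $\epsilon$ small enough is strictly less than $\cost_i(\rho_{\leq l})+\cost_i(\rho_{\geq l})=\cost_i(\sigma)$, contradicting the best-response property of~$\sigma_i$.

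For sufficiency, I~construct~$\sigma$ explicitly. Along every prefix of~$\rho$, each player plays the move dictated by~$\rho$. Given a history whose first deviation from~$\rho$ occurs at step~$l+1$, reaching configuration~$c'\neq\rho(l+1)$, let $D\subseteq\set n$ be the set of players whose move differs from that of~$\rho$. If $D=\{i\}$ then $c'\in\dev_i(\rho(l),\rho(l+1))$ and the other players irrevocably switch to a memoryless optimal coalition strategy from~$c'$ in the zero-sum game against Player~$i$, thereby forcing Player~$i$'s residual cost to be at least~$\val_{i,c'}$ against any response; if $|D|\geq 2$, any convention will~do. The~outcome is~$\rho$ by construction. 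To~verify the best-response property, consider any $\sigma'_i$ producing an outcome $\pi$ that differs from~$\rho$ against~$\sigma_{-i}$, and let~$l$ be its step of first deviation; the other players, still aligned with~$\rho$ up to~$\rho(l)$, send the play into some $c:=\pi(l+1)\in\dev_i(\rho(l),\rho(l+1))$. From~$c$ onward the punishment strategy enforces $\cost_i(\pi_{\geq l+1})\geq\val_{i,c}$, hence $\cost_i(\pi)\geq\cost_i(\rho_{\leq l})+\cost_i(\rho(l),c)+\val_{i,c}$, which by~$(\star)$ is at~least $\cost_i(\rho)=\cost_i(\sigma)$. So~no unilateral deviation is profitable.

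The main subtlety I~anticipate lies in making the punishment construction robust: strategies must be total functions on histories, so~the case of simultaneous multi-player deviations has to be specified, even though it never occurs on a path produced by a unilateral deviation from~$\rho$ and is therefore irrelevant to the NE check. A~secondary, lighter point is that the sup and inf in the definition of~$\val_{i,c}$ need not be simultaneously attained; this is handled by invoking the existence of memoryless optimal coalition strategies (used in sufficiency) and by passing to $\epsilon$-best responses for the deviator (used in necessity), both of which are standard for quantitative concurrent reachability games.
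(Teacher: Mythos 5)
Your proposal is correct and follows essentially the same route as the paper: necessity via the observation that a violation of the condition yields a profitable one-shot deviation followed by a (near-)best response bounded by $\val_{i,c}$, and sufficiency via the standard punishment construction in which the coalition switches to a memoryless optimal strategy of the zero-sum game against the first unilateral deviator. The only cosmetic differences are that you argue necessity by contraposition with $\epsilon$-best responses where the paper takes infima directly (and since costs are integers the infimum is in fact attained, so the $\epsilon$ is dispensable).
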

The intuition is that
if the suffix
$\cost_i(\rho_{\geq l})$ of~$\rho$ has cost more than $\val_{i,c} + \cost_i(\rho(l), c)$,
then Player~$i$ has a profitable deviation regardless of the strategy of the
opposing coalition,
since $\val_{i,c}$ is the maximum cost
that the coalition can inflict to Player~$i$ at configuration~$c$ where the deviation is observed.
The lemma shows that the absence of such a suffix means that a Nash equilibrium with given outcome
exists, which the proof constructs.

\begin{proof}
	Consider a Nash equilibrium~$\sigma = (\sigma_i)_{i \in \set{n}}$
	with outcome~$\rho$.
	Consider any player~$i$, and any strategy~$\sigma'_i$ for this player. 
	Let~$\rho'$ denote the outcome of~$\sigma[i \rightarrow \sigma_i']$.
	Let~$l$ denote
	the index of the last configuration where~$\rho$ and~$\rho'$ are identical.
	Since~$\sigma$ is a Nash equilibrium, we have
	$\cost_i(\rho) \leq \cost_i(\rho'$), that is,
	\[
	\cost_i(\rho_{\geq l}) \leq \cost_i(\rho(l),\rho'(l+1)) +
	\cost_i(\sigma[i \rightarrow \sigma_i'], \rho'_{\leq l+1})
	\]
	where $\cost_i(\sigma[i \rightarrow \sigma_i'], \rho'_{\leq l+1})$
	is the cost for Player~$i$ of the outcome of the residual strategy
	$(\sigma[i \rightarrow \sigma_i'])^{\rho'_{\leq l+1}}$.
	Since the choice of~$\sigma_i'$ is arbitrary here, we have,
	\[
	\cost_i(\rho_{\geq l}) \leq \cost_i(\rho(l),\rho'(l+1)) +
	\inf_{\sigma_i' \in \frakS}\cost_i(\sigma[i \rightarrow \sigma_i'],
	\rho'_{\leq l+1}).
	\]
	Moreover, we have 
	$\inf_{\sigma_i' \in \frakS}
	\cost_i(\pi[i \rightarrow \sigma_i'], \rho'_{\leq l+1})
	= \inf_{\sigma_i' \in \frakS}
	\cost_i(\pi[i \rightarrow \sigma_i'], \rho'(l+1))$
	since memoryless strategies suffice 
	to minimize the cost~\cite{KBBEGRZ-tcs08}.
	We then have
	\[
	\inf_{\sigma_i' \in \frakS} \cost_i(\pi[i \rightarrow \sigma_i'],
	\rho'(l+1)) \leq \sup_{\sigma_{-i} \in \frakS^{n-1}}
	\inf_{\sigma_i \in \frakS} \cost_i((\sigma_{-i},\sigma_i),
	\rho'(l+1)).
	\]
	We obtain the required inequality
	\begin{xalignat*}1
		\cost_i(\rho_{\geq l}) &\leq \cost_i(\rho(l),\rho'(l+1)) +
		\sup_{\sigma_{-i} \in \frakS^{n-1}}
		\inf_{\sigma_i \in \frakS}
		\cost_i((\sigma_{-i},\sigma_i), \rho'(l+1))\\
		&\leq    \cost_i(\rho(l),c) + \val_{i,c}.
	\end{xalignat*}
	
	Conversely, consider a path~$\rho$ that satisfies the condition. We are going to construct
	a Nash equilibrium having outcome~$\rho$. The idea is that players will follow~$\rho$,
	and if some player~$i$ deviates, then the coalition~$-i$ will apply a joint strategy to maximize the
	cost of Player~$i$, thus achieving at least~$\val_{i,c}$, where~$c$ is the first configuration where deviation
	is detected.
	
	Let us define the punishment function $\calP_\rho
	\colon\Paths(\tuple{\calA,n})\rightarrow \set{n}\cup \{\bot\}$
	which keeps track of the deviating players and the step where such
	a player has deviated.  For path~$h'=h(c,w,c')$, we write
	\[
	\calP_\rho(h') =
	\left\{\!\!
	\begin{array}{l@{}l}
	\bot & \text{ if } h' \prefix \rho,\\
	i & \text{ if } h \prefix \rho, h(c,w,c') \not \prefix \rho, \text{ and }i \in \set{n} \text{ min. s.t. } c'(i) \neq \rho(|h|+1)(i),\\
	\calP_\rho(h) & \text{ otherwise}.
	\end{array}
	\right.
	\]
	Intuitively, $\bot$~means that no players have deviated from~$\rho$ in the current path.
	If~$\calP_\pi(h)=j$, then Player~$j$ was among the first players to deviate from~$\rho$ in the path~$h$;
	so for some~$l$, $h(l)(j) = \rho(l)(j)$ but $h(l+1)(j)\neq \rho(l+1)(j)$.
	Notice that if several players deviate at the same step, there are no conditions to be checked, and the strategy
	can be chosen arbitrarily.
	For~each configuration~$c$ and coalition~$-i$, let~$\sigma_{-i,c}$ be the strategy of coalition~$-i$ maximizing
	the cost of Player~$i$ from configuration~$c$; thus achieving at least~$\val_{i,c}$.
	Player~$j$'s strategy in this coalition, for~$j \neq i$, is denoted~$\sigma_{-i,c,j}$.
	For path~$h'=h(c,w,c')$, define
	\[
	\tau_i(h') =
	\left\{
	\begin{array}{ll}
	(c'(i),m(i),c''(i)) & \text{if } \calP_\rho(h') = \bot,
	\rho(|h'|+1) = (c',w',c''), \\& \text{ and }
	m \in E^n \text{ is such that } T(c',m) = (w',c''),\\
	\text{arbitrary} & \text{if } \calP_\rho(h')= i,\\
	\sigma_{-j,c,i}(h') & \text{if } \calP_\rho(h')= j \text{ for some } j\neq i.
	\end{array}
	\right.
	\]
	The first case ensures that the outcome of the
	profile~$(\tau_i)_{i \in \set{n}}$ is~$\rho$.  The third case
	means that Player~$i$ follows the coalition
	strategy~$\sigma_{-j,c}$ after Player~$j$ has deviated to
	configuration~$c$. The second case corresponds to the case where
	Player~$i$ has deviated: the precise definition of this part of
	the strategy is irrelevant.
	
	Let us show that this profile is indeed a Nash equilibrium. Consider any player $j \in \set{n}$ and
	any strategy~$\tau_j'$. Let~$\rho'$ denote the outcome of~$(\tau_{-j},\tau_j')$, and
	$l$ the index of the last configuration where $\rho$ and~$\rho'$ are identical.
	We have
	\begin{xalignat*}1
		\cost_j((\tau_{-j},\tau_j')) &= \cost_j(\rho_{\leq l}) + \cost_j(\rho(l),\rho'(l+1)) + \cost_j((\tau_{-j},\tau_j), \rho'_{\leq l+1})\\
		&\geq  \cost_j(\rho_{\leq l}) + \cost_j(\rho(l),\rho'(l+1)) + \val_{j,\rho'(l+1)(j)}\\
		&\geq \cost_j((\tau_i)_{i \in \set{n}}),
	\end{xalignat*}
	where the second line follows from the fact that the coalition switches to a strategies ensuring a cost of at least
	$\val_{j, \rho'(l)(j)}$ at step~$l$; and the third line is obtained by assumption.
	This shows that~$(\tau_i)_{i \in \set{n}}$ is indeed a Nash equilibrium and concludes the proof.
\end{proof}

\subparagraph*{Algorithm.}
We define a graph that describes the set of outcomes of Nash equilibria by augmenting the 
$n$-weighted configuration graph $\calM=\tuple{C, T}$. 
For~any real vector~${\vec\gamma=(\gamma_i)_{i\in\set n}}$,
we~define the weighted graph~$\negraph = \tuple{C', T'}$
with~$C' = C \times (\set{Y}\cup\{0,\infty\})^n$
where ${Y=|V|\cdot \kappa}$,
 and $T' \subseteq C' \times \mathbb{N} \times C'$;
   remember that all players have a strategy realizing
   cost at most~$Y$ in~$\tuple{\calA,n}$.
The~initial state is~$(c_\src,\infty^n)$.
The~set of transitions~$T'$ is defined as follows:
$((c,b),z,(c',b')) \in T'$ if, and only~if,
there exists $(c,w,c') \in T$, $z = \vec{\gamma}\cdot w$
(where~$\cdot$ is dot product),
and for all $i \in \set{n}$,
\begin{equation}
    \label{eqn:bi}
    b_i'= \min(b_i - w_i, \min_{c'' \in \dev_i(c(i),c'(i))} \cost_i(c,c'') + \val_{i,c''} - w_i).
\end{equation}
Notice that by definition of~$C'$, $b'_i$ must be nonnegative for all~$i\in\set n$,
so there are no transitions
$((c,b),z,(c',b'))$ if the above expression is negative for some~$i$.
Notice also that the size of $\negraph$ is doubly-exponential in that of the
input $\tuple{\calA,n}$, since this is already the case for~$C$,
while~$Y$ is singly-exponential.

Intuitively, for any path~$\rho$ that visits
some state~$(c,b)$ in this graph,
in order for~$\rho$
to be compatible with a Nash equilibrium, each player~$i$ must have cost
no more than~$b_i$ in the rest of the path.
In fact, the second term of the minimum in~\eqref{eqn:bi}
is the least cost Player~$i$ could guarantee by not following $(c,w,c')$ but going to
some other configuration~$c'' \in \dev_i(c,c')$, so the bound~$b_i$ is used to guarantee
that these deviations are not profitable.
The definition of~$b_i'$ in~\eqref{eqn:bi} is the minimum of~$b_i-w_i$ 
and the aforementioned quantity
since we check both the previous bound~$b_i$, updated
with the current cost~$w_i$ (which gives the left term), and the non-profitability of a deviation
at the previous state (which is the right term).
If this minimum becomes negative, this precisely means that at an earlier point in 
the current path, there was a strategy for Player~$i$ which was more profitable than 
the current path regardless of the strategies of other players; so the current path cannot
be the outcome of a Nash equilibrium. This is why the definition of $\negraph$ restricts
the state space to nonnegative values for the $b_i$.

We prove that
computing the cost of a
Nash equilibrium minimizing the $\vec\gamma$-weighted social cost
reduces
to computing a shortest path in~$\negraph$.
In~particular, letting~$\gamma_i=1$ for all $i\in\set n$,
a~$\vec{\gamma}$-minimal Nash equilibrium is a best Nash equilibrium
(minimizing the social cost), while taking $\gamma_i=-1$ for all $i\in\set
  n$, we get a worst Nash equilibrium (maximizing the social cost).%
\begin{restatable}{theorem}{thmminne}\label{thm-minNE}
    For any dynamic NCG~$\tuple{\calA,n}$ and vector~$\vec{\gamma}$,
    the cost of the shortest path from~$(c_\src,\infty^n)$ 
    to some $(c_\tgt,b)$ in~$\negraph$
    is the cost of a $\vec{\gamma}$-minimal
    Nash equilibrium.
\end{restatable}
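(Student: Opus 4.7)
The plan is to establish a bijection between paths in $\negraph$ from $(c_\src,\infty^n)$ to some state $(c_\tgt,b)$ and outcomes of Nash equilibria in $\tuple{\calA,n}$, via the characterization of Lemma~\ref{lemma:ne-charact}. Since every transition of $\negraph$ carries weight $\vec\gamma\cdot w$, where $w$ is the weight of the corresponding transition in $\calM$, the cost of such a path will equal the $\vec\gamma$-weighted social cost of the underlying outcome. A~$\vec\gamma$-minimal NE then corresponds to a shortest such path. Constructing the NE itself from a witnessing path relies on the explicit profile built in the proof of Lemma~\ref{lemma:ne-charact}.

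The key step is to understand what the $b_i$-component tracks. I~would unfold the recursion~\eqref{eqn:bi}: by a short induction on $l$, for a path $\rho$ of length $L$ followed in $\negraph$, the value $b_i^{(l)}$ reached after $l$ transitions equals
\[
  \min_{l' < l}\; \min_{c''\in\dev_i(\rho(l'),\rho(l'+1))} \Bigl(\val_{i,c''} + \cost_i(\rho(l'),c'') - \sum_{k=l'}^{l-1} w_k(i)\Bigr).
\]
Rewriting the NE~condition of Lemma~\ref{lemma:ne-charact} for a path ending at $c_\tgt$ at step~$L$ (so that $\cost_i(\rho_{\geq l'}) = \sum_{k=l'}^{L-1} w_k(i)$) gives exactly that every term in the above minimum at $l=L$ is nonnegative, i.e.,~$b_i^{(L)}\geq 0$ for all~$i\in\set n$.

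I~then have to check that the $\negraph$-level restriction that $b_i \geq 0$ \emph{at every intermediate state} is not spurious. This follows from the recursion: $b_i^{(l+1)}\leq b_i^{(l)} - w_l(i)$, so by backward induction
\[
  b_i^{(l)} \;\geq\; b_i^{(L)} + \sum_{k=l}^{L-1} w_k(i) \;\geq\; 0
\]
whenever $b_i^{(L)}\geq 0$. Hence a path $\rho$ of $\calM$ from $c_\src$ to $c_\tgt$ satisfies the NE condition iff it lifts (uniquely, by determinism of the $b$-update) to a path in $\negraph$ from $(c_\src,\infty^n)$ to some $(c_\tgt,b)$, and conversely every such path in $\negraph$ projects to an NE outcome. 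Combining this bijection with the fact that transition costs agree, the shortest path in $\negraph$ to $\{c_\tgt\}\times\bbN^n$ realizes the minimum $\vec\gamma$-weighted cost among NE outcomes.

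The main obstacle I~expect is the bookkeeping for the induction characterizing $b_i^{(l)}$ and the careful handling of the initial $\infty$-value and of outcomes in which some player never reaches $\tgt$ (handled by the fact that such outcomes have $\cost_i=+\infty$, so any finite-cost NE must eventually reach $c_\tgt$, and no shortest path in $\negraph$ to $(c_\tgt,b)$ can be improved by an outcome that avoids $c_\tgt$). Beyond that, the argument is essentially a translation between the inequalities defining NE outcomes and the invariants maintained by the $b$-components.
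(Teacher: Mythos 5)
Your proposal is correct and follows essentially the same route as the paper: both directions go through the characterization of Lemma~\ref{lemma:ne-charact}, lifting an NE outcome to a path of~$\negraph$ via the recursion~\eqref{eqn:bi} and projecting back. The only difference is presentational — you unroll the recursion into a closed form for $b_i^{(l)}$ and derive both the equivalence with the NE condition and the nonnegativity of intermediate $b$-values from it, where the paper runs a forward and a backward induction maintaining the invariant $\cost_i(\rho_{\geq j})\leq b_j(i)$; the content is the same.
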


\begin{proof}
	We show that for each path of~$\tuple{\calA,n}$ from~$c_\src$ to~$c_\tgt$, 
	there is a path in~$\negraph$ from $(c_\src,\infty^n)$ to some~$(c_\tgt,b)$
	with the same cost, and vice versa.
	
	Consider a Nash equilibrium~$\pi=(\sigma_j)_{j \in \set{n}}$ with 
	outcome~$\rho=(c_j,w_j,c_{j+1})_{1\leq j < l}$.
	We~build a sequence~$b_1,b_2,\ldots$ such that
	$\rho' = ((c_j,b_j),\vec{\gamma}\cdot w_j, (c_{j+1},b_{j+1}))_{1\leq j < l}$ is a path
	of~$\negraph$.
	We~set~$b_1(j) = \infty$ for all~$j \in \set{n}$.
	For~$j\geq 1$, define
	\[
	b_{j+1}(i) = \min\left(b_{j}(i) - w_j(i), 
	\min_{c'' \in \dev_j(c_j(i),c_{j+1}(i))} \cost_i(c_j,c'') + \val_{i,c''} - w_j(i)\right). 
	\]
	We are going to show that for all~$1\leq j \leq l$, 
	$\cost_i(\rho_{\geq j}) \leq b_j$, which shows that~$b_j\geq 0$, and thus~$\rho'$
	is a path of~$\negraph$. 
	
	We show this by induction on~$j$. This is clear for~$j=1$. 
	Assume this holds up to~$j\geq 1$.
	We have, by induction that~$\cost_i(\rho_{\geq j}) \leq b_j(i)$ for all $i \in \set{n}$.
	Moreover, since~$\pi$ is a Nash equilibrium, by Lemma~\ref{lemma:ne-charact},
	\[
	\forall i \in \set{n},
	\cost_i(\rho_{\geq j}) \leq
	\min_{c'' \in \dev_i(\rho(j),\rho(j+1))}
	\val_{i,c''} + \cost_i(\rho(j), c'').
	\]
	Therefore,
	\begin{xalignat*}1
		\cost_i(\rho_{\geq j+1}) &= \cost_i(\rho_{\geq j}) - w_j(i)\\
		& \leq 
		\min(b_j(i) - w_j(i), \min_{c'' \in \dev_i(\rho(j),\rho(j+1))}
		\val_{i,c''} + \cost_i(\rho(j), c'') - w_j(i))
	\end{xalignat*}
	as required,
	and both paths have the same $\vec{\gamma}$-weighted cost.
	
	Consider now a path~$((c_i,b_i),z_i,(c_{i+1},b_{i+1}))_{1\leq i <l}$ in
	$\negraph$. By the definition of~$\negraph$, there exists~$w_1,w_2,\ldots$ such
	that~$\rho =(c_j,w_j,c_{j+1})_{1\leq j<l}$ is a path of~$
	\tuple{\calA,n}$, and~$z_j = \vec{\gamma}\cdot w_j$.
	So it only remains to show that that $\rho$ is the outcome of a  Nash equilibrium.
	We will show that $\rho$ satisfies the criterion of Lemma~\ref{lemma:ne-charact}.
	We show by backwards induction on~$1\leq j \leq l$
	that for all $i \in \set{n}$,
	\begin{enumerate}
		\item $\cost_i(\rho_{\geq j}) \leq b_j(i)$,
		\item $\cost_i(\rho_{\geq j}) \leq \min_{c'' \in \dev_i(\rho(j),c'')} 
		\cost_i(\rho(j),c'') + \val_{i,c''}$.
	\end{enumerate}
	For~$j=l$, we have~$\cost_i(\rho_{\geq l}) = 0$ so this is trivial.
	Assume the property holds down to~$j+1$
	for some~$1\leq j<l$. By~induction hypothesis, we have 
	\[
	\cost_i(\rho_{\geq j+1}) \leq b_{j+1}(i) = \min\left(b_j(i) - w_j(i),
	\min_{c'' \in \dev_i(\rho(j),c'')} \cost_i(\rho(j),c'') + \val_{i,c''}
	- w_j(i)\right).
	\]
	Therefore,
	\[
	\cost_i(\rho_{\geq j}) = \cost_i(\rho_{\geq j+1}) + w_j(i) \leq 
	\min\left(b_j(i),
	\min_{c'' \in \dev_i(\rho(j),c'')} \cost_i(\rho(j),c'') + \val_{i,c''}\right),
	\]
	as required.
	By Lemma~\ref{lemma:ne-charact}, $\rho$ is the outcome of a Nash equilibrium.
\end{proof}

Thanks to Theorem~\ref{thm-minNE}, 
we~can compute the costs of the best and worst~NEs
of~$\tuple{\calA,n}$ in exponential space.
We~can also decide the
existence of an NE with constraints on the costs (both social and
individual), by non-deterministically guessing an outcome and checking
in~$\negraph$ that it is indeed an~NE.
We obtain the following conclusion:
%
\begin{restatable}{corollary}{coroptne}
    \label{corollary:optimal-ne}
    In dynamic NCGs, the constrained Nash-equilibrium problem is in \EXPSPACE.
\end{restatable}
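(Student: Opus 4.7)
The plan is to invoke Theorem~\ref{thm-minNE} and then search $\negraph$ on-the-fly using Savitch's theorem. By the theorem, any Nash equilibrium of $\tuple{\calA,n}$ corresponds to a path of $\negraph$ from $(c_\src,\infty^n)$ to some $(c_\tgt,b)$ with the same $\vec\gamma$-weighted cost, and conversely every such path is the outcome of some NE. The constrained Nash-equilibrium problem therefore reduces to deciding whether $\negraph$ contains a path from the initial state to some target state whose accumulated cost meets the input bound $b$ (and, if~desired, whose individual components meet further bounds).

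Next I would verify the space bookkeeping. A~state $(c,b)\in C'$ has $c\in V^{\set n}$, which takes $n\log|V|$ bits, and $b\in(\set Y\cup\{0,\infty\})^n$, which takes $n\log(Y+2)$ bits; since $n$ is binary-encoded and $Y=|V|\cdot\kappa$ is singly exponential (as~$\log\kappa$ is polynomial in the input), each coordinate of a state of $\negraph$ fits in space exponential in the input, and so does the accumulated cost, which is at most $|\vec\gamma|_\infty\cdot n\cdot Y$. A~step counter sufficient to bound the path length by the doubly-exponential cardinality of $C'$ also fits in exponential space.

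The algorithm then is a standard reachability-by-guessing: start from $(c_\src,\infty^n)$; at each step guess a successor configuration $c'\in V^{\set n}$; compute the per-player cost vector $w$ from the induced edge loads; compute $b'$ via formula~\eqref{eqn:bi}, rejecting if some coordinate becomes negative; add $\vec\gamma\cdot w$ to the running cost, rejecting if this exceeds~$b$; decrement the step counter; and accept upon reaching some $(c_\tgt,b')$. Only the current state, the running cost, and the step counter are stored, all in exponential space, so this is a nondeterministic exponential-space procedure. By~Savitch's theorem, $\NEXPSPACE=\EXPSPACE$, which gives the claimed bound.

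The only step that merits care is the evaluation of~\eqref{eqn:bi}, because each $\val_{i,c''}$ must be recomputed on-the-fly for each $c''\in\dev_i(c,c')$ rather than stored. This~is fine: the discussion preceding Lemma~\ref{lemma:ne-charact} observes that each $\val_{i,c''}$ can be computed in time exponential in the input (via the polynomial-time algorithm of~\cite{KBBEGRZ-tcs08} applied to the 2-player abstract game of exponential size), and exponential time is subsumed by exponential space. Everything else is routine arithmetic on exponential-size binary integers, so the overall space budget is respected.
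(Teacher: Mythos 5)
Your proposal is correct and follows essentially the same route as the paper: both reduce the problem via Theorem~\ref{thm-minNE} to nondeterministically guessing a path of $\negraph$ step by step, keeping only the current state, the accumulated cost and a doubly-exponentially-valued step counter in exponential space, and concluding by Savitch's theorem. Your explicit treatment of the on-the-fly recomputation of the values $\val_{i,c''}$ when evaluating~\eqref{eqn:bi} is a detail the paper leaves implicit under ``computing its successors can be performed in exponential space''.
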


\begin{proof}
	As noted earlier, the number of vertices in $\negraph$ is doubly exponential
	since
        $|C| = |V|^n$ is doubly exponential.
        Storing a configuration and computing its successors can be
        performed in 
        exponential space.
	One~can thus guess a path of size at most the size of the graph and check whether its cost is less than the given bound.
	This can be done using exponential-space counters, and provides us with an \EXPSPACE algorithm.
\end{proof}

%
Note that one can effectively compute
a Nash-equilibrium strategy profile satisfying the constraints
in doubly-exponential time by finding the shortest path of~$\negraph$,
and applying the construction of (the~proof~of) Lemma~\ref{lemma:ne-charact}.

\begin{remark}
  The~exponential complexity is due to the encoding of the number of
  players in binary. If we consider asymmetric NCGs, in which the
  source-target pairs would be given explicitly for all players, the
  size of $\negraph$ would be singly-exponential, and
  the constrained Nash-equilibrium problem 
  would be in \PSPACE.
\end{remark}



\section{Subgame-perfect equilibria}
\label{section:spe}

In this section, we
characterize the outcomes of SPEs and decide the existence of SPEs
with constraints on the social cost.
We~follow the approach of~\cite{BBGRB-concur19},
extending~it to the setting of  concurrent weighted games, which we
need to handle dynamic NCGs.

\subparagraph{Characterization of outcomes of~SPE.}

Consider a dynamic NCG~$\tuple{\calA,n}$, and the associated
configuration graph~$\calM=\tuple{C,T}$. We~partition the set~$C$ of
configurations into~$(X_j)_{0\leq j\leq n}$ such that a
configuration~$c$ is in~$X_j$ if, and only~if, $j=\#\{i\in\set n\mid
c(i)=\tgt\}$. Since~$\tgt$ is a sink state in~$\calA$, if~there is a
transition from some configuration in~$X_j$ to some configuration
in~$X_k$, then $k\geq j$.
We~define $X_{\geq j}=\bigcup_{i\geq j} X_i$,
$\SE_{j}=\{(c,w,c')\in T \mid c\in X_{j}\}$
and $\SE_{\geq j}=\{(c,w,c')\in T \mid c\in X_{\geq j}\}$.

Following~\cite{BBGRB-concur19}, we inductively define a sequence
$(\lambda^{j^*})_{0 \leq j \leq n}$, where each $\lambda^{j^*} =
\tuple{\lambda^{j^*}_i}_{i \in \set{n}}$ is a $n$-tuple of labeling
functions
$\lambda^{j^*}_i\colon \SE_{\geq j} \to \bbN \cup \{-\infty, +\infty\}$.
%
This sequence will be used to characterize outcomes of SPEs through
the notion of \emph{$\lambda$-consistency}:
\begin{definition}\label{def:lambdaconsistent}
  Let~$j\leq n$, and $\lambda=(\lambda_i)_{i\in\set n}$ be a family of
  functions such that $\lambda\colon \SE_{\geq j}\to \bbN
  \cup\{-\infty,+\infty\}$
  Let~$c\in X_{\geq j}$.
  A~finite path~$\rho=(t_k)_{1\leq k<|\rho|}$ from~$c$ ending in~$c_{\tgt}$ is said to be
  $\lambda$-consistent whenever for any~$i\in\set n$ and any~$1\leq
  k<|\rho|$, it~holds $\cost_i(\rho_{\geq k}) \leq
  \lambda_i(t_k)$. We~write $\LambdaSet_\lambda(c)$ for the set of all
  $\lambda$-consistent paths from~$c$.
\end{definition}



We~now define~$\lambda^{j^*}$ for all~$0\leq j\leq n$ in such a way
that, for all~$c\in X_{\geq j}$, $\LambdaSet_{\lambda^{j^*}}(c)$ is the
set of all outcomes of SPEs in the subgame rooted at~$c$.
The~case where $j=n$ is simple: we~have
$X_{\geq n} = \{c_\tgt\}$ and $\SE_{\geq n} = \{(c_\tgt,0^n,c_\tgt)\}$;
there~is a single path, which obviously is the outcome of an~SPE since
no deviations are possible. For~all~$i\in\set n$, we~let
$\lambda^{n^*}_i(c_\tgt,0^n,c_\tgt) =0$.



Now, fix~$j<n$, assuming that $\lambda^{(j+1)^*}$ has been defined.
In~order to define~$\lambda^{j^*}$, we~introduce an intermediary sequence
$(\mu^k_i)_{k\geq 0,i\in\set n}$, 
with $\mu^k_i\colon \SE_{\geq j} \to \bbN\cup\{-\infty, +\infty\}$,
of which $(\lambda^{j^*}_i)_{i\in\set n}$ will be the limit. 

Functions~$\mu^k_i$ mainly operate on~$\SE_j=\SE_{\geq j}\setminus \SE_{\geq j+1}$: 
for~any~$\mathbf e\in \SE_{\geq j+1}$, we~let $\mu^k_i(\mathbf e)=\lambda^{(j+1)^*}_i(\mathbf e)$.
Now, for~$\mathbf e=(c,w,c')\in \SE_j$, $\mu^k_i(e)$ is defined inductively as follows:
\begin{itemize}
\item $\mu_i^0(\mathbf e)=0$ if $c(i)=\tgt$, and $\mu_i^0(\mathbf e)=+\infty$ otherwise;
\item for~$k>0$, 
  $\mu^k$ is defined from~$\mu^{k-1}$ following three cases:
   if~$c(i)=\tgt$, then $\mu_i^k(\mathbf e)=0$;
   if $\LambdaSet_{\mu^{k-1}}(c')=\emptyset$ for some~$(c,w',c')\in T$, then $\mu_i^k(\mathbf e)=-\infty$;
   otherwise,
  \[
  \mu_i^k(\mathbf e)=
  \mathop{\min\vphantom{\sup}}\limits_{c''\in\dev_i(c,c')} \ 
  \sup_{\rho\in\LambdaSet_{\mu^{l-1}}(c'')} ( \cost_i(c,c'') + \cost_i(\rho) )
  \]  
\end{itemize}
We~can then prove that
for any~$e\in \SE_{\geq j}$ and any~$k>0$, $\mu^k_i(e)\geq \mu^{k-1}_i(e)$.
It~follows that the sequence~$(\mu^k)_{k\geq 0}$ stabilizes, and we can define
$\lambda^{j^*}$ as its limit. 
Let $\LambdaSet^*=\LambdaSet_{\lambda^{0^*}}$. Then:
\begin{restatable}{theorem}{lambdaconsistent}
	\label{thm:lambdaconsistency}
	A path $\rho$ in $\calG = \tuple{\calA, n}$ is the outcome of
        an SPE if, and only~if, $\rho \in \LambdaSet^*(c_{\src})$.
\end{restatable}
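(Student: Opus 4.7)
The plan is to establish the characterization by downward induction on $j \in \{0, 1, \ldots, n\}$: for every $c \in X_{\geq j}$, the set $\LambdaSet_{\lambda^{j^*}}(c)$ coincides with the set of outcomes of SPEs in the subgame rooted at $c$. The theorem then follows by specializing to $j=0$ and $c = c_\src$. A preliminary step is to show that, for each fixed $j < n$, the definition of $\lambda^{j^*}$ as the limit of the sequence $(\mu^k)_{k \geq 0}$ is well-posed: the operator $\mu^{k-1} \mapsto \mu^k$ is monotone (enlarging $\mu^{k-1}$ enlarges every $\LambdaSet_{\mu^{k-1}}(c'')$ and thus the supremum in the recursive formula), and the values are uniformly bounded above by what any player can guarantee in a single subgame (e.g.\ by $|V| \cdot \kappa$), so the chain stabilizes.

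The base case $j = n$ is immediate, since $X_{\geq n} = \{c_\tgt\}$ and the unique play is the self-loop of cost $0$, which is trivially an SPE. For the inductive step with $j < n$, I fix the induction hypothesis that $\lambda^{(j+1)^*}$ characterizes SPE outcomes on $X_{\geq j+1}$. On edges in $\SE_{\geq j+1}$, $\lambda^{j^*}$ agrees with $\lambda^{(j+1)^*}$ by construction, so the claim is inherited. The remaining task is to analyze a transition $t_k = (c_k, w_k, c_{k+1}) \in \SE_j$ along a play starting in $X_j$.

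For the $(\Rightarrow)$ direction, let $\rho$ be an SPE outcome from some $c \in X_j$, and fix a step $k$ and a player $i$ with $c_k(i) \neq \tgt$. Because $\rho$ is an SPE outcome, the residual profile after $\rho_{\leq k}$ is still an NE, so Player~$i$ has no profitable unilateral deviation. A one-step deviation leads to some $c'' \in \dev_i(c_k, c_{k+1}) \subseteq X_{\geq j}$; the continuation of the strategy profile is an SPE from $c''$, so by the induction hypothesis its outcome lies in $\LambdaSet_{\lambda^{j^*}}(c'')$, and the coalition picks it so as to maximize $i$'s cost. Player~$i$'s best deviation cost is therefore $\min_{c''} \sup_{\rho'' \in \LambdaSet_{\lambda^{j^*}}(c'')} (\cost_i(c_k, c'') + \cost_i(\rho'')) = \lambda_i^{j^*}(t_k)$ by the fixed-point equation, and hence $\cost_i(\rho_{\geq k}) \leq \lambda_i^{j^*}(t_k)$, so $\rho \in \LambdaSet^*(c)$.

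For the $(\Leftarrow)$ direction, given $\rho \in \LambdaSet_{\lambda^{j^*}}(c)$, I would construct an SPE with outcome $\rho$ as follows: on-path, players follow $\rho$; off-path, the first player $i$ detected to deviate to configuration $c''$ is punished by the coalition switching to an SPE from $c''$ whose outcome maximizes $\cost_i$, whose existence is given by the fixed-point definition of $\lambda^{j^*}$ together with the induction hypothesis. By $\lambda$-consistency, $\cost_i(\rho_{\geq k}) \leq \lambda_i^{j^*}(t_k)$, so no unilateral deviation is profitable, and subgame-perfection is verified inductively at each deviation point by observing that the punishment phase is itself an SPE of the appropriate sub-subgame. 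The main obstacle will be this $(\Leftarrow)$ direction, specifically assembling the punishment phases into a globally consistent strategy profile that remains an SPE after any further (possibly nested) deviation, and correctly handling the corner cases $\lambda^{j^*} = \pm \infty$ (empty or trivially full consistent sets) together with simultaneous deviations by several players at one step. The monotone fixed-point definition of $\lambda^{j^*}$ is tailored precisely so that these constructions go through layer by layer.
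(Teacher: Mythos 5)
Your overall architecture (downward induction on $j$, a deviation-based upper bound for the forward direction, a punishment construction for the converse) matches the paper's. But your $(\Rightarrow)$ direction has a genuine circularity. When Player~$i$ deviates at a step $t_k=(c_k,w_k,c_{k+1})\in\SE_j$, the resulting configuration $c''\in\dev_i(c_k,c_{k+1})$ in general still lies in $X_j$, not in $X_{\geq j+1}$. To bound the deviation's payoff by $\sup_{\rho''\in\LambdaSet_{\lambda^{j^*}}(c'')}(\cost_i(c_k,c'')+\cost_i(\rho''))$ you invoke ``the induction hypothesis'' that the SPE continuation from $c''$ lies in $\LambdaSet_{\lambda^{j^*}}(c'')$ --- but that is exactly the level-$j$ statement you are in the middle of proving, so the outer induction on $j$ gives you nothing here. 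The paper breaks this circularity with a second, \emph{inner} induction on $k$ over the approximants $\mu^k$: it shows that every SPE outcome from a history ending in $X_j$ is $\mu^k$-consistent for every $k$, where the base case $\mu^0$ is trivial ($+\infty$ on $\SE_j$, and $\lambda^{(j+1)^*}$ on $\SE_{\geq j+1}$, covered by the outer hypothesis), and the step for $\mu^k$ only requires $\mu^{k-1}$-consistency of the deviation continuations. You mention the $\mu^k$ sequence when discussing well-posedness of the limit, but you never use it in the inductive argument, and without it the argument does not close. (A minor related point: the appendix proves $\mu^k_i(\mathbf e)\leq\mu^{k-1}_i(\mathbf e)$, a \emph{decreasing} chain starting from $+\infty$; stabilization comes from the bound $\mu^{|V|}_i(\mathbf e)\leq|V|\cdot\kappa$ together with integrality, not from an upper bound on an increasing chain as you suggest.)

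Your $(\Leftarrow)$ direction is the right construction, but you explicitly defer its hard parts. Two ingredients from the paper would let you finish it: (i)~the equivalence between SPEs and very-weak SPEs (a one-step-deviation principle), which reduces the verification to first-shot deviations only and is what makes checking against $\lambda^{j^*}_i(t_k)$ --- itself defined via one-step deviations --- sufficient; and (ii)~the finiteness of $\lambda^{*}_i(\mathbf e)$ on all edges, needed to justify that the coalition's worst-for-$i$ continuation (the $\arg\max$ over $\LambdaSet^*(c'')$) is attained. Simultaneous deviations by two or more players are unproblematic and need no special care beyond picking an arbitrary element of $\LambdaSet^*(c'')$, since the NE condition only constrains unilateral deviations.
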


\subparagraph*{Algorithm.}

It~remains to compute the sequence~$(\mu^k)_{k\geq 0}$ (which will
include checking non-emptiness of the corresponding $\LambdaSet$-sets),
and to bound the stabilization time.
To this aim, with any
family~$\mu=(\mu_i)_{i\in\set n}$ of functions as above
and any configuration~$c$, we~associate an infinite-state \emph{counter graph}
$\bbC[\mu,c]=\tuple{C',T'}$
to capture all $\mu$-consistent paths from~$c$:
\begin{itemize}
\item the~set of vertices
  is~$C'= C\times (\bbN\cup\{+\infty\})^{\set n}$;
\item $T'$~contains all edges
  $((d,b),w,(d',b'))$ for which $(d,w,d')$ is an edge of~$\calM$ and for
  all~$i\in\set n$, $b'(i)=0$ if~$d(i)=\tgt$, and $b'_i=\min\{b_i-w_i,
  \mu_i(d,w,d')-w_i\}$ otherwise (provided that~$b'_i\geq 0$ for
  all~$i$, in~order for~$(d',b')$ to be an edge of~$\bbC[\mu,c]$).
\end{itemize}
With the initial configuration~$c$, we~associate $b^c$ such that
$b_i^c=0$ if~$c(i)=\tgt$ and $b_i^c=+\infty$ otherwise: this
configuration imposes no constraint, since no edges has been
taken~yet.
Intuitively, in configuration~$(d,b)$, $b$~is used to enforce
$\mu$-consistency: each edge taken along a path imposes a constraint
on the cost of the players for the rest of the path; this constraint
is added to the constraints of the earlier edges, and propagated along
the path. We~can prove that the number of reachable states
from~$(c,b^c)$  in~$\bbC[\mu,c]$,
which we denote with~$|C'|_r$,
is bounded
by $|C|\cdot (n\cdot |V|^n\cdot \kappa)^{|V|}$.

Computing~$\lambda^{j^*}$ from~$\lambda^{(j+1)^*}$ amounts to
inductively computing~$(\mu^{k+1}_i)_{i\in\set n}$ from~$\mu^k$ for
edges~$e=(c,w,c')\in \SE_j$, until stabilization.
Since~$\bbC[\mu^{k},d]$ can be proved to capture $\mu^{k}$-consistent
paths from~$d$, the~computation mainly amounts to checking the
existence of paths in such counter graphs, which can be performed in
doubly-exponential space.  Stabilization can be shown to occur within
$|V|(1+n\cdot\kappa\cdot |E|^n)$ steps.
In~the end:
\begin{restatable}{theorem}{existsSPE}
	\label{thm: existsSPE}
	The existence of SPEs in a dynamic NCG can be decided in 2\EXPSPACE.
\end{restatable}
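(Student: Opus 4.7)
The plan is to combine Theorem~\ref{thm:lambdaconsistency} with an explicit, bottom-up computation of the labels~$\lambda^{j^*}$. Once $\lambda^{0^*}$ is available, deciding SPE existence amounts to checking nonemptiness of~$\LambdaSet^*(c_\src)$, which reduces to reachability of some $(c_\tgt, b)$ from $(c_\src, b^{c_\src})$ in the counter graph~$\bbC[\lambda^{0^*}, c_\src]$.

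I~would compute $\lambda^{j^*}$ by backward induction on~$j$, starting from the trivial case~$j=n$ where $\lambda^{n^*}_i(c_\tgt, 0^n, c_\tgt) = 0$. For the inductive step, assuming $\lambda^{(j+1)^*}$ is known, I~iterate the update $\mu^k \mapsto \mu^{k+1}$ up to the stabilization bound $|V|(1+n\cdot\kappa\cdot|E|^n)$. Each update requires, for every edge $(c,w,c') \in \SE_j$, every player~$i$, and every $c'' \in \dev_i(c,c')$, two graph computations on~$\bbC[\mu^k, c'']$: (i)~reachability of a $(c_\tgt,\cdot)$-vertex, witnessing $\LambdaSet_{\mu^k}(c'') \neq \emptyset$, and (ii)~the longest-$\cost_i$ path to~$(c_\tgt,\cdot)$, realizing $\sup_{\rho\in\LambdaSet_{\mu^k}(c'')} \cost_i(\rho)$. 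Both are well-defined because the counter components in~$\bbC$ strictly decrease along positive-cost transitions, ruling out positive-cost cycles, so cost-maximizing consistent paths are finite and bounded by the currently stored $\mu^k$-values.

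For the complexity bookkeeping, each $\mu^k$-value is bounded by a doubly-exponential quantity (at most the maximal accumulated cost in~$\bbC$) and thus fits in exponential space, and the graph $\bbC[\mu^k,c]$ has at most $|C|\cdot(n\cdot|V|^n\cdot\kappa)^{|V|}$ reachable vertices, each describable in exponential space. Reachability and longest-cost-path in such a graph can be decided in~2\EXPSPACE by an on-the-fly Savitch-style procedure that enumerates exp-size vertex descriptors while maintaining exp-size distance counters, never materialising the graph. The inner fixpoint runs for doubly-exponentially many iterations and the outer induction performs $n+1$ steps, so sequentially composing these doubly-exponential-space subroutines preserves the~2\EXPSPACE bound. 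The principal difficulty is exactly this uniform space discipline: both the counter graph and the full $\mu^k$ table have doubly-exponential size, so each auxiliary query must be answered by regenerating the relevant vertex descriptors and $\mu^k$-entries on demand from $\calA$, $n$, and the already-computed~$\lambda^{(j+1)^*}$, rather than by storing any of these objects explicitly; once that discipline is in place, the 2\EXPSPACE upper bound follows.
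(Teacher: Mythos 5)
Your overall architecture matches the paper's: backward induction on the regions~$X_j$, an inner fixpoint on the~$\mu^k$ run up to the stabilization bound, counter-graph queries answered on the fly within doubly-exponential space, and a final reachability check in~$\bbC[\lambda^{0^*},c_\src]$. But there is a genuine gap in the step where you compute $\sup_{\rho\in\LambdaSet_{\mu^k}(c'')}\cost_i(\rho)$ as a ``longest-$\cost_i$ path''. You justify that this supremum is always attained by a finite path because ``the counter components in~$\bbC$ strictly decrease along positive-cost transitions, ruling out positive-cost cycles''. This is false: the counters live in $\bbN\cup\{+\infty\}$, the update is $b'_i=\min\{b_i-w_i,\;\mu_i(d,w,d')-w_i\}$, and $+\infty-w_i=+\infty$. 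Since $b^c_i=+\infty$ for every player not yet at~$\tgt$, and the labels $\mu^k_i$ are themselves $+\infty$ on many edges (on \emph{all} edges of $\SE_j$ with $c(i)\neq\tgt$ when $k=0$), the counter graph does contain cycles along which Player~$i$ pays positive cost while their counter stays~$+\infty$. Pumping such a cycle yields $\mu^k$-consistent paths of unbounded cost for Player~$i$, so the supremum genuinely equals~$+\infty$ --- which is precisely why the codomain of~$\mu^k_i$ includes~$+\infty$. Your procedure would instead return some finite value, making the computed labels too small, shrinking the sets~$\LambdaSet$, and possibly causing the algorithm to wrongly report that no SPE exists.

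The missing ingredient is the paper's Lemma~\ref{lemma: charac-supremuminifinity}: the supremum is $+\infty$ if, and only~if, $\bbC[\mu^k,c'']$ admits a valid path of the form $h\cdot\beta\cdot h'$ where $\beta$ is a cycle on which Player~$i$'s counter stays positive and Player~$i$ incurs positive cost. The algorithm must therefore run a separate lasso-detection query (guess the cycle's base vertex, a cycle through~it, and the two connecting segments, all within the doubly-exponential reachable part of the counter graph) \emph{before} attempting the finite-value computation; only when that query fails is the supremum finite, and only then does a bound on its value hold --- and that bound is the paper's Lemma~\ref{lemma:BoundforMaxb}, proved by a dedicated induction on~$k$ exploiting that some counter must strictly decrease every $|C|$ steps, not by the absence of cycles. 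With this extra case inserted into the update of~$\mu^k_i$, the rest of your space accounting goes through and yields the claimed 2\EXPSPACE bound.
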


\begin{remark}
  Again, our algorithm is not specific to the symmetric setting of our
  dynamic NCGs; in an asymmetric context, where the number of players
  would be given in unary, our algorithm would run in~\EXPSPACE.
\end{remark}

\subparagraph{Existence of constrained SPEs.}
The algorithm above can be extended to compute the cost of the best
and worst SPEs, and to include constraints on the costs (both social
and individual) of the SPEs we are looking~for.

First, for any vector $\vec\gamma = (\gamma_i)_{i \in
  \set{n}}$, we define the \emph{$\vec\gamma$-counter graph}
$\bbC[\lambda^*, c_\src, \vec\gamma]$,
which is obtained from $\bbC[\lambda^*, c_\src]$ by replacing the cost
vector~$w$ on the edges with $\vec\gamma\cdot w$.

We can then compute the cost of a $\vec\gamma$-minimal SPE by checking
existence of a path from $(c_\src, b^{c_\src})$ to $(c_\tgt, b)$ in
$\bbC[\lambda^*, c_\src, \vec\gamma]$, which minimizes the
$\vec\gamma$-weighted social cost.  Again, letting $\gamma_i =
1$ for all $i \in \set{n}$, a $\vec\gamma$-minimal SPE is a best SPE,
while taking $\gamma_i = -1$ for all $i \in \set{n}$, we get a
worst~SPE (maximizing social~cost).

We~can also solve the constrained-SPE-existence problem by
non-deterministically guessing an outcome and checking that it is a
path in~$\bbC[\lambda^{0^*},c_\src]$ and that it satisfies the
constraints. In~each case, we~can inductively build a strategy profile
witnessing the fact that the selected path is the outcome of an SPE.

\section{Conclusion and future works}
\label{sec:conclu}

In this paper, we introduced dynamic network congestion games, and
studied the complexity of various decision and computation problems
concerning social optima, Nash equilibria and subgame perfect equilibria.
Our~algorithms allow us to compute the price of anarchy and price of
stability for those games.

As future work,
%
our objective is to compute how
the price of anarchy and the price of stability (and~costs of equilibria and social optimum) evolve
when the number of players, seen as a parameter,~grows.

\bibliographystyle{plain}
\bibliography{newbib}

\newpage
\setcounter{theorem}{0}
\def\theHtheorem{\theHsection.\arabic{theorem}}
\def\theHlemma{\theHsection.\arabic{lemma}}
\def\theHproposition{\theHsection.\arabic{proposition}}
\def\theHcorollary{\theHsection.\arabic{corollary}}
\def\thetheorem{\thesection.\arabic{theorem}}
\def\thelemma{\thesection.\arabic{lemma}}
\def\theproposition{\thesection.\arabic{proposition}}
\def\thecorollary{\thesection.\arabic{corollary}}
\appendix
\section{Proofs of Section~\ref{section:socopt}}
\label{app-socopt}

\thmsocopt*

\begin{proof}
    A socially-optimal strategy can be obtained by computing a
    shortest path in~$\calP$ from~$\bar c_\src$ to~$\bar c_\tgt$.
    The~graph~$\calP$ has $O(n^{|V|})$ states, and shortest paths can be computed
    in polynomial time. We~can thus compute the (abstract) outcome of
    a socially-optimal strategy profile in exponential time. 
    We~easily derive a socially-optimal (blind) strategy profile.
    
    An abstract configuration can be stored in space $O(|V|
    \cdot \log(n))$, and deciding whether there is an edge of a given weight
    between two abstract configurations can be checked in~\PSPACE.
    Since shortest paths in~$\calP$ have length at most~$n\cdot|V|$,
    a~non-deterministic polynomial-space
    algorithm can guess a path step-by-step in~$\calP$,
    and check that it~reaches the target configuration with
    social-cost at~most~$b$.
    
    By Savitch's theorem, this proves \PSPACE membership.

%


    To prove the \NP-hardness we provide a reduction from the
    \textsf{Partition} problem which, given a family of integer
    numbers~$(r_i)_{1\leq i\leq m}$ that sum up to~$2S$, asks
    whether the family can be split into two subfamilies
    that both sum up to~$S$.  Consider an instance~$L$ of \textsf{Partition},
    and let
    $M = 14S + 12m + 1$, and $n=2S+2m$.  For~any $r\in\bbN$, we~define
    threshold cost function $\sfT_r$ as $\sfT_r(i)=1$ if~$i\leq r$,
    and $\sfT_r(i)=M$ otherwise
We construct a dynamic NCG $\calA_L$ with
states~$V=\{\src,\tgt,d_1,d_2\} \cup \{s_i,a_{i,1},a_{i,2}\}_{i\in
  \set m}$.  The transitions are defined as follows:
\begin{itemize}
  \item for each~$i \in \set{m}$, there is a transition from~$\src$
    to~$s_i$ which cost function $\sfT_{r_i+2}$;
  \item from each~$s_i$, there is a transition to~$a_{i,1}$ and
    another one to~$a_{i,2}$ with the same cost function that assigns
    cost $2$ for one player, and~$4$ for more;
  \item for~$i \in \set{m}$ and $j \in \{1,2\}$, there is a transition
    from~$a_{i,j}$ to~$d_j$ with constant cost~$1$, and a transition
    to~$\tgt$ with cost function $\sfT_{1}$;
  \item from each~$d_j$, there is a transition to~$\tgt$ with
    cost function $\sfT_S$.
\end{itemize}
The construction is illustrated in Figure~\ref{fig:reduction-SO}.

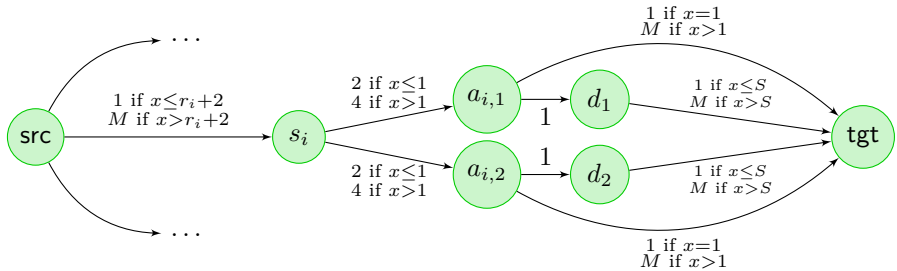
\begin{figure}[ht]
\begin{tikzpicture}[node distance=2.5cm]
    \begin{scope}
      \draw (0,0) node[rond,vert] (s) {\src};
      \draw (3.5,0) node[rond,vert] (s_i)  {$s_i$};
      \draw (2,1.3)  node (sitop) {$\ldots$};
      \draw (2,-1.3) node (sibot) {$\ldots$};
      \draw (6,0.5) node[rond,vert] (ai1) {$a_{i,1}$};
      \draw (6,-0.5) node[rond,vert] (ai2) {$a_{i,2}$};
      \draw (7.5,0.5) node[rond,vert] (d1) {$d_{1}$};
      \draw (7.5,-0.5) node[rond,vert] (d2) {$d_{2}$};

      \draw (11,0) node[rond,vert] (t) {\tgt};
      \path[-latex'] (s) edge node[above]
        { $\substack{1 \text{ if } x \leq r_i+2\\ M \text{ if } x>r_i+2}$}(s_i)
        (s_i) edge node[above]{ $\substack{2 \text{ if } x \leq 1\\ 4 \text{ if } x>1}$} (ai1)
        (s_i) edge node[below]{ $\substack{2 \text{ if } x \leq 1\\ 4 \text{ if } x>1}$} (ai2)
        (ai1) edge node[below]{1} (d1)
        (ai2) edge node[above]{1} (d2)
        (d1) edge node[above]{\footnotesize $\substack{1 \text{ if } x \leq S\\ M \text{ if } x>S}$} (t)
        (d2) edge node[below]{\footnotesize $\substack{1 \text{ if } x \leq S\\ M \text{ if } x>S}$} (t)
        (s) edge[bend left] (sitop)
        (s) edge[bend right] (sibot);
      \draw[-latex'] (ai1) edge[bend left=35] node[above]{ $\substack{1 \text{ if } x =1\\ M \text{ if } x>1}$} (t);
      \draw[-latex'] (ai2) edge[bend right=35] node[below]{ $\substack{1 \text{ if } x =1\\ M \text{ if } x>1}$} (t);
    \end{scope}
  \end{tikzpicture}
  \caption{The reduction of Theorem~\ref{thm:socopt}. 
  The figure shows the edges involving states $s_i,a_{i,1},a_{i,2}$. The full graph is obtained
  by reproducing the shown construction for all~$ i \in \set m$.
  }
  \label{fig:reduction-SO}
\end{figure}

We prove that there exists a strategy profile in $\tuple{\calA,n}$
with
social cost less than~$M$ if, and only~if,
$L$~is a positive instance of the partition problem.
%
Assume that the instance~$L$ has a solution $\tuple{L_1,L_2}$.  We describe
the behaviour of the strategy profile in~$\calA_L$ achieving social
cost less than~$M$.
\begin{itemize}
\item Initially, all players are at~$\src$. During the first step,
  for each~$i \in
  \set{m}$, $(r_i+2)$ players move to~$s_i$.  This incurs a total
  cost of $\sum_{i \in \set{m}}{r_i+2} = \sum_{i \in \set{m}} r_i
  + 2 \times m = 2S + 2m$;
    \item Consider~$i \in \set m$, and let~$j \in \{1,2\}$ be such
      that~$r_i \in L_j$.  From state~$s_i$ we let $r_i+1$ players
      move to $a_{i,j}$, and one player to $a_{i,3-j}$.  The $r_i+1$
      players each pay a cost of~$4$, and other player a cost of~$2$.
      Overall, we get a total cost of $\sum_{i \in \set{m}} ((r_i + 1)
      \times 4 + 1 \times 2) = 8S + 4m + 2m = 8S + 6m$;
    \item From each $a_{i,j}$, one player takes the transition
          directly to~$\tgt$.  All other players move to~$d_{j}$.
          Overall, $2m$ players directly move to $\tgt$, while
          all other players, that are $2S$ many of them, move to
          $\{d_1,d_2\}$.  The~total cost of this step is therefore
          $2S + 2m \times 2 = 2S + 4m$;
    \item From each $d_j$, players necessarily move directly to
      $\tgt$.  For each~$i \in \set{m}$ and $j\in\{1,2\}$, exactly
      $r_i$ players arrive to~$d_j$ from $a_{i,j}$.  Thus there are
      exactly $S$ players in each~$d_j$, so~that  the total cost of this step
      is~$2S$.
\end{itemize}
Summing over all steps, the social cost of this strategy profile is
$14S+12m < M$.

\smallskip Now for the opposite direction, consider any strategy
profile~$\pi$ in~$\tuple{\calA,n}$
with social cost less than~$M$. We are going to
construct a partition of~$L$.
First we divide the path taken by players from $\src$ to~$\tgt$ in
three phases: \textbf{Phase~1:} from~$\src$ to some~$s_i$;
\textbf{Phase~2:} from some~$s_i$ to some~$a_{i,j}$; \textbf{Phase~3:}
from $a_{i,j}$ to~$\tgt$, either directly or via~$d_j$.
We now analyze the cost incurred by players under profile~$\pi$ in all
three phases.

In phase~$1$, each player pays either cost $1$ or cost $M$. By assumption on~$\pi$, all players
must have a cost of~$1$, and this is only possible if $r_i+2$ players move to~$s_i$, for each~$i \in \set{m}$.
The total cost of this phase is thus $2S+2m$.

In phase~$3$, a player pays a cost of either $2$, $M$ or $M+1$.  For
the latter two cases are not possible by assumption.  So phase~$3$'s
contribution to the social cost has to be $2 \times n = 4S + 4m$.
Then, the social cost of phase~$2$ is strictly less than
$(14S + 12m +1) - (2S +2m) - (4S+ 4m) = 8S + 6m +1$.  By~phase~$1$,
there are $r_i+2$ players at each $s_i$, so minimum contribution to
the social cost for these $r_i+2$ players is
$2 \times 1 + 4 \times (r_i+2) = 4r_i + 6$.  Summing
overall~$i \in \set{m}$, this yields
$\sum_{r_i \in L} (4r_i + 6) = 8S + 6m$.  Thus, the social cost
of phase~$2$ is $8S+6m$.  But this cost is achieved only when from
each $s_i$, one transition is taken by one player and the
other one by $r_i+1$ players.
    
Therefore, each $a_{i,j}$ contains either one player or $(r_i+1)$
players under~$\pi$.  Given the cost functions, at most~$2S$ players
can move to~$\tgt$ via~$d_1$ or~$d_2$ (otherwise players would get a
cost of~$M$).  So $2m$ remaining players must take the transition from
some~$a_{i,j}$ to~$\tgt$. But at each~$a_{i,j}$, there is a unique
player that takes this transition due to the cost function.
If~$a_{i,j}$ contains $r_i+1$ players, $r_i$ players take the path via
$d_j$, each paying a cost of~$2$. As~there are in total $2S$ players
taking the route via some~$d_j$, and~each~$d_j$ can contain at most
$S$ players because of the cost functions, there must be exactly $S$
many players which arrive at each~$d_j$ under~$\mathcal{P}$.  That is,
for each~$i,j$, there are exactly $r_i$ many players coming from some
$a_{i,j}$ to~$d_j$, and their total is~$S$.  This defines the required
partition of~$L$.
\end{proof}

\lemmashortshortest*

\begin{proof}
  We begin with proving that, for any path~$\rho$ in the
  multi-weighted graph~$\calM$ in which no player reaches the target
  state during the first $n$~steps, there is a path~$\rho'$ with social cost
  at most the social cost of~$\rho$ in which at least one player reaches the
  target state. This obviously extends to~$\calP$.

  To this aim, we consider, for each state~$s$ of~$\calA$, the
  shortest path from~$s$ to~$\tgt$ for a single player (i.e., when
  taking for each edge the value of the cost function when a single
  player takes that edge). Clearly enough, this is the best a player
  can hope from state~$s$. Now, we label all state~$s$ with the
  number~$\opt(s)$ of edges of the shortest (in terms of the number of
  edges) of the shortest (in terms of its single-player cost)
  paths. Then those values are at most $|V|-1$, and $\opt(s)=0$ if,
  and only~if, $s=\tgt$.

  Now, consider a path~$\rho$ in~$\calM$, in which we assume no player
  reaches~$\tgt$ during the first $|V|$ steps. We~build a table~$H$ of
  size $n\times |V|$, in which cell~$(i,j)$ contains
  $j+\opt(\rho(j)(i))$. Then for any~$i$, $H(i,1)$ is at most~$|V|$,
  and $H(i,|V|)$ is at least $|V|+1$, since we assume no players have
  reached~$\tgt$ after $|V|$ steps. Hence there is an index~$1\leq
  j_0<|V|$ that contains an element smaller than or equal to~$|V|$,
  and being the largest~so. Pick~$i_0$ such that $H(i_0,j_0)\leq |V|$,
  and consider the path~$\rho'$ in~$\calM$ obtained from~$\rho$ by
  making Player~$i_0$ follow an optimal path from
  $\rho(j_0)(i_0)$. Along this new path~$\rho'$, Player~$i_0$ will
  reach~$\tgt$ in $H(i_0,j_0)$ steps, which is less than or equal
  to~$|V|$.  Moreover, after step~$j_0$, Player~$i_0$ can be sure no
  other player takes the same edges, since this would mean that for
  some player~$i_1$, at some step~$j_1>j_0$, we~have $H(i_1,j_1)\leq
  |V|$. Hence the cost for Player~$i_0$ is at least as good
  along~$\rho'$ as it was along~$\rho$; also, the cost of all other
  players may only have decreased. It~follows that the social cost
  of~$\rho'$ is at least as good as that of~$\rho$.

  \smallskip
  Now, consider a shortest path in~$\calP$, and a corresponding
  path~$\rho$ in~$\calM$. Applying the arguments above, we~can build a
  path~$\rho'$ in~$\calM$ with social cost at least as good as that
  of~$\rho$, and in which at least one player reaches~\tgt. We~can
  then apply the same arguments recursively to the other players, each
  in a slide of at most $|V|$ steps. In~the end, we~get a path with
  minimal social cost and of length at most $n\times|V|$.  
\end{proof}

\section{Proofs of Section~\ref{section:nash}}
\label{app-nash}
\begin{restatable}{lemma}{lemmablindneexist}
    \label{lemma:blind-ne-exists}
    In every dynamic NCG, there exists a blind Nash equilibrium.
\end{restatable}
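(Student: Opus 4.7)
The plan is to show that the dynamic NCG, when restricted to blind strategies, is a potential game in the sense of Monderer--Shapley~\cite{MS-geb96}, with $\psi$ as potential. The first hurdle is that the space of blind strategies is a priori infinite, since a blind strategy is a (possibly infinite) path in $\calA$. I~would first argue, as~sketched in the remark following Theorem~\ref{thm:br-algorithm}, that it suffices to consider blind strategies of bounded length: any player has a blind strategy of cost at most $|V|\cdot \kappa$ (follow a shortest path), so~in a best-response search one never needs blind strategies of individual cost exceeding $|V|\cdot\kappa$; eliminating zero-cost cycles, this bounds the relevant length by $N := |V|\cdot\kappa\cdot|V|^n$. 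On this finite set $\frakB_N$ of blind strategies, the potential $\psi$ is well defined (using $N_\pi \le N$).

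The~core technical step is to verify the potential-game identity: for every blind profile $\pi\in\frakB_N^n$, every player~$k$ and every $\sigma_k'\in\frakB_N$, letting $\pi' = \tuple{\pi_{-k},\sigma_k'}$,
\[
\psi(\pi) - \psi(\pi') = \cost_k(\pi) - \cost_k(\pi').
\]
I~would prove this by a step-by-step comparison as in Rosenthal's argument~\cite{R-IJGT73}. Fix a step~$j$ and an edge~$e$. Only three cases can occur: Player~$k$ uses~$e$ at step~$j$ under both $\pi$ and~$\pi'$ (so $\load_e$ is unchanged and both sides contribute $\edgecost_e(\load_e(\pi,j))$ to $k$'s cost change, i.e.\ zero net change); Player~$k$ uses~$e$ at step~$j$ only under~$\pi$ (the inner sum in~$\psi$ loses its top term $\edgecost_e(\load_e(\pi,j))$, which matches exactly the cost $k$~paid on~$e$ at step~$j$); or the symmetric case where~$k$ uses~$e$ only under~$\pi'$. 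Summing over $j$ and~$e$ and noting that all other players' edge usages are identical in $\pi$ and~$\pi'$ yields the identity.

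The conclusion follows by the standard Monderer--Shapley best-response dynamics on the finite set $\frakB_N^n$: starting from any $\pi_0\in\frakB_N^n$, repeatedly pick a player who can strictly improve their cost and replace their strategy with a best response within $\frakB_N$ (which exists by finiteness, and can be restricted to $\frakB_N$ by the length bound argument above). By the potential identity each such step strictly decreases $\psi$, and since $\psi$ takes values in~$\bbN$, the procedure terminates in finitely many steps. The final profile admits no profitable blind deviation, hence is a blind Nash equilibrium.

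The main obstacle is purely the length-boundedness argument: one must check that restricting to $\frakB_N$ does not artificially rule out profitable deviations. This is handled by observing that any blind deviation $\sigma_k'\notin\frakB_N$ has $\cost_k(\tuple{\pi_{-k},\sigma_k'}) > |V|\cdot\kappa$, whereas the shortest-path blind strategy always achieves cost at most~$|V|\cdot\kappa$, so~$\sigma_k'$ cannot be a best response. Hence the best-response dynamics carried out within $\frakB_N$ yields a genuine blind Nash equilibrium in the full space of blind strategies.
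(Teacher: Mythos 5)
Your core argument is the same as the paper's: restricted to blind strategies, the game is a Rosenthal congestion game over the resources $(e,j)$ (edge, time step), $\psi$ is its exact potential, and the three-case telescoping verification of $\psi(\pi)-\psi(\pi')=\cost_k(\pi)-\cost_k(\pi')$ is exactly what the paper's proof does. That part is fine.

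The step that does not hold as written is the reduction to the finite set $\frakB_N$. You assert that any blind deviation $\sigma_k'\notin\frakB_N$ has $\cost_k(\tuple{\pi_{-k},\sigma_k'})>|V|\cdot\kappa$; but cost functions range over $\bbN$ and may be identically zero (the self-loop on $\tgt$ is one such edge, and nothing forbids others), so a blind strategy can be arbitrarily long while having cost~$0$. Your fallback, ``eliminating zero-cost cycles,'' is also not innocent: removing a cycle from Player~$k$'s path shifts the timing of every subsequent edge, hence changes the loads Player~$k$ meets, so the truncated strategy's cost can \emph{increase} (loitering on a free cycle to let congestion clear is a perfectly sensible deviation). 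The paper avoids this entirely: Lemma~\ref{lemma:br} shows a best response of length at most $N_\pi+|V|$ always exists, because after step $N_\pi$ all other players sit on $\tgt$ and the deviator faces a static solo shortest-path problem (whose optimum is acyclic and hence safe to truncate), and termination of best-response dynamics then follows from the integer-valued potential strictly decreasing at each improving step --- no a priori finite strategy space is needed. Your proof is repaired by replacing your cost-lower-bound claim with this best-response length bound (and noting that only deviations from profiles arising in the dynamics, not arbitrary long strategies, need to be controlled).
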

\begin{proof}
    Consider a dynamic NCG $\tuple{\calA,n}$ with $\calA=\tuple{V,E,\src,\tgt}$, 
    and a blind strategy profile~$\pi = (\sigma_i)_{i \in \set n}$.
    Recall that each blind strategy~$\sigma_i$ prescribes a path independently of the other players moves.
    Let us show that~$\psi$ is indeed a potential function,
    that is, for each profile~$\pi'$ obtained from~$\pi$ by modifying some player $k$'s strategy,
    we have
    \[
        \psi(\pi) - \psi(\pi') = \cost_{k}(\pi) - \cost_{k}(\pi').
    \]

    Consider such a profile~$\pi'$ obtained by replacing player~$k$'s strategy from~$\sigma_{k}$ to 
    some other blind strategy~$\sigma_{k}'$.
    Let us denote by~$e_1e_2\ldots$ the path of~$\sigma_{k}$, and by~$e_1'e_2'\ldots$ that of $\sigma_{k}'$.
    Let~$E'$ denote the set of edges of both paths.
    \begin{alignat*}1
        \psi(\pi) - \psi(\pi') &
        = \sum_{j=1}^{N_\pi}\sum_{e\in E} \sum_{i=1}^{\load_e(\pi,j)} \cost_e(i)
        -
        \sum_{j=1}^{N_{\pi'}}\sum_{e\in E} \sum_{i=1}^{\load_e(\pi',j)} \cost_e(i),
        \\
        &= \sum_{e\in E'} \sum_{j=1}^{N_\pi} \sum_{i=1}^{\load_e(\pi,j)} \cost_e(i)
        -
        \sum_{e\in E'} \sum_{j=1}^{N_{\pi'}} \sum_{i=1}^{\load_e(\pi',j)} \cost_e(i),
        \\
        &=\sum_{j=1}^{N_{\pi}}
            \cost_{e_j}(\load_{e_j(\pi,j)})
        - \sum_{j=1}^{N_{\pi'}}
        \cost_{e_j'}(\load_{e_j(\pi',j)})
        \\
        &=\cost_{k}(\pi) - \cost_{k}(\pi').
    \end{alignat*}
    Here, the second line is the definition; the third line follows from the fact that
    both terms are equal for edges~$e \not \in E'$ since there is no change on the load.
    The fourth line is due to the following observation. Assume that~$e_j \neq e_j'$.
    In this case, the load of~$e_j$ is decreased by~$1$ in~$\pi$, and that of~$e_j'$ 
    is increased by 1 in~$\pi'$.
    We have
    \[
        \sum_{i=1}^{\load_e(\pi,j)} \cost_e(i) - \sum_{i=1}^{\load_e(\pi',j)} \cost_e(i)
        =
        \cost_e(\load_e(\pi,j)) - \cost_e(\load_e(\pi',j));
    \]
    \enlargethispage{4mm}
    hence the result.
\end{proof}
\lemmablindnene*
\begin{proof}
    Consider an arbitrary blind strategy profile~$\pi=(\sigma_i)_{i
      \in\set n}$.  Assume that Player~$i$ has a profitable deviation,
    that is, there exists some (non-blind) strategy $\sigma_i'$ such that \(
    \cost_i(\pi[i \rightarrow \sigma_i']) < \cost_i(\pi)\). Define
    the blind strategy~$\sigma_i''$ as the strategy that follows
    Player-$i$'s path in the outcome of~$\pi[i \rightarrow
      \sigma_i']$. Then $\pi[i \rightarrow \sigma_i']$ and $\pi[i
      \rightarrow \sigma_i'']$
    have the same outcomes,
    hence \( \cost_i(\pi[i \rightarrow
      \sigma_i'']) < \cost_i(\pi)  \).

    In other terms, if there is a profitable deviation strategy for player~$i$, then the new strategy
    can be chosen to be blind. This shows that a blind Nash equilibrium is a Nash equilibrium.
  \end{proof}
  
\begin{restatable}{lemma}{lemmabr}
    \label{lemma:br}
    Given a dynamic NCG~$\tuple{\calA,n}$, a blind strategy profile~$\pi$, and $i \in \set{n}$,
    Player~$i$ has a blind strategy that is their best response to~$\pi$.
    This strategy has size at most~$N_\pi+|V|$ and
    can be computed in time
    $O(|V|^2\cdot N_\pi^2)$.
\end{restatable}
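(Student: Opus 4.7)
The plan exploits the fact that, since $\pi$ is a blind profile, the moves of all players $j\neq i$ are fixed sequences of edges independent of Player~$i$'s own choices. Hence, for every edge $e \in E$ and every step $t \in \{1, \ldots, N_\pi\}$, the load contributed by the other players, namely $\ell_e(t) := \#\{j\in\set n \setminus\{i\} : \text{the $t$-th edge of } \sigma_j \text{ is } e\}$, is a quantity that can be precomputed from~$\pi$. Beyond step $N_\pi$ all other players sit on the self-loop at $\tgt$, so the load on any other edge is zero and Player~$i$ is then effectively alone. From Player~$i$'s viewpoint, the problem therefore reduces to a single-agent weighted reachability problem with time-dependent edge weights.

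I would solve this reduced problem by constructing a layered DAG $\calD$ with vertex set $V \times \{0, 1, \ldots, N_\pi\}$, containing an edge from $(v, t)$ to $(v', t+1)$ of weight $\edgecost_e(\ell_e(t+1)+1)$ for each $e = (v, \edgecost, v') \in E$ and each $t < N_\pi$. To handle the regime $t \geq N_\pi$, I~attach to each vertex $(v, N_\pi)$ a minimum-cost $v$-to-$\tgt$ path in~$\calA$ using the single-player edge weights $e \mapsto \edgecost_e(1)$; such a path exists and has length at most $|V|-1$ since $\tgt$ is reachable from every state. A shortest $(\src, 0)$-to-$\tgt$ path in this augmented DAG then spells a sequence of edges of~$\calA$ of length at most $N_\pi + |V|$, which naturally defines a blind strategy for Player~$i$.

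For correctness, I~would observe that any (possibly non-blind) response $\sigma'_i$ to~$\pi$ induces, against the blind opponents, a deterministic sequence of edges taken by Player~$i$ in~$\calA$, whose total cost equals the weight of the corresponding path in the above construction, since all loads are already determined by~$\pi$. Hence the best response is a minimum-weight path, and may be chosen blind. The main subtlety is the treatment of the boundary at~$N_\pi$: one has to verify that optimal tails extending beyond step~$N_\pi$ are genuinely single-player shortest paths in~$\calA$, which is why the padding uses precisely the weights $\edgecost_e(1)$ and length $|V|-1$ suffices. The claimed complexity $O(|V|^2 \cdot N_\pi^2)$ then follows from the sizes of~$\calD$ (namely $O(|V|\cdot N_\pi)$ vertices and $O(|V|^2\cdot N_\pi)$ edges), a precomputation of the loads $\ell_e(t)$ from the blind paths in~$\pi$, and a standard DAG shortest-path computation, with the additional tail shortest paths contributing only lower-order terms.
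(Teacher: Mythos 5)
Your proposal is correct and follows essentially the same route as the paper: both hard-code the blind opponents' loads into a time-expanded copy of the arena ($N_\pi$ layers plus a single-player-weighted part for the regime after all others have reached $\tgt$) and reduce the best response to a shortest-path computation, with the same $N_\pi+|V|$ length bound and a running time within $O(|V|^2\cdot N_\pi^2)$. Your explicit remark that any non-blind response induces a deterministic edge sequence against blind opponents (so the optimum may be taken blind) is a welcome touch that the paper defers to a separate lemma.
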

\begin{proof}
  We let~$\calA=\tuple{V,E,\src,\tgt}$.
    We define a weighted graph~$\calG$ obtained by
    concatenating $N_\pi+1$ copies of~$\calA$, in which the moves of
    all other players (but Player~$i$) are hard-coded.
    Formally, $\calG=\tuple{V \times \set{N_\pi+1}, E'}$ where 
    \begin{itemize}
    \item for each edge $(v,\edgecost,v')$ in~$E$, there is an
      edge $((v,N_\pi+1),\edgecost(1),(v',N_\pi+1))$ in~$E'$: this
      encodes the fact that, after $N_\pi+1$ steps, all other players
      have reached the target state, and Player~$i$ plays alone;
    \item for each edge $e=(v,\edgecost,v')$ in~$E$ and each
      index~$1\leq k\leq N_\pi$, there is an edge $((v,k),w,(v',k+1))$
      in~$E'$ with $w=\edgecost(load_e(\pi_{-i},k)+1)$, where
      $\load_e(\pi_{-i},k)$ is the number of players
      (except~Player~$i$) taking edge~$e$ at step~$k$ when
      following~$\pi_{-i}$. This~way, $w$~corresponds to the cost
      incurred to Player~$i$ if they were to take edge~$e$ at
      step~$k$.
    \end{itemize}


%
    By construction, any blind strategy~$\sigma_i'$ for Player~$i$
    in~$\tuple{\calA,n}$ corresponds to a path~$\rho'$ from
    $(\src,1)$ to~$(\tgt,N_\pi+1)$ in~$\calG$ such that the
    cost of~$\rho'$ is equal to $\cost_i(\sigma_{-i},\sigma_i')$:
    Player~$i$ can follow the exact same path as they would
    in~$\calS$ by ignoring the second component in the state space
    of~$\calG$.  Conversely, any path in~$\calG$ from~$(\src,1)$ to~$(\tgt,N_\pi+1)$
    corresponds to a path from~$\src$ to~$\tgt$ in~$\calA$, thus to
    some blind strategy~$\sigma_i'$. The cost of the path is equal to
    $\cost_i(\sigma_{-i},\sigma_i')$ by construction.
    
    This shows that the best-response strategy can be computed by a
    shortest path computation in~$\calG$.  This path has size at
    most~$N_\pi+|V|$: after~$N_\pi$ steps, the path enters
    configurations of the form $(c,N_\pi+1)$, so all other players
    have already reached the target; since the shortest path is
    acyclic, the bound follows.
    
    This computation can be done with Dijkstra's algorithm, which
    runs in $O((|V|\cdot N_\pi)^2)$.
\end{proof}

\thmbralg*
\begin{proof}
    We apply the previous lemma as follows.
    Consider an initial strategy profile~$\pi_0$ assigning to each player
    any acyclic path (thus, of length at most~$|V|$)
    from~$\src$ to~$\tgt$. We~have
    $\psi(\pi_0) 
    \leq |V|\cdot\sum_{e \in E}\sum_{i=1}^n\cost_e(i) \leq |V|n\max_{e \in E}\cost_e(n)$.
    This quantity is pseudo-polynomial in the size of~$\tuple{\calA,n}$.
    
    Let~$\pi_1,\pi_2,\ldots$ denote the strategy profiles generated by this iterative procedure.
    Let~${m_i=N_{\pi_i}}$. By Lemma~\ref{lemma:br}, we have~$m_i \leq m_1 + (i-1)|V|$.
    We~then have
    \begin{xalignat*}1
        \sum_{i=1}^k |V|^2m_i^2 &\leq |V|^{2}\sum_{i=1}^k (m_1+i|V|)^2\\
        & \leq |V|^{2}\sum_{i=1}^k (n|V|+i|V|)^2\\
        & \leq |V|^{4}\cdot k\cdot (n+k)^2
    \end{xalignat*}
    where the second step follows from~$m_1 \leq n|V|$.
    Applying~Lemma~\ref{lemma:br} again,
    the running time of the iterative procedure until the~$k$-th step is
    $O(|V|^4\cdot k\cdot (n+k)^2)$.
    In~the worst case, we~stop after $k=\psi(\pi_0)$ steps,
    so that the algorithm runs in pseudo-polynomial time.
\end{proof}

\lemmablindsuboptimal*

\begin{proof}[Proof of Lemma~\ref{lemma:blind-suboptimal}]

	We consider the arena~$\calA$ shown in Fig.~\ref{fig:blind-suboptimal-app}
	with $n=3$ players, with $\src = q_0, \tgt = q_7$.
	\begin{figure}[htbp]
		\centering
		\begin{tikzpicture}[shorten >=1pt,node distance=1cm,on grid,auto, el/.style = {inner sep=3pt, align=left, sloped}]
		\node[rond,vert] (q0) at (0,0) {$q_0$};
		\node[rond,vert] (q1) at (2,1) {$q_1$};
		\node[rond,vert] (q2) at (4,1) {$q_2$};
		\node[rond,vert] (q3) at (6,1) {$q_3$};
		\node[rond,vert] (q4) at (2,-1) {$q_4$};
		\node[rond,vert] (q5) at (4,-1) {$q_5$};
		\node[rond,vert] (q6) at (6,-1) {$q_6$};
		\node[rond,vert] (q7) at (8,0) {$q_7$};
		\begin{scope}[-latex']		
		\draw (q0) edge node[el, above] {$x\mapsto 2x$} node [el, below]{$e_1$} (q1);
		\draw (q0) edge node[el, above] {$x\mapsto 3x$} node [el, below]{$e_5$} (q4);
		\draw (q1) edge node[el, above] {$x\mapsto 3$} node [el, below]{$e_2$} (q2);
		\draw (q2) edge node[el, above] {$x\mapsto 3$} node [el, below]{$e_3$} (q3);
		\draw (q3) edge node[el, above] {$x\mapsto 2x$} node [el, below]{$e_4$} (q7);
		\draw (q4) edge node[el, above] {$x\mapsto x$} node [el, below]{$e_6$} (q5);
		\draw (q5) edge node[el, above] {$x\mapsto 2x$} node [el, below]{$e_7$} (q6);
		\draw (q6) edge node[el, above] {$x\mapsto 2x$} node [el, below]{$e_8$} (q7);
		\draw (q1) edge node[el, above] {$x\mapsto x$} node [el, below]{$\pun_1$} (q5);
		\draw (q2) edge node[el, above] {$x\mapsto 3$} node [el, below]{$\pun_2$} (q6);
		\end{scope}
                \end{tikzpicture}
		\caption{An arena on which blind Nash equilibria are sub-optimal.}
		\label{fig:blind-suboptimal-app}
	\end{figure}
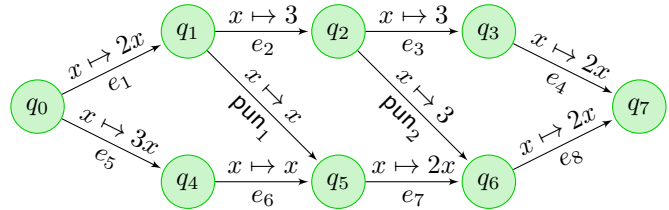

        	The strategy profile $\pi = (\sigma_i)_{i \in \set{3}}$ is  
	defined as follows.

	For~$i\in\{1,2\}$, 
	strategy $\sigma_i$ chooses $e_1e_2e_3e_4$ with the following exception:
	if Player~$3-i$ picks~$e_5$ at~$q_0$, then take~$\pun_1$ at~$q_1$;
	if Player~$3-i$ picks~$\pun_1$ at~$q_1$, then take~$\pun_2$ at~$q_2$.
	Strategy $\sigma_3$ follows $e_5e_6e_7e_8$.

	We have~$\cost_i(\pi) = 14$ for~$i \in \{1,2\}$, and~$\cost_3(\pi) = 8$,
	so~$\cost(\pi) = 36$.

	We first show that~$\pi$ is a Nash equilibrium. 
	
		First, Player $3$ has no incentive to deviate since taking $e_1$ at $q_0$ would alone cost $6$, and
		the rest of the path,
		either through $e_2$ or~$\pun_1e_7$,
		has a cost more than~$3$ so that the total cost
		is more than~$\cost_3(\pi)$.		
		For Players $1$ and $2$, there are three states where they can deviate. If Player~$i 
		\in \{1,2\}$ chooses $e_5$ at $q_0$, then 
		the cost is  $6+2+6+6 = 20$ as Player~$3-i$ chooses $\pun_1$.
		Similarly, if they choose $\pun_1$ at $q_1$ then Player~$3-i$ chooses $\pun_2$ at $q_2$ which yields
		a cost of $4+1+4+6 = 15 > \cost_i(\pi)$.
		Last, if Player~$i$ chooses
		$\pun_2$ at $q_2$, their cost is $4+3+3+4 = 14 = \cost_i(\pi)$. 

	We now show that the profile~$\pi$ has a lower cost than any blind Nash equilibrium.

	There are only four possible blind strategies in our game following one of the four paths
	$\rho_1 = e_1 e_2 e_3 e_4, \rho_2 = e_5 e_6 e_7 e_8, \rho_3 = e_1 \pun_1 e_7 e_8$ and $\rho_4 = e_1 e_2 \pun_2 e_8$.
	We will represent these profiles as tuples of paths, \textit{e.g.} $(\rho_1,\rho_1,\rho_2)$.
	We are going to consider all possible tuples and show that each tuple is either not a Nash equilibrium or has cost more than
	$\cost(\pi) = 36$.
	
	Observe that when we permute the players in a Nash equilibrium, it remains a Nash equilibrium with the same social cost.
	So we only need to consider the case where all players use the same strategy ($4$ possibilities),
	the case where they all choose distinct strategies ($\binom{4}{3}$), 
	and the case where two of them choose the same strategy and the other one a different one ($\binom{4}{1} \times \binom{3}{1}$).
	So there are $20$ profiles to check, and the rest of the profiles are permutations of these.

	The following is an exhaustive list of the costs of these profiles.

        \noindent
        \begin{minipage}{.47\linewidth}
	\begin{align*}
	&\cost(\rho_1, \rho_1, \rho_1)= 18 \times 3 = 54 \\
	&\cost(\rho_2, \rho_2, \rho_2)= 24 \times 3= 72\\
	&\cost(\rho_3, \rho_3, \rho_3)= 21 \times 3 = 63\\
	&\cost(\rho_4, \rho_4, \rho_4)= 18 \times 3 = 54\\
	&\cost(\rho_1, \rho_2, \rho_3) = 12 + 13 + 12 = 37 \\
	&\cost(\rho_1, \rho_3, \rho_4) =  12+14+15=41\\
	&\cost(\rho_1, \rho_2, \rho_4) =  12+12+13=37\\
	&\cost(\rho_2, \rho_3, \rho_4) =  14+15+16=45\\
	&\cost(\rho_1, \rho_1, \rho_2) = 14\times 2 + 8 =\textbf{36} \\
	  &\cost(\rho_1, \rho_1, \rho_3)= 16\times 2 + 11 = 43\\
        \end{align*}
        \end{minipage}\hfill
        \begin{minipage}{.47\linewidth}
        \begin{align*}
	&\cost(\rho_1, \rho_1, \rho_4)= 14\times 2 + 14 = 42 \\
	&\cost(\rho_2, \rho_2, \rho_1)=16\times 2 + 10 = 42\\
	&\cost(\rho_2, \rho_2, \rho_3)=	20\times 2 + 15 = 55\\
	&\cost(\rho_2, \rho_2, \rho_4)= 18 \times 2 + 14 = 50 \\
	&\cost(\rho_3, \rho_3, \rho_1)= 16\times 2 + 14 = 46\\
	&\cost(\rho_3, \rho_3, \rho_2)= 18 \times 2 + 16 = 52\\
	&\cost(\rho_3, \rho_3, \rho_4)=  18 \times 2 + 15 = 51\\
	&\cost(\rho_4, \rho_4, \rho_1)= 16 \times 2 + 14= 46 \\
	&\cost(\rho_4, \rho_4, \rho_2)= 16\times 2 + 12 = 44\\
	&\cost(\rho_4, \rho_4, \rho_3)= 18 \times 2 + 15 = 51\\
	\end{align*}
	\end{minipage}
        
	All profiles have cost at least~$36$, and the only one that matches 36 is $(\rho_1,\rho_1,\rho_2)$.
	However, this is not a Nash equilibrium. In fact, the cost of Player~$1$ here is 14,
	but they could profit from deviating to~$\rho_3$ since
	$\cost_1(\rho_3,\rho_1,\rho_2)=13$.
\end{proof}

\section{Proofs of Section~\ref{section:spe}}
\label{app:spe}

\subparagraph{Equivalence between SPE and very-weak SPE~\cite{BBMR-csl15}.}

A
strategy $\sigma^{\prime}_i$ is said to be \emph{first-shot deviating} from another strategy~$\sigma_i$
if they coincide on all histories except the empty history.
%
A~strategy profile $\sigma = (\sigma_i)_{i \in \set n}$ is called a
\emph{very-weak Nash equilibrium} if for all $i \in \set n$, for every
first-shot deviating strategy $\sigma_i^{\prime}$, it holds
$\cost_i(\tuple{\sigma_{-i}, \sigma_i^{\prime}}) \geq
\cost_i(\sigma)$. Notice that this need not be an~NE~\cite{BBMR-csl15}.
Finally, a strategy profile is a \emph{very-weak subgame-perfect
  equilibrium} if it is a very-weak NE from every finite history of
the game.

We have the following equivalence:
\begin{restatable}{theorem}{weakequiv}
	\label{thm:weakequiv}
	In a dynamic NCG, a strategy profile is an SPE if, and
        only~if, it is a very-weak~SPE.
\end{restatable}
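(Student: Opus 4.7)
The plan is to establish the statement via the classical one-shot deviation principle. The forward implication is immediate: every Nash equilibrium is \emph{a fortiori} a very-weak~NE, since a first-shot deviation is a special case of a deviation; hence every SPE is automatically a very-weak~SPE.

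For the converse, I~would proceed by contradiction. Assume $\sigma$ is a very-weak~SPE but not an SPE: there~exist a finite history~$h$, a~player~$i\in\set n$, and a strategy~$\sigma'_i \in \frakS$ with $\cost_i(\tuple{\sigma_{-i},\sigma'_i}^h) < \cost_i(\sigma^h)$. The left-hand side is then strictly less than~$+\infty$, so Player~$i$ reaches~$\tgt$ in some finite number of steps~$K$ along the outcome~$\rho$ of~$\tuple{\sigma_{-i},\sigma'_i}^h$. I~then introduce hybrid strategies $(\tau_j)_{0\le j\le K}$ for Player~$i$: strategy~$\tau_j$ follows~$\sigma'_i$ along the first $j$ transitions of~$\rho$ and reverts to~$\sigma_i$ afterwards (on every history not consistent with those first $j$ transitions, $\tau_j$ also follows~$\sigma_i$). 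Setting $f(j) = \cost_i(\tuple{\sigma_{-i},\tau_j}^h)$, we get $f(0) = \cost_i(\sigma^h)$ and $f(K) = \cost_i(\tuple{\sigma_{-i},\sigma'_i}^h)$, so $f(K) < f(0)$, and hence some index $j\in\{1,\dots,K\}$ satisfies $f(j) < f(j-1)$.

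Unfolding both terms yields
\[
f(j) - f(j-1) = w_j(i) + \cost_i(\sigma^{H_j}) - \cost_i(\sigma^{H_{j-1}}),
\]
where $H_k$ denotes the history reached from~$h$ after the first $k$ transitions of~$\rho$ and $w_j(i)$ is the cost incurred by Player~$i$ on the $j$-th transition of~$\rho$. Thus, at history~$H_{j-1}$, switching from $\sigma_i(H_{j-1})$ to $\sigma'_i(H_{j-1})$ for a single step (and reverting to~$\sigma_i$ afterwards) produces a strictly profitable first-shot deviation from~$\sigma$ at~$H_{j-1}$, contradicting the very-weak~SPE hypothesis there. The main technical obstacle is that our costs are not discounted, so the standard continuity-at-infinity argument underlying the one-shot deviation principle does not apply verbatim; this is bypassed by observing that any profitable deviation must drive Player~$i$ to~$\tgt$ in finitely many steps, reducing the analysis to finitely many hybrid strategies. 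A~small additional care is needed when $\cost_i(\sigma^h)=+\infty$, but the same monotone comparison of the sequence~$f$ still exhibits a finite one-shot improvement.
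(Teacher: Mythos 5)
Your proof is correct, and it rests on the same two pillars as the paper's: the easy direction is immediate, and for the converse both arguments exploit the fact that a profitable deviation forces Player~$i$ to reach~$\tgt$, so that the deviation can be taken to involve only finitely many deviation points, from which a single profitable \emph{first-shot} deviation is extracted, contradicting the very-weak-SPE hypothesis. Where you differ is in how that one-shot improvement is located: the paper picks a profitable deviating strategy with a \emph{minimal} number of deviations and argues at the \emph{last} deviation point $hh'$ that $\cost_i(\sigma,hh') > \cost_i(\tuple{\sigma_{-i},\sigma'_i},hh')$ (otherwise minimality is violated), whereas you run the textbook one-shot-deviation telescoping argument over the hybrid strategies $\tau_0,\dots,\tau_K$ and pick an index where $f$ strictly decreases. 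Your version avoids the extremal choice and handles the $\cost_i(\sigma^h)=+\infty$ case transparently (the comparison $f(j)<f(j-1)$ still cancels the finite common prefix), while the paper's version avoids introducing the family of hybrids; both correctly sidestep the failure of the discounted one-shot-deviation principle by reducing to finitely many candidate deviation points. One point worth making explicit in your write-up: when $\sigma_i(H_{j-1})$ already coincides with the $j$-th move of~$\rho$, the outcomes of $\tau_{j-1}$ and $\tau_j$ coincide and $f(j)=f(j-1)$, so any index with $f(j)<f(j-1)$ genuinely yields a \emph{deviating} first-shot strategy; this is needed for the contradiction with the very-weak-NE condition at $H_{j-1}$.
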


We introduce the following notations to ease the writing of the following proofs:
Given a strategy profile~$\sigma$ and a history~$h$, we~write
$\outcome(\sigma,h)$ for the outcome of the residual
strategy~$\sigma^h$ from the last configuration of~$h$.
Similarly, we~let $\cost_k(\sigma,h)=\cost_k(\outcome(\sigma,h))$.

\begin{proof}
  It~is easily checked that NEs are very-weak NEs, so that SPEs are very-weak SPEs.

	
  For the opposite direction of the equivalence,
  let us consider a very-weak SPE $\sigma = (\sigma)_{i \in \set n}$ of a dynamic NCG~$\calG$.
  Suppose that $\sigma$ is not an~SPE; then
  there exists a history~$h$ of~$\cal{G}$ such that the residual strategy\ $\sigma^h$ is not an~NE.

  This implies that for some player~$i$, there exists a strategy $\sigma^{\prime}_i$ such that
  $\cost_i(\sigma,h) > \cost_i(\tuple{\sigma_{-i}, \sigma_i^{\prime}},h)$.
  In~particular, $\cost_i(\tuple{\sigma_{-i}, \sigma_i^{\prime}},{h})$~is finite, and along the outcome
  of this strategy profile, Player~$i$ reaches~$\tgt$.
  
  We pick $\sigma^{\prime}_i$ as a profitable deviating strategy (after history~$h$)
  with minimum number of deviations from~$\sigma_i$.
  Because Player~$i$ reaches~$\tgt$ along~$\outcome(\tuple{\sigma_{-i}, \sigma_i^{\prime}},{h})$,
  there exist profitable strategies with finitely many deviations.
%

  We~write~$h^{\prime}$ for the longest finite history along the path
  $\outcome(\tuple{\sigma_{-i}, \sigma^{\prime}_i},h)$ where
  $\sigma_i^{\prime}$ deviates from~$\sigma_i$, i.e,
  $\sigma_i(hh^{\prime}) \neq \sigma_i^{\prime}(hh^{\prime}) = c$ and
  $\outcome(\sigma,hh^{\prime}c) = \outcome(\tuple{\sigma_{-i},
    \sigma_i^{\prime}}, hh^{\prime}c)$.

%
	Now if $\cost_i(\sigma, hh^{\prime}) \leq \cost_i(\tuple{\sigma_{-i},
    \sigma^{\prime}_i}, hh^{\prime})$ holds, then deviating at history $hh^{\prime}$ is unnecessary for Player $i$ for having profitable deviating strategy from $h$, and that contradicts the minimality of the number of deviations of $\sigma^{\prime}_i$ from $\sigma_i$.
Hence we obtain $\cost_i(\sigma, hh^{\prime}) > \cost_i(\tuple{\sigma_{-i}, \sigma^{\prime}_i}, hh^{\prime})$. 
  But that implies we~have a first-shot profitable deviating strategy for 
  Player $i$, $\sigma^{\prime}_i$, from history~$hh^{\prime}$,
  which contradicts to the assumption that $\pi$ is a very-weak~SPE.
%
%
%
\end{proof}

\subparagraph{Characterization of the outcomes of SPEs.}


Consider a dynamic NCG~$\tuple{\calA,n}$, and the associated
configuration graph~$\calM=\tuple{C,T}$. We~partition the set~$C$ of
configurations into~$(X_j)_{0\leq j\leq n}$ such that a
configuration~$c$ is in~$X_j$ if, and only~if, $j=\#\{i\in\set n\mid
c(i)=\tgt\}$. Since~$\tgt$ is a sink state in~$\calA$, if~there is a
transition from some configuration in~$X_j$ to some configuration
in~$X_k$, then $k\geq j$.
We~define $X_{\geq j}=\bigcup_{i\geq j} X_i$, and
$\SE_{j}=\{(c,w,c')\in T \mid c\in X_{j}\}$ and $\SE_{\geq
  j}=\{(c,w,c')\in T \mid c\in X_{\geq j}\}$.

\medskip


We~inductively define a sequence of families of labeling functions
$(\lambda^{j^*})_{0 \leq j \leq n}$, where for all~$0\leq j\leq n$,
$\lambda^{j^*} = (\lambda^{j^*}_i)_{i \in \set{n}}$, and for
all~$i\in\set n$, $\lambda^{j^*}_i\colon \SE_{\geq j} \rightarrow \bbN
\cup \{-\infty, +\infty\}$.

For any family $\lambda = \tuple{\lambda_i}_{i \in \set{n}}$ with
$\lambda^{j^*}_i\colon \SE_{\geq j} \to \bbN \cup \{-\infty, +\infty\}$
and any $c \in X_{\geq j}$, a~path $\rho =
(t_j)_{j \geq 1}$ from~$c$ visiting~$c_\tgt$ is said to be
$\lambda$-consistent if for any $i \in \set{n}$ and for any $1 \leq k
\leq |\rho|$, it~holds $\cost_i(\rho_{\geq k}) \leq \lambda_i(t_k)$.
We~write~$\LambdaSet_{\lambda}(c)$ to denote the set of all
$\lambda$-consistent paths from~$c$.

\smallskip
We now define the sequence $(\lambda^{j^*})_{0 \leq j \leq n}$.
For~$j = n$, we~have $X_{\geq n} = \{c_\tgt\}$ and
$\SE_{\geq n} = \{(c_\tgt,0^n,c_\tgt)\}$; there~is a single path,
which obviously is the outcome of an~SPE since no deviations are
possible. For~all~$i\in\set n$, we~let
$\lambda^{n^*}_i(c_\tgt,0^n,c_\tgt) =0$.
		
Now, fix~$j<n$, assuming that $\lambda^{(j+1)^*}$ has been defined.
In~order to define~$\lambda^{j^*}$, we~introduced an intermediary
sequence $(\mu^k_i)_{k\geq 0,i\in\set n}$ with $\mu^k_i\colon
\SE_{\geq j} \to \bbN\cup\{-\infty,+\infty\}$, of which
$(\lambda^{j^*}_i)_{i\in\set n}$ will be the limit.
We call these $\mu^k$'s as the \emph{$\lambda^{j^*}$-~building functions}.

Functions~$\mu^k_i$ mainly operate on~$\SE_j=\SE_{\geq j}\setminus
\SE_{\geq j+1}$: for~any~$\bfe\in \SE_{\geq j+1}$, we~let
$\mu^k_i(\bfe)=\lambda^{(j+1)^*}_i(\bfe)$.  Now,
for~$\bfe=(c,w,c')\in \SE_j$, the value $\mu^k_i(\bfe)$ is defined inductively
as follows:
	\begin{itemize}
		\item $\mu_i^0(\bfe)=0$ if $c(i)=\tgt$, and $\mu_i^0(\bfe)=+\infty$ otherwise;
		\item for~$k>0$, assuming~$\mu^{k-1}_i$ has been defined,
		we~let $\mu_i^k(\bfe)=0$ if~$c(i)=\tgt$, and
		\[
		\mu_i^k(\bfe)=
		\begin{cases}
               \mathop{\min\vphantom{\sup}}\limits_{c'' \in \dev_i(c,c')}\ 
		\sup\limits_{\rho\in\LambdaSet_{\mu^{l-1}}(c'')} ( \cost_i(c,c'') + \cost_i(\rho) ) & \quad\text{ if } \forall (c,\tilde{w}, \tilde{c}) \in T,~\LambdaSet_{\mu^{l-1}}(\tilde{c}) \neq \emptyset\\
		-\infty & \quad\text{ otherwise}
		\end{cases}
		\]  
	\end{itemize}

	\begin{lemma}\label{lemma-C4}
	    For any $\bfe \in T$, any $i \in \set{n}$, and any $k > 0$, we~have $\mu^k_i(\bfe) \leq \mu^{k-1}_i(\bfe)$.	
	\end{lemma}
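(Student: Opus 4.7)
The plan is to prove the inequality by induction on $k$, using a single monotonicity observation: if two families of labeling functions satisfy $\mu \leq \mu'$ pointwise, then $\LambdaSet_{\mu}(c) \subseteq \LambdaSet_{\mu'}(c)$ for every configuration~$c$, since any path witnessing the tighter constraint automatically witnesses the looser~one. I would isolate this as a preliminary claim, since it is the only mechanism driving the whole argument.

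Two cases can be disposed of immediately. For edges $\bfe \in \SE_{\geq j+1}$, the value $\mu^k_i(\bfe) = \lambda^{(j+1)^*}_i(\bfe)$ does not depend on~$k$, so the inequality is in fact an equality. Likewise, when $\bfe = (c,w,c') \in \SE_j$ with $c(i) = \tgt$, both $\mu^{k-1}_i(\bfe)$ and $\mu^k_i(\bfe)$ equal~$0$. So I may restrict attention to the remaining case, namely $\bfe = (c,w,c') \in \SE_j$ with $c(i) \neq \tgt$.

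The base case $k=1$ is trivial because $\mu^0_i(\bfe) = +\infty$ when $c(i) \neq \tgt$, so any value of $\mu^1_i(\bfe)$ satisfies the inequality. For the inductive step, assume $\mu^{k-1} \leq \mu^{k-2}$ pointwise; the monotonicity observation gives $\LambdaSet_{\mu^{k-1}}(\tilde c) \subseteq \LambdaSet_{\mu^{k-2}}(\tilde c)$ for every~$\tilde c$. If $\mu^{k-1}_i(\bfe) = -\infty$, then by definition there is some successor $\tilde c$ of $c$ with $\LambdaSet_{\mu^{k-2}}(\tilde c) = \emptyset$; by inclusion $\LambdaSet_{\mu^{k-1}}(\tilde c) = \emptyset$ as well, hence $\mu^k_i(\bfe) = -\infty$ too, and the inequality is an equality. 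Otherwise, $\mu^{k-1}_i(\bfe)$ is given by the $\min$--$\sup$ formula evaluated over the $\LambdaSet_{\mu^{k-2}}$-sets; if $\mu^k_i(\bfe) = -\infty$ the inequality is trivial, and otherwise $\mu^k_i(\bfe)$ is the analogous $\min$--$\sup$ formula with $\LambdaSet_{\mu^{k-1}}$ in place of $\LambdaSet_{\mu^{k-2}}$, which by inclusion can only shrink each inner $\sup$ over $c'' \in \dev_i(c,c')$ and therefore the outer~$\min$.

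The only technical nuisance is bookkeeping with the boundary values $\pm\infty$ and making sure the containment of $\LambdaSet$-sets is applied correctly across every branch of the case split in the definition of~$\mu^k$; once this is in place, the inductive step reduces to a one-line sup-over-subset argument. No new ingredient beyond the monotonicity principle is needed.
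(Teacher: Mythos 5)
Your proposal is correct and follows essentially the same route as the paper's proof: induction on $k$, driven by the single observation that pointwise-smaller labelings yield smaller $\LambdaSet$-sets, hence smaller suprema and a smaller outer minimum. If anything, your treatment is slightly more careful than the paper's, since you explicitly dispatch the $-\infty$ branch (empty $\LambdaSet_{\mu^{k-2}}(\tilde c)$ forces empty $\LambdaSet_{\mu^{k-1}}(\tilde c)$, hence $-\infty$ again) and the edges in $\SE_{\geq j+1}$ where the value is constant in $k$, both of which the paper's proof leaves implicit.
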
 
	
	\begin{proof}
        We prove the statement by induction on $k$, starting with~$k=1$.
        Write $\bfe = (c, w,c^{\prime})$.  If $c^{\prime} = \tgt$, then by definition $\mu^k_i(\bfe) = 0$  for all~$k$;
        otherwise, $\mu^0_i(\bfe) = \infty$; in~both cases, the result holds.
        
        Now, let us assume that for some~$k>0$, the following holds:
        for all $i \in \set{n}$ and all $\bfe \in X_{\geq j}$, it~holds
        $\mu^l_i(\bfe) \leq \mu^{l-1}_i(\bfe)$.
        Then for any $c \in C$, it~implies $\LambdaSet_{\mu^k}(c,w,c^{\prime}) \subseteq \LambdaSet_{\mu^{k-1}}(c, w,c^{\prime})$.
        
        
        Fix an arbitrary $\bfe = (c, w, c^{\prime}) \in T$. For~any
        $c^{\prime\prime} \in \dev_i(c,c^{\prime})$, by induction
        hypothesis, we have $\LambdaSet_{\mu^k}(c^{\prime\prime})
        \subseteq \LambdaSet_{\mu^{k-1}}(c^{\prime\prime})$.

        Then:
        \[
        \sup\limits_{\rho \in \LambdaSet_{\mu^k}(c^{\prime\prime})} (\cost_i(c,c^{\prime\prime}) +\cost_i(\rho))
          \leq \sup\limits_{\rho \in \LambdaSet_{\mu^{k-1}}(c^{\prime\prime})} (\cost_i(c,c^{\prime\prime}) + \cost_i(\rho)).
        \]
%
        It~follows
        \[
        \min\limits_{c^{\prime\prime} \in \dev_i(c,c^{\prime})}
       	\sup\limits_{\rho \in \LambdaSet_{\mu^k}(c^{\prime\prime})} (\cost_i(c,c^{\prime\prime}) + \cost_i(\rho)) \leq
        \min\limits_{c^{\prime\prime} \in \dev_i(c,c^{\prime})}
        \sup\limits_{\rho \in \LambdaSet_{\mu^{k-1}}(c^{\prime\prime})} (\cost_i(c,c^{\prime\prime}) + \cost_i(\rho)),
        \]
       	which shows $\mu^k(\bfe) \leq \mu^{k-1}(\bfe)$.
	\end{proof}

        \medskip

We~write $\LambdaSet^*$ for $\LambdaSet_{\lambda^{0^*}}$. Then:        
\lambdaconsistent*

\begin{proof}
	
	Let us first consider an SPE $\sigma$ of~$\mathcal{G}$, of which~$\rho$ is the outcome. 
	We shall show that for any history $h$ ending in configuration~$c$,
        $\outcome(\sigma, h) \in \LambdaSet^*(c)$.
	In~fact, we~prove by induction on~$j$ that $\outcome(\sigma, h) \in \LambdaSet_{\lambda^{j^*}}(c)$ if $c \in X_{j}$.
	
	
	
%

	For $j = n$, for any history $h$ ending in~$X_n$, $\outcome(\sigma, h)$ is a path looping on~$c_\tgt$.
        We~defined  $\lambda^{n^*}(\bfe) = 0$ for $\bfe \in X_n$, so that the result holds.

        \medskip
	
        Assume the statement is true for some $j+1 \leq n$: for~any history~$h$ ending in some configuration~$c\in X_{j+1}$,
        $\outcome(\sigma, h) \in \LambdaSet^{(j+1)^*}(c)$.
	%

	To~prove the result at level~$j$, we start another induction to show that for any~$k$, for any history~$h$
        whose last configuration~$c$ is in~$X_{j}$, the path $\outcome(\sigma,h)$ belongs to $\LambdaSet_{\mu^k}(c)$.
	
        \smallskip
        We~begin with the case $k=0$.
	Fix a history~$h$ ending in $c \in X_j$. Write $\outcome(\sigma, h) = (c_r, w_r, c_{r+1})_{r \geq 1}$.
	Let~$r^{\prime}$ be the least integer such that $c_{r^{\prime}}$ in $\outcome(\sigma,h)$ belongs to~$X_{>j}$.
	We~need to show that for any $i \in \set{n}$ and any~$r$,
        $\cost_i((\outcome(\sigma, h))_{\geq r}) \leq \mu^0_i(c_r, w_r, c_{r+1})$.
        If~$r< r'$, then either $\mu^0(c_r, w_r, c_{r+1}) = \infty$, or $\mu^0(c_r,w_r,c_{r+1})=0$ and $c_r(i)=\tgt$;
          in~both cases, the result also holds.
        If $r\geq r'$, we~have $\mu^0_i(c_p, w_p, c_{p+1}) = \lambda^{(j+1)^*}_i(c_p, w_p, c_{p+1})$, and by our outer
        induction hypothesis, $\cost_i((\outcome(\sigma,h)_{\geq r})) \leq \lambda^{(j+1)^*}(c_r, w_r, c_{r+1})$.
        

        \smallskip
	
        We~now prove the induction step.
	We assume that $\outcome(\sigma, h) \in \LambdaSet_{\mu^{k-1}}(c)$ for any history $h$ ending in~$c\in X_j$.
%
        Write $\outcome(\sigma, h) = (c_r, w_r, c_{r+1})_{r \geq 1}$.
	%
	We first observe that $\mu^k_i(c_r, w_r, c_{r+1}) \neq -\infty$ for any $r \geq 1$.
	This is because for any $(c_r, \tilde{w}, \tilde{c}) \in T$ and for any $c_\src \xRightarrow{\tilde{h}} \tilde{c}$, by our current induction hypothesis
        $\LambdaSet_{\mu^{k-1}}(\tilde{c})$ contains $\outcome(\sigma, \tilde{{h}})$, hence it~is not empty.
	
	Therefore, the only thing left to show now is that, for all $i \in \set{n}$, for all~$r$,
        we~have $\cost_i(\outcome(\sigma,h)_{\geq r}) \leq \mu^k_i(c_r, w_r, c_{r+1})$.
        This is clear if $c_r(i)=\tgt$.  As~previously, we let
        $r^{\prime}$ be the least integer such that configuration
        $c_{r^{\prime}}$ along $\outcome(\sigma, h)$ belongs to $X_{>j}$.
	
	For $r < r^{\prime}$: assume $\cost_i(\outcome(\sigma,h)_{\geq r}) > \mu^k_i(c_r, w_r, c_{r+1})$; then,
        \[
	\cost_i(\outcome(\sigma,h)_{\geq r}) > 
	\mathop{\min\vphantom{\sup}}\limits_{c'\in \dev_i(c_r, c_{r+1})} \
        \sup\limits_{\rho \in \LambdaSet_{\mu^{k-1}}(c^{\prime})} \{\cost_i(c_r,c^{\prime}) + \cost_i(\rho)\},
        \]
        which implies the existence of an edge $(c_r(i),\edgecost,v^{\prime})$ in~$E$ such that
        \[
        \cost_i(\outcome(\sigma,h)_{\geq r}) > \cost_i(c_r,c_{r+1}[i \rightarrow v^{\prime}]) +
           \cost_i(\sigma, h\cdot (c_r, w^{\prime}, c_{r+1}[i \rightarrow v^{\prime}])),
        \]
        because by our inner induction
        $\outcome(\sigma, h\cdot (c_r, w^{\prime}, c_{r+1}[i \rightarrow v^{\prime}])) \in
          \LambdaSet_{\mu^{k-1}}(c_r, w^{\prime}, c_{r+1}[i \rightarrow v^{\prime}])$.
	  This shows a profitable first-shot deviation for Player~$i$,
          contradicting to our hypothesis that $\sigma$ is an SPE.
	
	  Otherwise for $r \geq r^{\prime}$, the condition follows directly from the outer induction hypothesis,
          because $\mu^k$ coincides with~$\lambda^{j^*}$ for edges in those regions.

          \bigskip
          
          We~now prove the converse implication.
%
	
          Picking~$\rho\in \LambdaSet^*(c_{\src})$, 
          we build a very-weak SPE $\sigma = (\sigma_i)_{i \in \set{n}}$ (which is an SPE by Theorem \ref{thm:weakequiv}), step-by-step;
	  for~notational convenience, instead of defining $\sigma_i(h)$ for any history~$h$,
          we directly define what $\outcome(\sigma, h)$ would~be (if not already defined).
	

	  As a first step, we let $\outcome(\sigma, c_{\src}) = \rho$.
          Now let us construct rest of the~$\sigma$.  We~pick an
          arbitrary non-initial history $h^{\prime} = h\cdot
          (c,w,c^{\prime})$ for which $\outcome(\sigma,h')$ is not
          defined; we~assume that $\outcome(\sigma, h)$ is already
          defined as $(c_j, w_j, c_{j+1})_{j \geq 1}$ such that
          $c^{\prime} \neq c_2$.
	
	If there exists $i_1 \neq i_2 \in \set{n}$ such that
        $c^{\prime}(i_j) \neq c_2(i_j)$ for both $j = 1$ and $j=2$, then pick any
        $\tilde{\rho} \in \LambdaSet^*(c^{\prime})$ (which we know is
        not~empty) and define $\outcome(\sigma, h\cdot (c,w,c^{\prime})) =
        \tilde{\rho}$.
	
	Otherwise $c_2$ differs from~$c'$ only at a single player's position, say Player~$i$'s.
	In~that case we define
	\begin{align}\label{def:sigma}
	\outcome(\sigma, h\cdot (c,w,c^{\prime})) = \mathop{\arg\max}\limits_{\rho \in \LambdaSet^*(c^{\prime})} \{\cost_i(c_1, c')+ \cost_i(\rho)\}
	\end{align}
%
%
%
%
	Note that, here we can take $\arg\max$ because
        $\lambda^*_i(\mathbf e)$ is finite (which we shall establish
        in Corollary \ref{cor:finitenessoflambdastar}) for any
        $\mathbf e \in \mathcal{E}$, so there are finitely many plays
        in $\LambdaSet^*(c)$.  This~ends our definition of the
        strategy profile~$\sigma$, which we now prove is an SPE.
	
	
	Consider an arbitrary history $h$ and denote $\outcome(\sigma, h)$ by~$(c_j, w_j, c_{j+1})_{1 \leq j}$.
	Pick a configuration~$c^{\prime}$ such that $(c_1, w^{\prime}, c^{\prime}) \in T$, $c^{\prime}(j) = c_2(j)$ for all~$j \neq i$, but $c^{\prime}(i) \neq c_2(i)$.
	That is, $c^{\prime}$~is a configuration obtained by Player $i$'s deviation from~$\sigma_i$ at the end of~$h$.
	
	We show that  such a single player deviation is not beneficial if, after that deviation,
        all players continue to play following~$\sigma$:
	\begin{align*}
	\cost_i(c_1,c^{\prime}) + \cost_i(\sigma, h\cdot(c_1, w^{\prime},c^{\prime})) 
	&= \sup\limits_{\rho \in \LambdaSet^*(c^{\prime})} \{\cost_i(\rho)+ \cost_i(c_1, c^{\prime})\} \tag{by \eqref{def:sigma}}\\
	&\geq \mathop{\min\vphantom{\sup}}\limits_{\tilde c\in \dev_i(c_1,c_2)}\
         \sup\limits_{\rho \in \LambdaSet^*(\tilde{c})} \{cost_i(\rho)+ cost_i(c,\tilde{c})\}\\
	&= \lambda^*_i((c_1, w_1, c_2))\\
	&\geq \cost_i(\sigma, h)
	\end{align*}
	
	So no such configuration $c^{\prime}$ would be beneficial for Player $i$.
	Hence from any history $h$, $\sigma$ is very-weak~NE.
	Therefore, we can conclude that $\sigma$ is an~SPE.
\end{proof}

\subparagraph{Counter Graph.}
Having this correspondence between outcomes of SPE and $\LambdaSet^*$-paths,
our job to decide existence of SPE now reduces to checking
non-emptiness of~$\LambdaSet^*(c_\src)$.  To~this~aim, with any such
family $\mu = \tuple{\mu_i}_{i \in \set{n}}$ and configuration~$c$, we
associate an infinite-state
\emph{counter graph} $\bbC[\mu, c] = \tuple{C^{\prime},
  T^{\prime}}$ to capture $\mu$-consistent paths from~$c$:
\begin{itemize}
	\item the set of vertices
	  is~$C'= C\times (\bbN\cup\{+\infty\})^{\set n}$
	\item $T'$~contains all edges
	$((d,b),w,(d',b'))$ such that $(d,w,d')$ is an edge of~$\calM$ and for
	  all~$i\in\set n$, either $b'(i)=0$ if~$d(i)=\tgt$, or
          $b'_i=\min\{b_i-w_i,
	\mu_i(d,w,d')-w_i\}$ otherwise (provided that~$b'_i\geq 0$ for
	all~$i$, in~order for~$(d',b')$ to be an edge of~$\bbC[\mu,c]$).
\end{itemize}
%
%
Intuitively, in configuration~$(d,b)$, $b$~is used to enforce
$\mu$-consistency: each edge taken along a path imposes a constraint
on the cost of the players for the rest of the path; this constraint
is added to the constraints of the earlier edges, and propagated along
the path.

With the initial configuration~$c$, we~associate $b^c$ such that
$b_i^c=0$ if~$c(i)=\tgt$ and $b_i^c=+\infty$ otherwise: this
configuration imposes no constraints since no edges have been
taken~yet.

Notice that $\bbC[\mu,c]$ is infinite, but as we show below, only
finitely many states are reachable from the initial
configuration. We~write $|C'|_r$ for the number of reachable states
in~$C'$.

We extend region decomposition of $\calM = \tuple{C,T}$ to any counter
graph $\bbC[\mu, c] = \tuple{C^{\prime}, T^{\prime}}$ in the natural
way: $(c^{\prime}, b^{\prime}) \in X^{\prime}_j$ if $c^{\prime} \in
X_j$, and an edge $((c^{\prime}, b^{\prime}), w^{\prime},
(c^{\prime\prime}, b^{\prime\prime})) \in \SE'_{\geq j}$ if
$(c^{\prime}, w^{\prime}, c^{\prime\prime}) \in \SE_{\geq j}$.

We call a path $\pi$ from $(c,b^c)$ to $(c_\tgt,b)$ (for~some~$b$) as a \emph{valid} path in $\bbC[\mu, c]$.
For any path $\pi \in \bbC[\mu, c]$, (where $c \in X_j$), we write
$\pi = \pi[j]\cdot \pi[j+1] \cdots \pi[n]$ where $\pi[l]$ denotes the
(possibly empty) section of path in~$\SE_l^{\prime}$.
%
%
We also use the notation $\maxb_l(\mu,c)$ to denote the maximum finite counter value that appears in the vertices reachable from~$(c, b^c)$ and belonging
to $X_j^{\prime}$: more precisely,
\[
\maxb_l(\mu,c) = \max\{m \in \bbN\mid
\exists (c^{\prime},b^{\prime}) \in X_l \text{ s.t. }
(c, b^c) \rightarrow^{*} (c^{\prime}, b^{\prime})
\text{ and } \exists i \in \set{n}.\ b^{\prime}(i) = m\}.
\]
We extend this notation to
$\maxb_{\ge l}(\mu,c)$, which denotes $\max_{l \geq j} \maxb_l(\mu,c)$.
When
all the counter values are in~$\{0, \infty\}$, we~take
$\maxb_l(\mu,c) = 0$.

\begin{lemma}\label{lemma:lambdaconsis-pathsinCountGr}
	There exists a path $\pi = ((c_j, b_j), w_j (c_{j+1},
        b_{j+1}))_{j}$
        from~$(c,b^c)$
        to a
        vertex $(c_\tgt, b)$ (for~some~$b$) in the counter
        graph $\bbC[\lambda,c] = \tuple{C^{\prime}, T^{\prime}}$ of~$\calG$
        if, and only~if, there is a $\lambda$-consistent
        path $\rho = (c_j, w_j, c_{j+1})_j$ from~$c$ in~$\calG$.
\end{lemma}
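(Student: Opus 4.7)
The plan is to establish a direct correspondence between $\lambda$-consistent paths $\rho = (c_j,w_j,c_{j+1})_j$ from $c$ to $c_\tgt$ in $\calM$ and paths $\pi = ((c_j,b_j),w_j,(c_{j+1},b_{j+1}))_j$ from $(c,b^c)$ to $(c_\tgt,b)$ in $\bbC[\lambda,c]$: in both directions the underlying sequence of configurations and cost vectors is identical, and only the counter component needs to be reconstructed or validated. The crux is that the non-negativity requirement on the counter vectors $b_j$ encodes exactly the $\lambda$-consistency inequalities $\cost_i(\rho_{\geq k}) \leq \lambda_i(t_k)$.

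For the ``if'' direction, given a $\lambda$-consistent path $\rho$, I set $b_1 = b^c$ and define $b_{j+1}$ inductively following the update rule of $\bbC[\lambda,c]$. I then prove by induction on $j$ that $b_j(i) \geq \cost_i(\rho_{\geq j})$ for every player $i$, which in particular gives $b_j(i) \geq 0$ and hence that the projected sequence is a valid path of the counter graph. The base case uses that $b^c_i = 0$ when $c(i) = \tgt$ (matching $\cost_i(\rho_{\geq 1}) = 0$ via the zero-cost self-loop at $\tgt$) and $b^c_i = +\infty$ otherwise. In the induction step, for players not at $\tgt$, the hypothesis gives $b_j(i) - w_j(i) \geq \cost_i(\rho_{\geq j+1})$, and $\lambda$-consistency yields $\lambda_i(t_j) - w_j(i) \geq \cost_i(\rho_{\geq j}) - w_j(i) = \cost_i(\rho_{\geq j+1})$, so the minimum defining $b_{j+1}(i)$ is itself at least $\cost_i(\rho_{\geq j+1})$.

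For the ``only if'' direction, given a path $\pi$ in $\bbC[\lambda,c]$ ending at $(c_\tgt,b)$, the projection $\rho$ is clearly a path of $\calM$, and I establish $\lambda$-consistency from two consequences of the update rule. First, by telescoping the inequality $b_{j+1}(i) \leq b_j(i) - w_j(i)$ across the steps where player $i$ is not yet at $\tgt$, and using that $b_m(i) \geq 0$ at the first step $m$ where player $i$ reaches $\tgt$ (such an $m$ exists since $\pi$ ends in $c_\tgt$), I derive $b_k(i) \geq \cost_i(\rho_{\geq k})$ for every $k$. Second, the update rule directly implies $\lambda_i(t_k) \geq b_{k+1}(i) + w_k(i) \geq \cost_i(\rho_{\geq k+1}) + w_k(i) = \cost_i(\rho_{\geq k})$. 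The degenerate case where $c_k(i) = \tgt$ is handled by the fact that the $\mu^k$-construction forces $\mu^k_i(\bfe) = 0$ whenever the source of $\bfe$ has player $i$ at $\tgt$, which matches $\cost_i(\rho_{\geq k}) = 0$.

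The main obstacle I anticipate is not in the algebraic content but in the meticulous bookkeeping around players already at $\tgt$: the counter update, the initialization $b^c$, the definition of $\mu^k_i$ at edges whose source has $c(i) = \tgt$, and the cost convention on the zero-cost self-loop all rely on a separate zero-case that must be shown to match up. Once these boundary cases are pinned down, both directions reduce to the straightforward induction and telescoping arguments sketched above.
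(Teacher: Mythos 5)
Your proposal is correct and follows essentially the same route as the paper: the same counter-update definition of the $b_j$, the same telescoping of $b_{j+1}(i)\leq b_j(i)-w_j(i)$ for the counter-graph-to-consistency direction, and the same non-negativity argument for the converse. The only (cosmetic) difference is that for the converse you maintain the invariant $b_j(i)\geq \cost_i(\rho_{\geq j})$ by forward induction, whereas the paper assumes some $b_j(i)<0$ and unwinds backwards to contradict $\lambda$-consistency; your version is a cleaner packaging of the same computation, and your explicit attention to the $c_k(i)=\tgt$ boundary cases is a point the paper leaves implicit.
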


\begin{proof}
  Assume that such a path~$\pi$ exists. Along~$\pi$, for each player~$i$, let~$k(i)$ be the least
  index such that $c_{k(i)}(i) = \tgt$.
  Then it is enough to show that for $1 \leq k < k(i)$, we~have
  $\cost_i(\rho_{\geq k}) \leq \lambda_i(c_k, w_k, c_{k+1})$.
  And indeed we have
  \begin{align*}
    0 \leq b_{k(i)}(i) &\leq b_{k(i)-1}(i) - \cost_i(c_{k(i)-1}, c_{k(i)})\\
    &\leq b_{k(i)-2}(i) - \cost_i(c_{k(i)-2}, c_{k(i)-1}) - \cost_i(c_{k(i)-1}, c_{k(i)})\\
    &\vdots\\
    &\leq b_{k+1}(i) - \sum\limits_{j = k+1}^{k(i)-1} \cost_i(c_j, c_{j+1})
  \end{align*}
  so that $ \cost_i(\rho_{\geq k}) \leq b_{k+1}(i) \leq \lambda_i(c_k, w_k, c_{k+1})$,

  \medskip
  
  Conversely, 
  if there is a $\lambda$-consistent path
  $\rho = (c_j, w_j, c_{j+1})_{1 \leq j < |\rho|}$ from~$c$,
  we~define $\pi = ((c_j, b_j), w_j, (c_{j+1}, b_{j+1}))_{1 \leq j < |\rho|}$ with:
  \begin{align*}
    b_1(i) &= \begin{cases} 0 & \text{ if  $c_1(i) = \tgt$} \\ +\infty & \text { otherwise }\end{cases} \\
    b_j(i) &= \min\{b_{j-1}(i) - \cost_i(c_{j-1}, c_j), \lambda_i(c_{j-1}, w_{j-1}, c_j)  - \cost_i(c_{j-1}, c_j)\} \quad\text{ for } 1 < j < |\rho|
  \end{align*}
  For~$\pi$ to be a valid path in $\bbC[\lambda,c]$, we have to prove that 
  $b_j(i) \geq 0$ for all $1<j\leq |\rho|$ and all $i \in \set{n}$.		
  For $j = 1$, it is evident.
  
  For $j > 1$, the second element of the minimum defining $b_j(i)$ is non-negative,
  because $\cost_i(c_{j-1}, c_j) \leq \cost_i(\rho_{\geq j-1}) \leq \lambda_i(c_{j-1}, w_{j-1}, c_{j})$.
  Suppose the first element $b_{j-1}(i) - \cost_i(c_{j-1}, c_j)$ is negative.
  Using definition of $b_{j-1}(i)$,
  we can say the above inequality can be true only if either
  $b_{j-2}(i) - \sum_{k = j-2}^{j-1}\cost_i(c_{k}, c_{k+1})  < 0$, or
  $\lambda_i(c_{j-2}, w_{j-2}, c_{j-1}) - \sum_{k = j-2}^{j-1}\cost_i(c_{k}, c_{k+1}) < 0$.
  But again the latter cannot be true
  because it directly contradicts $\lambda$-consistency of~$\rho$.
  Hence, our supposition can only be true if the former constraint holds.
  We~repeat the process with $b_{j-2}(i)$,
  coming down to the condition $b_2(i) - \sum_{k = 2}^{j-1} \cost_i(c_k, w_k, c_{k+1})<0$
  to make our supposition $b_{j-1}(i) - \cost_i(c_{j-1}, c_j)<0$ true.
  But $b_2(i) = \lambda_i(c_1, w_1, c_2) - \cost_i(c_1, c_2)$,
  so that the inequality above entails $\lambda_i(c_1, w_1, c_2) < \sum\limits_{k = 1}^{j-1} \cost_i(c_k, c_{k+1})$.
This contradicts $\lambda$-consistency condition at the beginning of~$\rho$.
It~follows that  $b_j(i) \geq 0$ for all $1 \leq j \leq |\rho|$ and  all $i \in \set{n}$.
\end{proof}

\begin{lemma}\label{lemma: charac-supremuminifinity}
	$\sup_{\rho \in \LambdaSet_{\mu^k}(c)} \cost_i(\rho) = +\infty$ if, and only~if,
	there exists a valid path~$\pi$ in~$\bbC[\mu^k, c]$ 
	with the following conditions:
	\begin{itemize}
		\item $\pi$ is of the form $h\cdot \beta\cdot h^{\prime}$, where $\beta$ is a cycle in~$\bbC[\mu^k, c]$;
		\item Player $i$'s (constant) counter value~$b(i)$ is positive throughout~$\beta$.
	\end{itemize} 
\end{lemma}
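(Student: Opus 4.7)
My plan is to prove the two directions of the equivalence separately, using Lemma~\ref{lemma:lambdaconsis-pathsinCountGr} as the bridge between $\mu^k$-consistent paths in $\calG$ and valid paths in the counter graph $\bbC[\mu^k,c]$.

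For the $(\Leftarrow)$ direction, suppose such a valid path $\pi = h\cdot\beta\cdot h'$ exists. Since $\beta$ is a cycle in $\bbC[\mu^k, c]$, its endpoints coincide in both configuration and counter values, so for every $m \geq 1$ the concatenation $\pi_m := h\cdot\beta^m\cdot h'$ is again a valid path from $(c,b^c)$ to $(c_\tgt, b')$. By Lemma~\ref{lemma:lambdaconsis-pathsinCountGr}, each $\pi_m$ induces a $\mu^k$-consistent path $\rho_m \in \LambdaSet_{\mu^k}(c)$. The key observation is that the counter update $b'_j = \min\{b_j - w_j, \mu_j(\cdot) - w_j\}$, combined with the fact that $b(i)$ takes the \emph{same} value at the start and end of $\beta$ and the monotonicity $b'_j \leq b_j - w_j$, forces the following dichotomy on Player~$i$'s behavior along $\beta$: either $b(i) = +\infty$ throughout $\beta$ (in which case $\mu_i$ must also be $+\infty$ on each edge of $\beta$, and the underlying transitions in $\calM$ may carry strictly positive cost for Player~$i$), or $b(i)$ is a finite positive value throughout, in which case $w_j(i) = 0$ on every edge of $\beta$. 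The positivity hypothesis ensures we are in the meaningful first case, in which the total cost of $\beta$ for Player~$i$ is some $\Delta > 0$, and $\cost_i(\rho_m) \geq m\cdot\Delta \to +\infty$.

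For the $(\Rightarrow)$ direction, assume $\sup_{\rho\in\LambdaSet_{\mu^k}(c)}\cost_i(\rho) = +\infty$. Pick a sequence of $\mu^k$-consistent paths $(\rho_n)_n$ with $\cost_i(\rho_n)\to +\infty$ and lift them to valid paths $(\pi_n)_n$ in $\bbC[\mu^k, c]$. Since per-edge cost is bounded by $\kappa$, the lengths $|\pi_n|$ also tend to infinity, while the reachable state space of $\bbC[\mu^k, c]$ has finite cardinality $|C'|_r$. By a pigeonhole argument, for $n$ large enough, $\pi_n$ revisits the same counter-graph vertex and thus contains a cycle $\beta$. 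Moreover, for sufficiently large $n$, the total Player-$i$ cost $\cost_i(\rho_n)$ exceeds $|C'|_r\cdot\kappa$, which forces at least one such cycle $\beta$ to carry strictly positive Player-$i$ cost. A cycle in $\bbC$ can carry positive cost for Player~$i$ only when the counter $b(i)$ is not decremented along $\beta$; by the update rule this requires $b(i) = +\infty$ (hence positive) at the start, and therefore throughout, $\beta$. The resulting decomposition $\pi_n = h\cdot\beta\cdot h'$ is the desired structure.

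The main obstacle I expect is the technical bookkeeping around the $+\infty$ values in the counter updates: one must verify that a cycle in $\bbC$ on which Player~$i$ accumulates positive cost genuinely exists (as opposed to spurious zero-cost cycles that happen by revisits), and that the cycle's constant counter value is not merely positive but actually $+\infty$, which is what allows unbounded pumping. Once this rigidity of the update rule on cycles is established, both directions follow cleanly from Lemma~\ref{lemma:lambdaconsis-pathsinCountGr} and the finite-state pigeonhole on $\bbC[\mu^k,c]$.
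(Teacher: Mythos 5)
Your overall route is the same as the paper's: both directions go through Lemma~\ref{lemma:lambdaconsis-pathsinCountGr} to move between $\mu^k$-consistent paths in $\calG$ and valid paths in $\bbC[\mu^k,c]$, the forward direction is a pigeonhole on the finitely many reachable counter-graph vertices, and the backward direction pumps the cycle. Your forward direction is in fact more careful than the paper's: you justify why some cycle must carry strictly positive Player-$i$ cost (via the $|C'|_r\cdot\kappa$ threshold), and your observation that the non-increasing update rule forces the counter to be constant along any cycle, hence equal to $+\infty$ whenever the cycle has positive Player-$i$ cost, is exactly the rigidity the paper leaves implicit.

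The problem is in your $(\Leftarrow)$ direction. Your own dichotomy shows that ``$b(i)$ constant and positive along $\beta$'' is compatible with two situations: $b(i)=+\infty$ with $\mu_i=+\infty$ on every edge of $\beta$, or $b(i)$ finite positive with $w(i)=0$ on every edge of $\beta$. The sentence ``the positivity hypothesis ensures we are in the meaningful first case'' is false --- positivity does not rule out the finite case --- and even in the first case you concede that the transitions only ``may'' carry positive cost, so $\Delta>0$ is never established. If $\beta$ has zero Player-$i$ cost, pumping it does nothing and the implication fails. What the pumping argument actually needs is that $\beta$ carries strictly positive Player-$i$ cost; that is precisely the hypothesis the paper's own proof of the converse works with (``$\beta$ is a cycle and Player $i$'s cost $>0$ in $\beta$''), even though the lemma's second bullet is phrased in terms of the counter value. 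So you have, in effect, run into an imprecision of the statement itself; to close your proof you should either establish the converse under the positive-cost condition on $\beta$ (which, by your dichotomy, forces $b(i)=+\infty$ and hence positivity along $\beta$), or argue why the weaker positivity condition suffices where the lemma is invoked. As written, the $(\Leftarrow)$ step does not go through.
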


\begin{proof}

	$\sup_{\rho \in \LambdaSet_{\mu^k}(c)} \cost_i(\rho) = +\infty$ implies there exists a sequence of paths $(\rho_m)_{m \geq 1}$ in $\LambdaSet_{\mu^k}(c)$ such that $\cost_i(\rho) \rightarrow \infty$ as $m$ grows.
	By Lemma \ref{lemma:lambdaconsis-pathsinCountGr}, for each of those $\rho_m$, there exists corresponding $\pi_m$ from $(c, b^{c})$ to $(c_\tgt, b)$ in $\bbC[\mu^k, c]$.
	As a player's cost in a single edge is bounded, the length of~$\pi_m$ has to grow unboundedly.
	First of all, that is only possible if there is a cycle $\beta$ in $\bbC[\mu^k, c]$ - we have the first condition now.
	Now Player $i$'s counter value cannot be $0$ in $\beta$ because then the cycle doesn't contribute to make $\cost_i(\rho) \rightarrow 0$.
	Thus the second condition also holds true.

	Conversely, for $\pi = h\cdot \beta\cdot h^{\prime}$ in $\calC[\mu^k, c]$, where $\beta$ is a cycle and Player $i$'s cost $>0$ in $\beta$, we construct a sequence of paths $\pi^m = h\cdot \beta^{m+1}\cdot h^{\prime}$ for $m \geq 0$.
	By~Lemma~\ref{lemma:lambdaconsis-pathsinCountGr}, corresponding to this sequence of paths, there exists a sequence of  paths $(\rho_m)_{m \geq 0}$ in $\LambdaSet_{\mu^{k-1}}(c^{\prime\prime})$, and for those paths, $\cost_i(\rho_m) \rightarrow \infty$ as $m$~grows.
	Hence, $\mu^k_i(c,w,c^{\prime})=+\infty$ if these condition holds.
\end{proof}

\begin{lemma}\label{lemma:mu-stabilizes}
  For any edge $\bfe = (c, w, c^{\prime}) \in \SE_j$,
  the $(|V|+1)$-st family~$\mu^{|V|}$ of $\lambda^{j^*}$-building  functions
  satisfies $\mu^{|V|}_i(\bfe) \leq |V| \times \kappa$, where
  $\kappa=max_{e \in E}  \edgecost_e(n)$.
\end{lemma}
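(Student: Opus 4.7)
The plan is to establish a stronger, distance-sensitive bound: for every edge $\bfe=(c,w,c')\in \SE_{\geq j}$ and every $i\in\set n$,
\[
\mu^{|V|}_i(\bfe) \leq d_i(c)\cdot \kappa,
\]
where $d_i(c)$ denotes the length of a shortest path in $\calA$ from $c(i)$ to $\tgt$ (so $d_i(c)=0$ iff $c(i)=\tgt$). Since $\tgt$ is reachable from every state and $\calA$ has $|V|$ vertices, $d_i(c)\leq |V|$, so this refinement immediately yields the lemma.

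I would prove this by a double induction. The outer induction is on $j$, descending from $n$ to $0$, and shows simultaneously that $\lambda^{j^*}_i(\bfe)\leq d_i(c)\cdot\kappa$ for all $\bfe\in \SE_{\geq j}$; the base $j=n$ is trivial since $\SE_{\geq n}=\{(c_\tgt,0^n,c_\tgt)\}$ and $d_i(c_\tgt)=0$. Within each outer step, the IH already handles edges in $\SE_{\geq j+1}$, on which $\mu^k_i$ coincides with $\lambda^{(j+1)^*}_i$ by construction, so the real content is an inner induction on $k$ showing that for $\bfe=(c,w,c')\in\SE_j$ with $d_i(c)\leq k$, one has $\mu^k_i(\bfe)\leq d_i(c)\cdot\kappa$. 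For $k=0$, the hypothesis $d_i(c)\leq 0$ forces $c(i)=\tgt$, so the bound is immediate from the definition of $\mu^0$.

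For the inductive step, given $\bfe=(c,w,c')\in \SE_j$ with $c(i)\neq\tgt$, I would exhibit a concrete witness deviation: pick a neighbor $v'$ of $c(i)$ in $\calA$ with $d(v')=d_i(c)-1$, and let $c''\in\dev_i(c,c')$ be obtained by sending Player~$i$ to $v'$ while keeping every other player's move unchanged. If the non-emptiness side condition in the definition of $\mu^k$ fails, then $\mu^k_i(\bfe)=-\infty$ and the bound holds trivially; otherwise any $\rho=t_1 t_2\cdots \in \LambdaSet_{\mu^{k-1}}(c'')$ satisfies $\cost_i(\rho)\leq \mu^{k-1}_i(t_1)$ by $\mu^{k-1}$-consistency at position~$1$. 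Since $t_1$ starts in $c''$ with $d_i(c'')=d_i(c)-1\leq k-1$, either $t_1\in\SE_{\geq j+1}$ and the outer IH gives $\mu^{k-1}_i(t_1)=\lambda^{(j+1)^*}_i(t_1)\leq (d_i(c)-1)\cdot\kappa$, or $t_1\in \SE_j$ and the inner IH gives the same bound. Taking the sup over $\rho$ and adding $\cost_i(c,c'')\leq\kappa$ yields $\mu^k_i(\bfe)\leq \kappa+(d_i(c)-1)\cdot\kappa=d_i(c)\cdot\kappa$, closing both inductions.

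The main obstacle I foresee is the need to strengthen the inductive statement from the flat bound $|V|\cdot\kappa$ of the lemma to the distance-sensitive $d_i(c)\cdot\kappa$: without this strengthening the inner step does not close, because the recursion must strip off exactly one factor of $\kappa$ per level of $k$, which is only visible once the bound is indexed by the shortest-path distance to $\tgt$. Everything else then follows routinely from the single observation that, by $\mu^{k-1}$-consistency at the first edge, $\cost_i(\rho)\leq \mu^{k-1}_i(t_1)$ for any $\mu^{k-1}$-consistent $\rho=t_1 t_2\cdots$; this is what turns the $\mu$-iteration into a well-behaved fixed-point computation with a controlled convergence bound.
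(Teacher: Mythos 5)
Your proposal is correct and follows essentially the same route as the paper: the paper likewise strengthens the statement to the distance-sensitive bound $m\cdot\kappa$ (with $m$ the shortest-path distance from $c(i)$ to $\tgt$), inducts on that distance, uses the first edge of a shortest path as the witness deviation $c''\in\dev_i(c,c')$, and bounds $\cost_i(\rho)\leq\mu^{k-1}_i(t_1)$ via consistency at the first edge before invoking the monotonicity of $(\mu^k)_k$ to conclude. Your explicit case split on whether $t_1$ lies in $\SE_j$ or $\SE_{\geq j+1}$ (handled by the inner versus outer induction hypothesis) only makes precise a step the paper leaves implicit.
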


\begin{proof}
	We prove a slightly stronger statement.
	Let~$m$ be the length of the shortest path from~$c(i)$ to~$\tgt$ in~$\calA$.
	We~show (by~induction) that $\mu^m_i(\mathbf e) \leq m \times \kappa$.
	As $m \leq |V|$ and by Lemma~\ref{lemma-C4}, this entails our current lemma.
	
	
        If~$m=1$, then there is an edge~$e$ from $c[i]$ to~$\tgt$ in~$\calA$.
	Let us consider an edge $\bfe^{\prime} = (c, w^{\prime}, c^{\prime\prime})$
        such that $c^{\prime\prime}[j] = c^{\prime}[j]$ for all $j \neq i$, and $c^{\prime\prime}[i] = \tgt$.
	Now if $\LambdaSet_{\mu^{m-1}}(\tilde{c}) = \emptyset$ for any $(c, \tilde{w}, \tilde{c}) \in T$,
        then anyway $\mu_i^m(\mathbf e)=-\infty$, and the result holds.
	Otherwise, we~have $\LambdaSet_{\mu^{m-1}}(c^{\prime\prime}) \neq \emptyset$,
        and for any $\rho \in \LambdaSet_{\mu^{m-1}}(c^{\prime\prime})$, we~have $\cost_i(\rho) = 0$.
	Therefore, $\mu^m_i(\bfe) = \cost_i(c,c^{\prime\prime})$, which is bounded by~$\kappa$.
	
        \medskip
        Now assume that the induction hypothesis holds up to step~$m-1$, and 
	consider a configuration~$c$ such that the length of a shortest path from~$c(i)$ to~$\tgt$ in~$\calA$ is~$m$.
        Fix an edge $(c,w,c^{\prime}) \in T$.
	Write~$(c(i), \edgecost, v^{\prime}) \in E$ for the first edge of a shortest path from~$c(i)$ to~$\tgt$.
	We consider a configuration $c^{\prime\prime}$ such that $c^{\prime\prime}(j) = c^{\prime}(j)$ for all $j \neq i$,
        and $c^{\prime\prime}(i) = v^{\prime}$.
	By~construction, there is a path from $c^{\prime\prime}(i)$ to~$\tgt$ of length~$\leq m-1$.
	By~induction hypothesis, for any edge $(c^{\prime\prime}, \tilde{w}, \tilde{c}) \in T$,
        we~have $\mu^{m-1}_i((c^{\prime\prime}, \tilde{w}, \tilde{c})) \leq (m-1) \times \kappa$.
	This implies for any path $\rho = (t_j)_{j \geq 1} \in \LambdaSet_{\mu^{m-1}}(c^{\prime\prime})$,
        we~have $\cost_i(\rho) \leq \mu^{m-1}_i(t_1) \leq (m-1) \times \kappa$.
	Therefore, $\mu^m_i(c,w,c^{\prime}) \leq \cost_i(c,c^{\prime\prime}) + (m-1) \times \kappa \leq m \times \kappa$.
\end{proof}
	

Lemmas~\ref{lemma-C4} and~\ref{lemma:mu-stabilizes} provide a bound
on the number of steps until any sequence of $\lambda^{j^*}$-computing
functions stabilize:
\begin{corollary}
  Any sequence~$(\mu^k)_{k\geq 0}$ of $\lambda^{j^*}$-computing
  functions stabilizes after at most $|V|(1+n\cdot\kappa\cdot |E|^n)$
  steps.
\end{corollary}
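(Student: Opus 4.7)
The plan is to combine the two preceding lemmas via a straightforward potential-function argument. By Lemma~\ref{lemma:mu-stabilizes}, after the first $|V|$ iterations every finite value $\mu^{|V|}_i(\bfe)$ lies in $\{0, 1, \ldots, |V|\cdot\kappa\}$; coordinates that take the value $-\infty$ stay there from that point on, since the emptiness condition triggering $-\infty$ in the recurrence is preserved when passing to a smaller~$\mu$.

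The central observation is that the operator $\mu^{k-1} \mapsto \mu^k$ is deterministic: if $\mu^k = \mu^{k-1}$, then by unfolding the recurrence $\mu^{k'} = \mu^k$ for all~$k' \geq k$, so the sequence has stabilized. Contrapositively, at~every step before stabilization, at~least one coordinate $\mu^k_i(\bfe)$ must strictly decrease from step $k-1$ to step~$k$. By Lemma~\ref{lemma-C4}, each coordinate is non-increasing, so once step~$|V|$ is reached, it can strictly decrease at most $|V|\cdot\kappa$ further times before reaching either~$0$ or~$-\infty$.

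It~then remains to count the number of relevant coordinates $(i,\bfe)$. The players contribute a factor~$n$, and for each source configuration, the number of outgoing move vectors---and hence the number of relevant edges~$\bfe$ departing from it---is bounded by $|E|^n$, since each of the $n$ players has at most $|E|$ choices of edge. This gives a per-configuration budget of $n\cdot|E|^n$ coordinates, each able to decrease at most $|V|\cdot\kappa$ times, yielding at most $n\cdot\kappa\cdot|V|\cdot|E|^n$ strict decreases after step~$|V|$. Adding the initial $|V|$ iterations produces the claimed bound $|V|(1+n\cdot\kappa\cdot|E|^n)$.

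The main delicate point is the counting argument in the last step: a~naive enumeration over all of $\SE_{\geq j}$ is too coarse, so one must exploit that only coordinates on~$\SE_j$ actually change (coordinates on $\SE_{\geq j+1}$ are fixed to~$\lambda^{(j+1)^*}$) and that the recurrence is local, to localise the decrease count per configuration and match the stated factor $n\cdot|E|^n$ rather than a larger one involving~$|\SE_j|$.
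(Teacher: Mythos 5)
Your skeleton is the intended one (the paper itself only asserts this corollary as a consequence of Lemmas~\ref{lemma-C4} and~\ref{lemma:mu-stabilizes}): monotonicity of the sequence, the bound $\mu^{|V|}_i(\mathbf e)\leq |V|\cdot\kappa$ after the first $|V|$ iterations, and the observation that the update operator is deterministic, so every non-stabilized step forces a strict decrease of at least one coordinate. Those three ingredients are correct and correctly assembled.

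The gap is in the counting of coordinates, and you half-acknowledge it yourself in your last paragraph. The total number of iterations before stabilization is bounded by the \emph{total} number of strict decreases summed over \emph{all} coordinates $(i,\mathbf e)$ with $\mathbf e\in\SE_j$, because decreases at different configurations can occur at different, interleaved steps; there is no way to ``localise the decrease count per configuration,'' and the recurrence is in any case not local (the value $\mu^k_i(\mathbf e)$ depends on $\LambdaSet_{\mu^{k-1}}$-sets, hence on edges throughout the graph). So a per-configuration budget of $n\cdot|E|^n$ does not by itself yield the global bound --- naively summing it over the $|V|^n$ source configurations would give an extra factor of $|C|$. The missing observation is that the \emph{total} number of transitions in $T$ (hence in $\SE_j$) is at most $|E|^n$: every transition of $\calM$ is induced by some move vector $(e_i)_{i\in\set n}\in E^{\set n}$, and such a vector already determines its source configuration via $c(i)=$ source of $e_i$; equivalently, $\sum_{c}\prod_{i}\deg(c(i))=\bigl(\sum_{v}\deg(v)\bigr)^{n}=|E|^{n}$. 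With that, the global coordinate count is $n\cdot|E|^n$, each coordinate strictly decreases at most $|V|\cdot\kappa$ times (up to the harmless extra drop to $-\infty$), and the stated bound $|V|(1+n\cdot\kappa\cdot|E|^n)$ follows. Once you replace the ``per-configuration'' framing by this global count, the proof is complete.
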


From Lemma~\ref{lemma:mu-stabilizes}, we also~get that 
the sequence $(\mu^k)_{k \geq 0}$ built for computing~$\lambda^{j^*}$
cannot stabilize unless for all $i \in \set{n}$,
for all $\mathbf e \in \SE_{\geq j}$, we~have $\mu^k_i(\mathbf e) \leq |V|
\times \kappa$.
By~Lemma~\ref{lemma-C4}:

\begin{corollary}\label{cor:finitenessoflambdastar}
  For any $0 \leq i, j \leq n$, and any $\bfe \in \SE_{\geq j}$, we~have
  $\lambda^{j*}_i(\bfe) \leq |V| \times \kappa$.
\end{corollary}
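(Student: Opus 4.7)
The plan is to prove the statement by downward induction on $j$, using Lemma~\ref{lemma:mu-stabilizes} together with the monotonicity established in Lemma~\ref{lemma-C4}. Since $\lambda^{j^*}$ is by construction the pointwise limit of the sequence $(\mu^k)_{k\geq 0}$ of $\lambda^{j^*}$-building functions, the two ingredients we need are (i) a finite step at which the bound $|V|\cdot \kappa$ is reached and (ii) the fact that subsequent iterations can only decrease the values.

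The base case $j=n$ is immediate: $\SE_{\geq n}=\{(c_\tgt,0^n,c_\tgt)\}$ and we defined $\lambda^{n^*}_i$ of this unique edge to be $0$, which is trivially bounded by $|V|\cdot \kappa$. For the inductive step, I would assume the bound $\lambda^{(j+1)^*}_i(\bfe)\leq |V|\cdot \kappa$ for every $\bfe\in \SE_{\geq j+1}$ and every $i\in\set n$, and split the set $\SE_{\geq j}$ into $\SE_{\geq j+1}$ and $\SE_j$. For edges in $\SE_{\geq j+1}$, the construction of $\mu^k$ fixes $\mu^k_i(\bfe)=\lambda^{(j+1)^*}_i(\bfe)$ for every $k\geq 0$, so the limit $\lambda^{j^*}_i(\bfe)$ equals $\lambda^{(j+1)^*}_i(\bfe)$ and the induction hypothesis gives the required bound.

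The remaining case is $\bfe\in \SE_j$. Here Lemma~\ref{lemma:mu-stabilizes} directly yields $\mu^{|V|}_i(\bfe)\leq |V|\cdot \kappa$. Since Lemma~\ref{lemma-C4} asserts that $(\mu^k_i(\bfe))_{k\geq 0}$ is non-increasing, every later iteration---and in particular the limit $\lambda^{j^*}_i(\bfe)$---is also at most $|V|\cdot \kappa$. Combining both cases closes the induction.

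I do not expect any real obstacle: the corollary is essentially a bookkeeping argument that propagates the uniform bound of Lemma~\ref{lemma:mu-stabilizes} from region $\SE_j$ up to the full $\SE_{\geq j}$ via the inductive hypothesis. The only point worth checking carefully is that the definition of $\mu^k$ on $\SE_{\geq j+1}$ really does freeze the values at $\lambda^{(j+1)^*}$ (so no new refinement happens there), which is exactly how $\mu^k$ was set up in the construction preceding the statement.
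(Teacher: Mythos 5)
Your proof is correct and follows essentially the same route as the paper, which derives the corollary directly from Lemma~\ref{lemma:mu-stabilizes} (the bound $\mu^{|V|}_i(\bfe)\leq |V|\cdot\kappa$ on $\SE_j$) combined with the monotonicity of Lemma~\ref{lemma-C4} and the inductive definition of $\lambda^{j^*}$. Your write-up merely makes explicit the downward induction on $j$ (and the freezing of values on $\SE_{\geq j+1}$) that the paper leaves implicit, and you correctly use the non-increasing version of the monotonicity as stated in the appendix rather than the (misprinted) non-decreasing claim in the main text.
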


At this point, we have bounded the finite values that
$\lambda^{j^*}$'s can take. But~in the transitioning from
$\lambda^{(j+1)^*}$ to~$\lambda^{j^*}$, $\mu^k_i$~can take larger
values when $k < |V|$.  In~the sequel, we bound the values that any
family $\mu^{k} = \tuple{\mu^k_i}_{i \in \set{n}}$ can return.
	

To this aim, we~begin with working on the supremum of the cost for Player~$i$
of the paths in~$\LambdaSet_{\mu^k}(c)$.

When we consider a $\maxb_l(\mu^k,c)$ with $c \in X_j$ of $\calM$, it is implicit that $l \geq j$.
For $l > j$, we have
\[
\maxb_l(\mu^k,c) \leq \maxb_l(\mu^0,c) \leq 
  \max\limits_{\substack{\mathbf e \in \SE_l\\i \in \set{n}}}\lambda^{l^*}_i(\bfe)
  \leq
  |V| \times \kappa
\]
because $\mu^k_i(\bfe) = \mu^0_i(\bfe) = \lambda^{l^*}_i(\bfe)$
for those $\bfe \in \SE_{\geq l}$ and for all $i \in \set{n}$.
So it remains to bound $\maxb_j(\mu^k,c)$; but for that too,
when $k \geq |V|$, we have
$\maxb_j(\mu^k,c) \leq |V| \times \kappa$.
Therefore, we only need to provide a bound for $\maxb_j(\mu^k,c)$
when $k < |V|$.

\begin{lemma}\label{lemma:BoundforMaxb}
	For an edge $\bfe = (c,w,c^{\prime}) \in Z_{j}$ and a
	$\lambda^{j^*}$-building function
	$\mu^k = \tuple{\mu^k_i}_{i \in \set{n}}$, if $\mu^k_i(\bfe)$ is non-zero finite then it~holds
	\[
	\mu^k_i(\bfe)\leq (n|C|+2|V|) \times
        \sum_{l=1}^k (n|C|)^{l-1}\cdot \kappa^l
	\]
	Moreover, the above bound
        also applies to $\maxb_j(\mu^k, c^{\prime\prime})$ for any $c^{\prime\prime} \in X_j$:
        \[
        \maxb_j(\mu^k, c) \leq (n|C|+2|V|) \times
        \sum_{l=1}^k (n|C|)^{l-1}\cdot \kappa^l.
        \]
\end{lemma}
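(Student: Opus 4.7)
I would prove this by induction on $k$, writing $B_k := (n|C|+2|V|)\sum_{l=1}^{k}(n|C|)^{l-1}\kappa^l$ and noting the useful recurrence $B_k = n|C|\cdot\kappa\cdot B_{k-1} + (n|C|+2|V|)\kappa$ with $B_0 = 0$. The base case $k=0$ is essentially vacuous on edges of $\SE_j$, where $\mu^0_i$ takes only the values $0$ and $+\infty$; on edges of $\SE_{\ge j+1}$ the value $\mu^0_i$ coincides with $\lambda^{(j+1)^*}_i$ and is already bounded by $|V|\kappa\le B_k$ for $k\ge 1$ by Corollary~\ref{cor:finitenessoflambdastar}. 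Thus for the first inductive step $k=1$, we only need this ``trivial'' bound $|V|\kappa$ on those $\mu^0$ values that actually feed into the computation.

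For the inductive step at $k\ge 1$, fix $\bfe=(c,w,c')\in \SE_j$ with $\mu^k_i(\bfe)$ finite and non-zero (so in particular $c(i)\ne \tgt$). Let $c''\in\dev_i(c,c')$ be a minimizer, giving
\[
\mu^k_i(\bfe) \;=\; \cost_i(c,c'') + \sup_{\rho\in\LambdaSet_{\mu^{k-1}}(c'')}\cost_i(\rho);
\]
the first term is at most $\kappa$, and because $\mu^k_i(\bfe)<+\infty$ the supremum is finite. By Lemma~\ref{lemma: charac-supremuminifinity} this rules out any cycle in the counter graph $\bbC[\mu^{k-1},c'']$ on which Player~$i$'s counter is positive, and Lemma~\ref{lemma:lambdaconsis-pathsinCountGr} then lets us estimate the supremum by analyzing paths in that graph.

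The heart of the proof is to bound the length $L$ of any path in $\bbC[\mu^{k-1},c'']$ from $(c'',b^{c''})$ until it reaches $(c_\tgt,0^n)$; this immediately gives $\sup_\rho\cost_i(\rho)\le L\cdot\kappa$. I would exploit three structural properties: (a)~counter components are non-increasing along transitions; (b)~any transition that is not the trivial self-loop at $(c_\tgt,0^n)$ strictly decreases the counter of at least one player (any player not yet at $\tgt$), so the only reachable cycle is that self-loop; (c)~by the induction hypothesis, every finite counter value appearing along the path is at most $B_{k-1}$. I would then split the path into an \emph{unconstrained regime} for Player~$i$ (where $b(i)=+\infty$) and a \emph{constrained regime} starting once $b(i)$ drops to a finite value bounded by $B_{k-1}$, and charge each unit of any player's budget to at most $|C|$ configuration visits. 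This yields $L\le n|C|\cdot B_{k-1} + (n|C|+2|V|)$; adding $\cost_i(c,c'')\le\kappa$ gives $\mu^k_i(\bfe)\le n|C|\kappa\cdot B_{k-1}+(n|C|+2|V|)\kappa = B_k$, closing the induction. The moreover clause is then immediate: by the transition rule of $\bbC[\mu^k,c]$, every finite coordinate of a reachable state is of the form $\mu^k_i(\bfe)-w(i)$ for some edge $\bfe$ used along the path, hence is at most $B_k$.

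\textbf{The hard part} is the length bound $L\le n|C|\cdot B_{k-1}+n|C|+2|V|$. The counter graph has a priori exponentially many reachable states in $n$ (a naive state count would only give an $(B_{k-1}+2)^n$-type bound), so monotonicity alone is insufficient. A careful charging argument is needed—distributing each unit of any player's counter budget across at most $|C|$ configuration visits while the budget is spent, and accounting separately for the short unconstrained prefix and for the transitional steps where some new player first becomes constrained or first reaches $\tgt$—in order to match the coefficients $n|C|$ and $2|V|$ of the recurrence exactly. This is the technical crux of the proof.
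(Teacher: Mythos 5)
Your overall strategy---induction on $k$ with the recurrence $B_k = n|C|\cdot\kappa\cdot B_{k-1} + (n|C|+2|V|)\cdot\kappa$, realized by a counting argument in the counter graph $\bbC[\mu^{k-1},c'']$---is the same as the paper's, and your treatment of the base case, of the minimizer $c''\in\dev_i(c,c')$, and of the ``moreover'' clause all match. The genuine gap is your structural claim~(b), and the reduction you build on it. A transition of the counter graph need \emph{not} strictly decrease any counter: for a player $l$ with $d(l)\neq\tgt$ the update is $b'_l=\min\{b_l-w_l,\ \mu_l(d,w,d')-w_l\}$, so $b'_l=b_l$ whenever $w_l=0$ and $b_l\leq\mu_l(d,w,d')$, and also whenever $b_l=\mu_l(d,w,d')=+\infty$; cost functions may take the value $0$ and nothing forces the $\mu^{k-1}_l$ to be finite. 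Consequently $\bbC[\mu^{k-1},c'']$ can contain reachable cycles other than the self-loop on $(c_\tgt,0^n)$---for instance cycles on which Player~$i$'s counter is $0$, or on which every player still in play incurs cost $0$---and Lemma~\ref{lemma: charac-supremuminifinity} only excludes cycles on which \emph{Player~$i$'s} counter is positive. Hence the total length $L$ of a valid path is in general unbounded even though $\sup_{\rho}\cost_i(\rho)$ is finite, and the inequality $\sup_{\rho}\cost_i(\rho)\leq L\cdot\kappa$ cannot serve as the basis of the bound. (There is also a small arithmetic slip: $L\cdot\kappa+\cost_i(c,c'')$ with your claimed $L$ gives $B_k+\kappa$, not $B_k$.)

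The repair is to bound not the length of the path but the number of steps during which Player~$i$ can still accrue cost, which is what the paper does. Since the supremum is finite, within every window of $|C|$ consecutive steps inside region $X_j$ either some counter strictly decreases or Player~$i$ reaches $\tgt$: otherwise, by pigeonhole on the $|C|$ configurations and monotonicity of the counters, one exhibits a cycle on which Player~$i$'s counter is positive, contradicting Lemma~\ref{lemma: charac-supremuminifinity} via Lemma~\ref{lemma:lambdaconsis-pathsinCountGr}. Each counter lies in $\set{\maxb_j(\mu^{k-1},c'')}\cup\{0,+\infty\}$ and drops by at least $1$ each time it changes, so after at most $n\cdot|C|\cdot(\maxb_j(\mu^{k-1},c'')+1)$ steps the path leaves $X_j$, after which Player~$i$'s residual cost is at most $\max_{l>j}\maxb_l\leq|V|\cdot\kappa$ by Corollary~\ref{cor:finitenessoflambdastar}. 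Multiplying the step count by $\kappa$, adding the deviation cost $\cost_i(c,c'')$ and the (easy) case $c''\in X_{>j}$, and invoking the induction hypothesis on $\maxb_j(\mu^{k-1},c'')$ yields exactly $B_k$. Your intuition of ``charging each unit of budget to at most $|C|$ visits'' is the right one, but as written it is only a sketch of the crux and it rests on the false claim~(b); the argument must be phrased in terms of counter decrements and Player~$i$'s cost, not path length, to be sound.
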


\begin{proof}
		We can claim that the finite maximum counter value appeared in $X_j^{\prime}$ of any counter graph $\bbC[\mu^k, c^{\prime\prime}]$ (where $c \in X_j$) is bounded by the finite maximum $\mu^k_i(\mathbf e)$ value appeared the same region (maximum over $i$ and $\mathbf e \in Z_j$), i.e,
		\[
		\maxb_j(\mu^k, c^{\prime\prime}) \leq
		\max\{\mu^k_i(c, w, c) \in \bbN \mid
		((c, b), w, (c^{\prime}, b^{\prime}))
		\in \SE_{j}^{\prime} \text{ of } \bbC[\mu^k, c^{\prime\prime}],\ i \in \set{n}\}.
		\]
		This is justified because for the initial vertex $(c,b^c)$ of $X_j^{\prime}$, $b^c \in \{0,\infty\}$.
		Hence, a counter value becomes non-zero finite, when some $\mu^k_i(\mathbf e)$ for $\mathbf e \in \SE_j$ becomes non-zero finite.
		But by definition of counter graph, $((c, b), w, (c^{\prime}, b^{\prime})) \in \SE_{j}^{\prime}$ implies $(c, w, c^{\prime}) \in \SE_j$, hence we can obtain,
		\[
		\maxb_j(\mu^k, c^{\prime\prime}) \leq \max\{\mu^k_i(c, w, c^{\prime}) \in \bbN \mid (c,w,c^{\prime}) \in \SE_j, i \in \set{n}\}
		\]
%
%
%
        By~induction on~$k$, we prove that
	for any $c^{\prime\prime} \in X_j$ and any~$i\in\set n$,                 
        \[
	\maxb_j(\mu^k, c^{\prime\prime}) \leq \max\{\mu^k_i(\bfe) \in \bbN \mid \bfe \in Z_j\} \leq(n|C|+2|V|) \times
        \sum_{l=1}^k (n|C|)^{l-1}\cdot \kappa^l.
	\]

	As $\mu^0_i(\mathbf e) \in \{0,\infty\}$ for all $\mathbf \in \SE_j$, the given bounds hold for $k = 0$.
        
	We fix an arbitrary $\bfe  = (c, w,c^{\prime})\in \SE_j$ and a player $i \in \set{n}$ such that $\mu^k_i(\bfe) \in \bbN \setminus \{0\}$.
	That $\mu_i^k(\bfe)$ is a non-zero finite value means that
	$\sup_{\rho \in \LambdaSet_{\mu^{k-1}(\tilde{c})}} \cost_i(\rho) \in \bbN$
	for some $\tilde{c} \in \dev_i(c,c')$.	
	If $\tilde{c}(i) = \tgt$ then $\sup_{\rho \in \LambdaSet_{\mu^{k-1}}(\tilde{c})} \cost_i(\rho) = 0$, and that makes $\mu^k_i(\mathbf e) \leq \cost_i(c, \tilde{c}) \leq \kappa$, satisfying the given bound.
	
	Otherwise, $\tilde{c}(i) \neq \tgt$,
	and depending whether $\tilde{c}$ belongs to $X_j$ or~$X_{>j}$, we analyze two cases:
	\begin{itemize}	
		\item $\tilde{c} \in X_l$ for some $l > j$:
		Then
		\begin{xalignat*}1
		\mu^k_i(\mathbf e) &\leq \sup\limits_{\rho \in \LambdaSet_{\mu^{k-1}}(\tilde{c})} \{\cost_i(c, \tilde{c}) + \cost_i(\rho)\} \\
		&\leq \cost_i(c, \tilde{c}) + \max\limits_{\substack{\mathbf e^{\prime} \in \SE_l\\i \in \set{n}}} \mu^{k-1}_i(\mathbf e^{\prime})
		\leq (1+|V|) \times \kappa.
                \end{xalignat*}
		
		
		\item $\tilde{c} \in X_j$:
		Now consider $\bbC[\mu^{k-1}, \tilde{c}]$, and its initial vertex $(\tilde{c}, b^{\tilde{c}})$.
		By Lemma \ref{lemma:lambdaconsis-pathsinCountGr}, for any path $\rho \in \LambdaSet_{\mu^{k-1}}(\tilde{c})$, we have a corresponding valid path $\pi_\rho$ in $\bbC[\mu^{k-1}, \tilde{c}]$.
		We also consider the region decomposition $\pi_\rho = \pi_\rho[j] \ldots \pi_\rho[n]$.
		From the fact that ${\sup_{\rho \in \LambdaSet_{\mu^{k-1}}(\tilde{c})} < +\infty}$, we~argue that in~$\pi_\rho$, from $(\tilde{c}, b^{\tilde{c}})$ within each $|C|$ step either one counter value strictly decreases, or Player $i$ reaches the $\tgt$.
		Otherwise, there would have been a cycle in $\bbC[\mu^{k-1}, \tilde{c}]$ resulting the supremum $+\infty$ ( thanks to lemma \ref{lemma: charac-supremuminifinity}).
		Recall by design, counter values of $\pi_\rho$ in $X_j^{\prime}$ lie in $\set{\maxb_j(\mu^{k-1}, \tilde{c})} \cup \{0, +\infty\}$, and when a counter value decreases along an edge, it decreases at least by $1$.
		Therefore, within $n \times |C| \times (\maxb_j(\mu^{k-1}, \tilde{c})+1)$ steps from $(\tilde{c}, b^{\tilde{c}})$ at least one of the counter value becomes $0$.
		When a player-$l$ counter value becomes $0$, $\pi_\rho$ must reaches the next region (making the corresponding $c(l) = \tgt$), otherwise $\pi_\rho$ is not a valid path.
		Moreover, from the next region, $\cost_i(\rho[j+1] \ldots \rho[n])$ is bounded by $\mu^{k-1}_i(\bfe^{\prime})$, where $\bfe^{\prime}$ denotes the first edge of $\rho$ which belongs to $\SE_{>j}$. 
		Therefore,
		\begin{align*}
		\cost_i(\rho) 
		&\leq (n \times |C| \times (\maxb_j(\mu^{k-1}, \tilde{c})+1) )\times \kappa + \max_{l > j}\maxb_l(\mu^{k-1}, \tilde{c})\\
		&\leq (n \times |C| + |V|) \times \kappa + (n \times |C|) \times \maxb_j(\mu^{k-1}, \tilde{c}) \times \kappa.
		\end{align*}
		If Player $i$ reaches $\tgt$ within $X_j^{\prime}$, $\cost_i(\rho)$ would be much smaller.
		In the above, we have used the bound from corollary \ref{cor:finitenessoflambdastar} as $\mu^{k-1}(\bfe) = \lambda^{l^*}(\bfe)$ for $\bfe \in X_l$ for $l > j$.
		Now as $\tilde{c} \in X_j$, we can use induction hypothesis for giving bound to $\maxb_j(\mu^{k-1}, \tilde{c})$.
		As the above shown bound works for any $\rho \in \LambdaSet_{\mu^{k-1}}(\tilde{c})$, it works for the supremum too, hence we have
		\begin{align*}
		\mu^k_i(c, w, c^{\prime}) &\leq \sup\limits_{\rho \in \LambdaSet_{\mu^{k-1}}(\tilde{c})} \cost_i(c^{\prime}, \tilde{c}) + \cost_i(\rho)\\
		&\leq |V| \times \kappa + (n \times |C| + |V|) \times \kappa + (n \times |C|) \times \maxb_j(\mu^{k-1}, \tilde{c}) \times \kappa\\
		&\leq (n|C| + 2|V|) \times \kappa + (n|C|\times \kappa) \times
                \{(n|C|+2|V|) \times
        \sum_{l=1}^{k-1} (n|C|)^{l-1}\cdot \kappa^l                
                \\
		&= (n|C|+2|V|) \times
                \sum_{l=1}^{k} (n|C|)^{l-1}\cdot \kappa^l
		\end{align*}
		%
	\end{itemize}
\end{proof}

This ends having bound for any $\maxb_l(\mu, c)$ for any $\lambda^{j^*}$-building function $\mu$, $c \in X_j$, and $l \geq j$.
Hence, we can conclude that the counter graph $\bbC[\mu,c] = \tuple{C^{\prime}, T^{\prime}}$ can be made finite, by taking
$C^{\prime} = \{(d,b) \in C \times ([0;Y] \cup \{+\infty\})^{\set{n}} \mid
c \Rightarrow^* d\}$,
with $Y=n^{|V|}\cdot |V|^{n.|V|} \cdot \kappa^{|V|}$, which is doubly-exponential in the encoding of $n$.
Note that, this makes $|C'|_r$ at most double-exponential too - which will be one of the key arguments in the complexity analysis of the final algorithm.


\subparagraph{Algorithm.}
Now we describe in details how we check non-emptiness of $\LambdaSet_{\lambda^*}(c_\src)$.
Starting from~$\lambda^{n^*}$, we inductively compute $\lambda^{j^*} = \tuple{\lambda^{j^*}_i}_{i \in \set{n}}$.
When computing~$\lambda^{j^*}$, we use an intermediary induction (following the definition): we initialize with the definition of $\mu^0 = \tuple{\mu^0_i}_{i \in \set{n}}$.
Then, assuming we have already computed $\{\mu^{k-1}_i(\mathbf e)\mid \mathbf e \in T\}$,  a general step to compute $\mu^k_i(c, w, c^{\prime})$ is as follows:

\begin{itemize}
\item 
First we need to check that $\mu^k_i(c, w,c^{\prime})$ is $-\infty$.
For that, we check, one by one, whether for each of $(c, w^{\prime}, c^{\prime\prime}) \in T$, $\LambdaSet^{\mu^{k-1}}(c^{\prime\prime})$ is not $\emptyset$.
In order to do that,we guess a valid path (but do not store) in $\bbC[\mu^{k-1}, c^{\prime\prime}]$.
If there is such a path in $\bbC[\mu^{k-1}, c^{\prime\prime}]$, there will be one, length of which is bounded by $|C|$, which is doubly-exponential in the size of input.
Hence we keep a counter which we increase by~$1$ every time we correctly guess a new edge along the path, and the counter stops either when we reach a $(c_\tgt, b)$, or at the latest when it crosses $|C|$ marks - this counter requires exponential space to encode.
%
%

\item if the previous check failed,
  we know that $\mu^k_i(c, w, c^{\prime})$ would be in $\bbN \cup \{+\infty\}$.

  We now check if $\mu^{k}_i(c,w,c^{\prime})=+\infty$.
  For that we need to verify the conditions stated in Lemma~\ref{lemma: charac-supremuminifinity}, i.e, for each $c^{\prime\prime} \in \dev_i(c,c^{\prime})$, we guess a valid path $\pi$ of the form $h\cdot \beta\cdot h^{\prime}$ in $\bbC[\mu^{k-1}, c^{\prime\prime}]$, where $\beta$ is a cycle and Player $i$'s counter value is $>0$ inside~$\beta$.

          We first guess the first vertex of~$\beta$, say  $(c_1, b_1)$ (with $b_1(i)>0$), from which the cycle starts, then guess a cycle on $(c_1, b_1)$, keeping only the current edge in memory.
	Now if there is a cycle on $(c_1, b_1)$, there has to be a cycle within $|C|$ length because within a cycle no counter value can change.
	Then we guess a path from $(c^{\prime\prime}, b^{c^{\prime\prime}})$ to $(c_1, b_1)$, and another path from $(c_1, b_1)$ to $(c_\tgt, b)$ (for some~$b$).
	The length of the part up to $(c_1, b_1)$ is bounded by
        $|C'|_r$,
        whilst we can always get a path of length at most $|C|$ (if a path exists) from $(c_1, b_1)$ to~$(c,b)$.
	This discrepancy between two bounds is mainly because for the latter part of the path, we do not exactly fix the final vertex for the latter part ($b$~can be
        any tuple of $n$ nonnegative values),
        we just want the first component to be $c_\tgt$, while for the former part it gets fixed to $(c_1, b_1)$ when we guessed earlier.
	
	If we can guess such a path $\pi$ for each of $c^{\prime\prime} \in \dev_i(c,c^{\prime})$, we return $\mu^k_i(c, w,c^{\prime})=+\infty$.

\item 
  otherwise,
  we have at least $1$ and at most $|V|$ many configurations $c^{\prime\prime} \in \dev_i(c,c^{\prime})$ such that $\sup_{\rho \in \LambdaSet_{\mu^{k-1}}(c^{\prime\prime})} \cost_i(\rho) \in \bbN$.
	We call this set of configurations as $\dev_i(c,c^{\prime})|_{\finsup}$

	
	To compute the finite value of~$\mu^{k}_i(c,w,c^{\prime})$,
        we compute $M_j = \sup_{\rho \in \LambdaSet_{\mu^{k-1}}(c^{\prime\prime}_j)} \cost_i(\rho)$ for each of $c^{\prime\prime}_j \in \dev_i(c, c^{\prime})|_{\finsup}$, and then take the minimum.
        In~order to compute~$M_j$, 
	we~first tentatively set $M_j = \min_{e \in E} \edgecost_e(1)$, and proceed iteratively as follows:
	we guess a valid path~$\pi$ of length at most $|C'|_r$ of $\bbC[\mu^{k-1}, c^{\prime\prime}_j]$ such that $\cost_i(\rho) \geq M_j$, where $\rho$ is the corresponding path of~$\pi$ in~$\calG$.
	If such a path exists, we increase~$M_j$ by~$1$, and repeat.
	At some point, we get $M_j$ such that there doesn't exist a valid path $\pi$ of length at most $|C'|_r$ with Player~$i$'s cost larger than or equal to~$M_j+1$, but there exists a valid path with Player~$i$'s cost larger than or equal to~$M_j$.
	We store that~$M_j$.
	
	When all values~$M_j$ have been computed, we return
        $M = \min_{j \in \set{|\dev_i(c,c^{\prime})|_{\finsup}|}}\{\cost_i(c,c^{\prime\prime}_j) + M_j\}$.
%
	We use at most doubly-exponential space in the procedure of guessing a path, and we reuse that space for guessing next path in the above algorithm.
\end{itemize} 

We keep another binary counter throughout transitioning from $\mu^{k-1}$ to $\mu^k$ to flag whether the fixpoint has been reached.
Once we reach $\lambda^* = \tuple{\lambda^*_i}$, we finally check whether $\LambdaSet_{\lambda^*}(c_\src)$ is empty by guessing a path in $\bbC[\lambda^*, c_\src]$ from $(c_\src, b^{c_\src})$ to $(c_\tgt, b)$.

In conclusion, we use doubly-exponential space:
(1)~to~store $\{\mu^{k-1}_i(\mathbf e)\mid i \in \set{n}, \mathbf e \in T\}$ which is double-exponential in the encoding of the number of players, and
(2)~to~encode a counter which keeps checking whether the length of our guessed paths does not exceed~$|C'|_r$.
\existsSPE*


\end{document}